\spnewtheorem*{theorem*}{Theorem}{\normalfont\bfseries}{\itshape}
\definecolor{webgreen}{rgb}{0,.5,0}
\definecolor{webblue}{rgb}{0,0,.5}
\newcommand{\ket}[1]{\vert{#1}\rangle}
\newcommand{\bra}[1]{\langle{#1}\vert}
\newcommand{\proj}[1]{\ket{#1}\bra{#1}}
\newcommand{\ip}[2]{\bra{#1}#2\rangle}
\newcommand{\abs}[1]{\left|#1\right|}	
\newcommand{\tr}{\text{\textnormal{tr}}}
\newcommand{\Tr}{\text{\textnormal{Tr}}}
\newcommand{\reg}[1]{{\color{gray} #1}}
\newcommand{\F}{\mathcal{F}}
\newcommand{\SP}[2]{(#1,#2)_{\mathsf{Sp}}}
\DeclareMathOperator*{\expectation}{\scalerel*{\mathbb{E}}{\textstyle\sum}}
\DeclarePairedDelimiter{\dnorm}{\|}{\|_{\diamond}}
\DeclarePairedDelimiter{\trnorm}{\|}{\|_{\mathrm{tr}}}
\newcommand{\id}{\mathbb{I}}
\newcommand{\I}{\mathsf{I}}
\renewcommand{\H}{\mathsf{H}}
\newcommand{\X}{\mathsf{X}}
\newcommand{\Y}{\mathsf{Y}}
\newcommand{\Z}{\mathsf{Z}}
\newcommand{\Pauli}{\mathbb{P}}
\renewcommand{\epsilon}{\varepsilon}
\newcommand{\KeyGen}{\mathsf{KeyGen}}
\newcommand{\Key}{\mathcal{K}}
\newcommand{\Encrypt}{\mathsf{Encrypt}}
\newcommand{\Decrypt}{\mathsf{Decrypt}}
\newcommand{\Real}{\mu^{\mathsf{real}}}
\newcommand{\Ideal}{\mu^{\mathsf{ideal}}}
\newcommand{\advA}{\mathcal{A}}
\newcommand{\cacc}{\ensuremath{\mathsf{acc}}}
\newcommand{\crej}{\ensuremath{\mathsf{rej}}}
\newcommand{\simacc}{\mathcal{S}_{\cacc}}
\newcommand{\simrej}{\mathcal{S}_{\crej}}
\newcommand{\simS}{\mathcal{S}}
\newcommand{\negl}{\operatorname{negl}}
\newcommand{\real}{\ensuremath{\mathsf{real}}}
\newcommand{\ideal}{\ensuremath{\mathsf{ideal}}}
\newcommand{\QCA}{\mathsf{QCA}}
\newcommand{\QCAR}{\operatorname{\mathsf{QCA-R}}}
\newcommand{\DNS}{\mathsf{DNS}}
\newcommand{\GYZ}{\mathsf{GYZ}}
\renewcommand\fnum@algorithm{\fname@algorithm~\thealgorithm.}
\newenvironment{construction}[1]{%
    \renewcommand{\ALG@name}{Construction}
   \begin{algorithm}#1%
  }{\end{algorithm}}
\title{Quantum ciphertext authentication and key recycling with the trap code}
\author{Yfke Dulek\inst{1} \and Florian Speelman\inst{2}}
\institute{
Centrum voor Wiskunde en Informatica (CWI), QuSoft, and University of Amsterdam
\and
QMATH, Department of Mathematical Sciences, University of Copenhagen}
\begin{document}

\maketitle

\begin{abstract}
    We investigate quantum authentication schemes constructed from quantum error-correcting codes. We show that if the code has a property called \emph{purity testing}, then the resulting authentication scheme guarantees the integrity of ciphertexts, not just plaintexts. On top of that, if the code is \emph{strong} purity testing, the authentication scheme also allows the encryption key to be recycled, partially even if the authentication rejects. Such a strong notion of authentication is useful in a setting where multiple ciphertexts can be present simultaneously, such as in interactive or delegated quantum computation. With these settings in mind, we give an explicit code (based on the trap code) that is strong purity testing but, contrary to other known strong-purity-testing codes, allows for natural computation on ciphertexts.
\end{abstract}

\section{Introduction}\label{sec:introduction}
A central topic in cryptography is authentication: how can we make sure that a message remains unaltered when we send it over an insecure channel? How do we protect a file from being corrupted when it is stored someplace where adversarial parties can potentially access it? And, especially relevant in the current era of cloud computing, how can we let an untrusted third party compute on such authenticated data?

Following extensive research on authentication of classical data, starting with the seminal work by Wegman and Carter~\cite{WC81}, several schemes have been proposed for authenticating quantum states~\cite{BCGST02,ABOE08,BGS13}. One notable such scheme is the trap code~\cite{BGS13}, an encoding scheme that surrounds the data with dummy qubits that function as \emph{traps}, revealing any unauthorized attempts to alter the plaintext data. A client holding the classical encryption key can guide a third party in performing computations directly on the ciphertext by sending input-independent auxiliary quantum states that help bypass the traps, and updating the classical key during the computation. The result is an authenticated output ciphertext.

The trap code distinguishes itself from other quantum authentication schemes in two ways. First, individually-authenticated input qubits can be entangled during the computation, but still be de-authenticated individually. This contrasts for example the Clifford code~\cite{ABOE08}, where de-authentication needs to happen simultaneously on all qubits that were involved in the computation, including any auxiliary ones. Second, the trap code allows for `authenticated measurements': if a third party measures a ciphertext, the client can verify the authenticity of the result from the classical measurement outcomes only. These two qualities make the trap code uniquely suited for quantum computing on authenticated data. It was originally designed for its use in quantum one-time programs~\cite{BGS13}, but has found further applications in zero-knowledge proofs for QMA~\cite{BJSW16}, and in quantum homomorphic encryption with verification~\cite{ADSS17}.

The extraordinary structure of the trap code is simultaneously its weakness: an adversary can learn information about the secret key by altering the ciphertext in a specific way, and observing whether or not the result is accepted by the client. Thus, to ensure security after de-authentication, the key needs to be refreshed before another quantum state is authenticated. This need for a refresh inhibits the usefulness of the trap code, because computation on multiple qubits under the trap code requires these qubits to be authenticated under overlapping secret keys.

In recent years, several works have refined the original definition of quantum authentication by Barnum et al.~\cite{BCGST02}. The trap code is secure under the weakest of these definitions~\cite{DNS12}, where only authenticity of the plaintext is guaranteed. But, as argued, it is not under the stronger `total authentication'~\cite{GYZ17}, where no information about the key is leaked if the client accepts the authentication. As Portmann mentions in his work on authentication with key recycling in the abstract-cryptography framework~\cite{P17}, it is not even clear whether the trap code can be regarded as a scheme with \emph{partial} key leakage, as defined in~\cite{GYZ17}, because of the adaptive way in which it can be attacked. In a different direction, Alagic, Gagliardoni, and Majenz~\cite{AGM17} define a notion of quantum ciphertext authentication ($\QCA$), where also the integrity of the ciphertext is guaranteed, and not just that of the plaintext. Ciphertext authentication is incomparable with total authentication: neither one implies the other. Before the current work, it was unknown whether the trap code authenticates ciphertexts.

Barnum et al.~\cite{BCGST02} built schemes for authentication of quantum data based on quantum error-correcting codes that are \emph{purity testing}, meaning that any bit or phase flip on the message is detected with high probability. Portmann~\cite{P17}, working in the abstract-cryptography framework, showed that if the underlying code satisfies a stronger requirement called \emph{strong purity testing}, the resulting authentication scheme allows for complete key recycling in the accept case, and for partial key recycling in the reject case. The trap code can be seen as a purity-testing error-correcting code, but it is not strong purity testing. This is consistent with the observation that keys in the trap code cannot be recycled.

Quantum plaintext authentication with key recycling has been studied before. Oppenheim and Horodecki~\cite{OH05} showed partial key recycling for schemes based on purity testing codes, under a weaker notion of security. Hayden, Leung, and Mayers~\cite{HLM16} adapted Barnum et al.'s construction to use less key and show its authenticating properties in the universal-composability framework. Fehr and Salvail~\cite{FS17} develop a quantum authentication scheme for classical messages that achieves the same key-recycling rate as Portmann~\cite{P17}, but is not based on quantum error-correction and only requires the client to prepare and measure.

\subsection{Our contributions}\label{sec:contributions}
We investigate the relation between (strong) purity testing and quantum ciphertext authentication ($\QCA$), and give a variation on the trap code with stronger security guarantees. We specify our contributions in more detail below.

\paragraph{Section~\ref{sec:QCA-R}: Definition of quantum ciphertext authentication with key recycling \emph{($\QCAR$)}.} We give a new definition for quantum authentication, $\QCAR$, that provides both ciphertext authentication and key recycling, and is thereby strictly stronger than existing definitions. See Figure~\ref{fig:contributions} for a comparison of different notions of authentication.

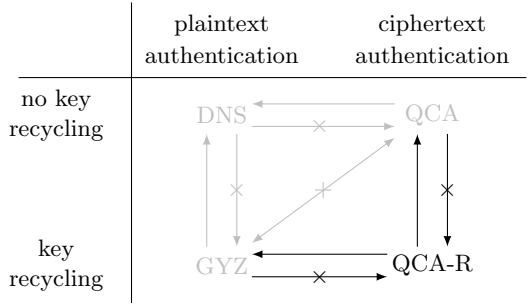
\begin{figure}
    \centering
    \begin{tikzpicture}
    \node at (0.2,0) {\color{lightgray}GYZ};
    \node at (3,0) {QCA-R};
    \node at (0.2,2) {\color{lightgray}DNS};
    \node at (3,2) {\color{lightgray}QCA};
    \draw (-2.5,2.5) -- (4.25,2.5);
    \draw (-1,3.5) -- (-1,-0.5);
    \node at (0.2,3.2) {plaintext};
    \node at (0.2,2.8) {authentication};
    \node at (3,3.2) {ciphertext};
    \node at (3,2.8) {authentication};
    \node at (-2,2.2) {no key};
    \node at (-2,1.8) {recycling};
    \node at (-2,0.2) {key};
    \node at (-2,-0.2) {recycling};
    
    \draw[-latex, lightgray] (0,0.25) -- (0,1.75);
    \draw[-latex, lightgray] (0.4,1.75) -- (0.4,0.25);
    \node at (0.4,1) {\color{lightgray}$\times$};
    
    \draw[-latex] (2.8,0.25) -- (2.8,1.75);
    \draw[-latex] (3.2,1.75) -- (3.2,0.25);
    \node at (3.2,1) {$\times$};
    
    \draw[-latex] (2.4,0.15) -- (0.6,0.15);
    \draw[-latex] (0.6,-0.15) -- (2.4,-0.15);
    \node at (1.5,-0.15) {$\times$};
    
    \draw[-latex, lightgray] (2.5,2.15) -- (0.6,2.15);
    \draw[-latex, lightgray] (0.6,1.85) -- (2.5,1.85);
    \node at (1.5,1.85) {\color{lightgray}$\times$};
    
    \draw[latex-latex, lightgray] (0.6,0.3) -- (2.5,1.7);
    \node at (1.55,1) {\color{lightgray}+};

    \end{tikzpicture}
    \caption{Overview of different definitions of quantum authentication. Three previously defined notions (in gray) and their relations were already known: $\DNS$~\cite{DNS12} is strictly weaker than $\GYZ$~\cite{GYZ17} (total authentication) and $\QCA$~\cite{AGM17}. These last two are incomparable: there exist schemes that satisfy either one, but not the other. On the bottom right, our new definition $\QCAR$ is displayed: it is strictly stronger than both $\GYZ$ and $\QCA$.}
    \label{fig:contributions}
\end{figure}

\paragraph{Section~\ref{sec:SPT-implies-QCA-R}: Purity-testing codes give rise to $\QCA$-secure encryption.} We prove that Barnum et al.'s canonical construction of authentication schemes from purity-testing codes~\cite{BCGST02} produces schemes that are not only plaintext authenticating, but also ciphertext authenticating ($\QCA$). The proof generalizes the proofs in~\cite{BW16} that the trap code and Clifford code are plaintext authenticating, using a different (but still efficient) simulator. Note that our result immediately implies that the trap code is ciphertext authenticating.

\paragraph{Section~\ref{sec:SPT-implies-QCA-R}: Strong-purity-testing codes give rise to $\QCAR$-secure encryption.} Purity-testing codes are generally not sufficient for constructing $\QCAR$ schemes, but strong-purity-testing codes are: we prove that Barnum et al.'s canonical construction achieves $\QCAR$ when a strong-purity-testing code is used as a resource. In case the authenticated message is accepted, the entire key can be reused. Otherwise, all but the quantum-one-time-pad key can be reused.

\paragraph{Section~\ref{sec:strong-trap-code}: A strong-purity-testing version of the trap code.} We give an explicit construction of a strong-purity-testing code that is inspired by the trap code. In this \emph{strong trap code}, the underlying error-correcting code is not only applied to the data qubits, but also to the trap qubits. The result is a quantum authentication scheme which satisfies the strong notion of $\QCAR$, but still maintains the computational properties that make the original trap code such a useful scheme.

\paragraph{Section~\ref{sec:parallel-encryptions}: Security under parallel encryption.} To illustrate the power of recycling key in the reject case, we consider a setting with a different type of key reuse: reusing (part of) a key immediately to authenticate a second qubit, even before the first qubit is verified. We show that, if multiple qubits are simultaneously authenticated using a scheme that is based on a strong-purity-testing code, then de-authenticating some of these qubits does not jeopardize the security of the others, even if their keys overlap. This property is especially important when using the computational capabilities of the strong trap code, since computing on authenticated qubits needs multiple qubits to use overlapping keys.

\section{Preliminaries}\label{sec:preliminaries}

\subsection{Notation}\label{sec:general-preliminaries}
We use conventional notation for unitary matrices ($U$ or $V$), pure states ($\ket\psi$ or $\ket\varphi$), and mixed states ($\rho$ or $\sigma$). We reserve the symbol $\tau$ for the completely mixed state $\id/d$, and $\ket{\Phi^+}$ for the EPR pair $\frac{1}{\sqrt{2}}(\ket{00} + \ket{11})$. The $m$-qubit Pauli group is denoted with $\Pauli_m$, and its elements with $P_{\ell}$ where $\ell$ is a $2m$-bit string indicating the bit-flip and phase-flip positions. By convention, $P_0$ is identity.

We often specify the register(s) on which a unitary acts by gray superscripts (as in $U^{\reg{R}}$); it is implicit that the unitary acts as identity on all other registers. The trace norm of a density matrix $\rho$ is written as $\trnorm{\rho}$. The diamond norm of a channel $\Psi$ is written as $\dnorm{\Psi} := \sup_{\rho}\trnorm{(\id \otimes \Psi)(\rho)}$. If we want to talk about the distance between two channels $\Psi$ and $\Psi'$, we use the normalized quantity $\frac{1}{2}\dnorm{ \Psi - \Psi'}$, which we refer to as the \emph{diamond-norm distance}.

\subsection{Quantum authentication}\label{sec:quantum-authentication}
A secret-key quantum encryption scheme consists of three (efficient) algorithms: key generation $\KeyGen$, encryption $\Encrypt_k$, and decryption $\Decrypt_k$. Throughout this work, we will assume that $\KeyGen$ selects a key $k$ uniformly at random from some set $\Key$; our results still hold if the key is selected according to some other distribution. By Lemma B.9 in~\cite{AM17}, we can characterize the encryption and decryption maps as being of the form
\begin{align}
    \Encrypt_k &: \rho^{\reg{M}} \mapsto U_k^{\reg{MT}} (\rho \otimes \sigma_k^{\reg{T}}) (U_k^{\dag})^{\reg{MT}},\label{eq:encrypt}\\
    \Decrypt_k &: \rho^{\reg{MT}} \mapsto \Tr_T \left[(\Pi^{\cacc}_k)^{\reg{T}} \left(U_k^{\dag} \rho U_k^{\reg{MT}} \right) (\Pi^{\cacc}_k)^{\reg{T}} \right] + D_k^{\reg{MT}}\left[(\Pi^{\crej}_k)^{\reg{T}} \left(U_k^{\dag} \rho U_k^{\reg{MT}} \right) (\Pi^{\crej}_k)^{\reg{T}}\right].\label{eq:decrypt}
\end{align}
Here, $M$ is the message register, $\sigma_k$ is some key-dependent \emph{tag} state in register $T$, and $U_k$ is a unitary acting on both. $\Pi^{\cacc}_k$ and $\Pi_k^{\crej}$ are orthogonal projectors onto the support of $\sigma_k$ and its complement, respectively. Finally, $D_k$ is any channel: we will usually assume that $D_k(\cdot) = \Tr_{MT}(\cdot) \otimes \proj\bot^{\reg{M}}$, i.e., it traces out the message and tag register entirely, and replaces the message with some dummy state that signifies a reject. Because of the above characterization, we will often talk about encryption schemes as a keyed collection $\{(U_k, \sigma_k)\}_{k \in \Key}$ of unitaries and tag states.

There are several definitions of the authentication of quantum data. All definitions involve some parameter $\epsilon$; unless otherwise specified, we require $\epsilon$ to be negligibly small in the size of the ciphertext.

The simplest definition is that of plaintext authentication, presented in~\cite{DNS12} (although their definition was in phrased terms of the trace norm), where no guarantees are given about the recyclability of the key.

\begin{definition}[Quantum plaintext authentication ($\DNS$)~\cite{DNS12}]\label{def:DNS} A quantum encryption scheme $\{(U_k, \sigma_k)\}_{k \in \Key}$ is \emph{plaintext $\epsilon$-authenticating} (or $\epsilon$-$\DNS$) if for all CP maps $\advA$ (acting on the message register $M$, tag register $T$, and a side-information register $R$), there exist CP maps $\simacc$ and $\simrej$ such that $ \simS := \simacc + \simrej$ is trace-preserving, and
\begin{align}
    \frac{1}{2}\dnorm*{\expectation_k\left[\Decrypt_k \circ \advA^{\reg{MTR}} \circ \Encrypt_k\right]^{\reg{MR}} - \Bigl(\id^{\reg{M}} \otimes \simacc^{\reg{R}} + \proj{\bot}^{\reg{M}} (\Tr_M \otimes \ \simrej^{\reg{R}}) \Bigr)} \leq \epsilon,\nonumber
\end{align}
where $\Encrypt_k$ and $\Decrypt_k$ are of the form of equations~\eqref{eq:encrypt} and~\eqref{eq:decrypt}.
\end{definition}

The simulator in Definition~\ref{def:DNS} reflects the ideal functionality of an authentication scheme: in the accept case, the message remains untouched, whereas in the reject case, it is completely discarded and replaced with the fixed state $\proj{\bot}$. Any action on the side-information register $R$ is allowed.

\paragraph{The trap code.} \label{page:trap-code} An example of a plaintext-authenticating scheme is the trap code~\cite{BGS13}. This scheme encrypts single-qubit messages by applying a fixed distance-$d$ CSS code $E$ to the message, producing $n$ physical qubits, and then appending $2n$ ``trap" qubits ($n$ computational-basis traps in the state $\proj0$, and $n$ Hadamard-basis traps in the state $\proj+$). The resulting $3n$ qubits are permuted in a random fashion according to a key $k_1$, and one-time padded with a second key $k_2$. At decryption, the one-time pad and permutation are removed, the traps are measured in their respective bases, and the syndrome of the CSS code is checked.\footnote{We differ from the analysis by  Broadbent and Wainewright~\cite{BW16} in that we consider the variant that uses error detection instead of error correction on the data qubits.} The trap code, for a key $k = (k_1,k_2)$, is characterized by $U_k = P_{k_2} \pi_{k_1}(E \otimes \I^{\otimes n} \otimes \H^{\otimes n})$ and $\sigma_k = \proj{0}^{\otimes (3n-1)}$, where $\pi_{k_1}$ is a unitary that permutes the $3n$ qubits. A proof that the trap code is plaintext $(\sfrac{2}{3})^{\sfrac{d}{2}}$-authenticating can be found in e.g.~\cite{BW16}.
\\ \\ \indent Another definition of quantum authentication is presented in~\cite{GYZ17} (where it is called `total authentication'): in this definition, the key should be recyclable in the accept case. This is modeled by revealing the key to the environment after use, and requiring that it is indistinguishable from a completely fresh and uncorrelated key. If that is the case, it can be recycled for another round.

\begin{definition}[Quantum plaintext authentication with key recycling ($\GYZ$)~\cite{GYZ17}]\label{def:GYZ} A quantum encryption scheme $\{(U_k, \sigma_k)\}_{k \in \Key}$ is \emph{plaintext $\epsilon$-authenticating with key recycling} (or $\epsilon$-$\GYZ$) if for all CP maps $\advA$ (acting on the message register $M$, tag register $T$, and a side-information register $R$), there exist CP maps $\simacc$ and $\simrej$ such that $ \simS := \simacc + \simrej$ is trace preserving, and
\begin{align}
    \frac{1}{2}\left\|\expectation_k\left[\rho^{\reg{MR}} \mapsto \Tr_T\left( \Pi^{\cacc}_k U^{\dag}_k \left(\advA^{\reg{MTR}}\left(U_k(\rho \otimes \sigma_k^{\reg{T}})U^{\dag}_k\right)\right) U_k \Pi^{\cacc}_k \right) \otimes \proj{k}\right] - \left(\id^{\reg{M}} \otimes \simacc^{\reg{R}} \otimes \tau_{\Key} \right)\right\|_{\diamond} \leq \epsilon.\nonumber
\end{align}
\end{definition}
Note that Definition~\ref{def:GYZ} only specifies what should happen in the accept case. Nevertheless, it is a strictly stronger definition than $\DNS$ authentication~\cite{AM17}.

The trap code is not plaintext $\epsilon$-authenticating with key recycling for sub-constant $\epsilon$. To see this, consider an adversary $\advA$ that applies $\X$ to (only) the first qubit of the $MT$ register. With probability $\sfrac{2}{3}$, the attack lands on a data qubit or a $\proj{0}$ trap, and is detected. Thus, in the real accept scenario, the key register will contain a mixture of only those keys that permute a $\proj{+}$ into the first position. All other keys are diminished by the projector $\Pi^{\cacc}_k$. Since the ideal scenario contains a mixture of \emph{all} possible keys in the key register, the difference between the two channels is considerable. In practice, if an adversary learns whether the authentication succeeded, she gets information about the positions of the traps.

\paragraph{The Clifford code.} A simple yet powerful code that authenticates plaintexts with key recycling is the Clifford code~\cite{ABOE08}. In this code, we fix a parameter $t$, and set $\sigma_k = \proj{0^t}$ for all $k$, and $U_k$ a uniformly random Clifford on $t+1$ qubits. The Clifford code (and any authentication code that is based on a 2-design) is plaintext $\epsilon$-authenticating with key recycling for $\epsilon = O(2^{-t})$~\cite{AM17}.
\\ \\ \indent Strengthening Definition~\ref{def:DNS} in a different direction, Alagic, Gagliardoni, and Majenz recently introduced the notion of quantum ciphertext authentication~\cite{AGM17}. This notion does not limit the amount of key leaked, but requires that if authentication accepts, the entire \emph{ciphertext} was completely untouched.

\begin{definition}[Quantum ciphertext authentication ($\QCA$)~\cite{AGM17}]\label{def:QCA} A quantum encryption scheme $\{(U_k, \sigma_k = \sum_r p_{k,r} \proj{\varphi_{k,r}})\}_{k \in \Key}$ is \emph{ciphertext $\epsilon$-authenticating} (or $\epsilon$-$\QCA$) if it is plaintext $\epsilon$-authenticating as in Definition~\ref{def:DNS}, and the accepting simulator $\simacc$ is of the form
\begin{align}
    \simacc: \rho^{\reg{R}} \mapsto \expectation_{k',r} \left[ \bra{\varphi_{k',r}}^{\reg{T}}\bra{\Phi^+}^{\reg{M_1M_2}} U_{k'}^{\dag} \advA^{\reg{M_1TR}}\left(U_{k'}^{\reg{M_1T}}\rho_{k',r}^{\reg{RM_1M_2T}}U_{k'}^{\dag}\right)U_{k'} \ket{\varphi_{k',r}}\ket{\Phi^+}\right].\nonumber
\end{align}
where $\rho_{k',r} := \rho^{\reg{R}} \otimes \proj{\Phi^+}^{\reg{M_1M_2}} \otimes \proj{\varphi_{k',r}}^{\reg{T}}$ is the input state before (simulated) encryption.
\end{definition}
In $\QCA$, the accepting simulator tests whether the message remains completely untouched by encrypting half of an EPR pair (stored in register $M_1$) as a `dummy message', under a key $k'$ that it generates itself. It remembers the randomness $r$ used in creating the tag state $\sigma_k$, so that it can test very accurately whether the tag state was untouched. Because $\simacc$ remembers the randomness, a scheme that appends a qubit at the end of its ciphertexts, but never checks its state at decryption time, cannot be ciphertext authenticating. The Clifford code \emph{is} $\QCA$~\cite{AGM17}, as is the trap code (see Section~\ref{sec:SPT-implies-QCA-R}).

In general, key recycling as in Definition~\ref{def:GYZ} does not imply $\QCA$. To see this, take any scheme $\{(U_k, \sigma_k)\}_{k \in \Key}$ that is plaintext authenticating with key recycling, and alter it by appending a qubit in the fully mixed state to $\sigma_k$ (and extending $U_k$ to act as identity on this qubit). This scheme still satisfies Definition~\ref{def:GYZ}, but cannot be ciphertext authenticating, because attacks on this last qubit are not noticed in the real scenario. Conversely, not all ciphertext-authenticating schemes have key recycling. Take any scheme that is $\QCA$, and alter it by adding one extra bit $b$ of key, and setting $\sigma_{kb} := \sigma_{k} \otimes \proj{b}$ and $U_{kb} := U_k \otimes \I$, effectively appending the bit of key at the end of the ciphertext. This scheme still satisfies Definition~\ref{def:QCA}, but leaks at least one bit of key.\footnote{We thank Gorjan Alagic and Christian Majenz for providing these example schemes that show the separation between Definitions~\ref{def:GYZ} and~\ref{def:QCA}.} For an overview of the relations between $\DNS$, $\GYZ$, and $\QCA$, refer to Figure~\ref{fig:contributions} on page~\pageref{fig:contributions}.

\subsection{(Strong) purity testing in quantum error correction}\label{sec:purity-testing}
An $[[n,m]]$ quantum error-correcting code (QECC), characterized by a unitary operator $V$, encodes a message $\rho$ consisting of $m$ qubits into a codeword $V(\rho \otimes \proj{0^t})V^{\dag}$ consisting of $n$ qubits, by appending $t := n - m$ tags $\proj0$, and applying the unitary $V$. Decoding happens by undoing the unitary $V$, and measuring the tag register in the computational basis. The measurement outcome is called the syndrome: an all-zero syndrome indicates that no error-correction is necessary. In this work, we will only use the error-detection property of QECCs, and will not worry about how to correct the message if a non-zero syndrome is measured. If that happens, we will simply discard the message (i.e., reject).

For any bit string $x \in \{0,1\}^m$, let $\ket{x_L}$ (for ``logical $\ket{x}$") denote a valid encoding of $\ket{x}$, i.e., a state that will decode to $\ket{x}$ without error. A defining feature of any QECC is its distance: the amount of bit and/or phase flips required to turn one valid codeword into another. If we want to be explicit about the distance $d$ of an $[[n,m]]$ code, we will refer to it as an $[[n,m,d]]$ code.

\begin{definition}[Distance]\label{def:distance-QECC}
	The \emph{distance} of an $[[n,m]]$ code is the minimum weight of a Pauli $P$ such that $P\ket{x_L} = \ket{y_L}$ for some $x \neq y$, with $x,y \in \{0,1\}^m$.
\end{definition}

In a cryptographic setting, it can be useful to select a code from a set of codes $\{V_k\}_{k \in \mathcal{K}}$ for some key set $\mathcal{K}$. We will again assume that the key $k$ is selected uniformly at random. 

Following~\cite{BCGST02} and~\cite{P17}, we restrict our attention to codes for which applying a Pauli to a codeword is equivalent to applying a (possibly different) Pauli directly to the message and tag register. In other words, the unitary $V$ must be such that for any  $P_\ell \in \Pauli_{m+t}$, there exists a $P_{\ell'} \in \Pauli_{m+t}$ and a $\theta \in \mathbb{R}$ such that $P_{\ell}V = e^{i\theta}VP_{\ell'}$. With our attention restricted to codes with this property, we can meaningfully define the following property:

\begin{definition}[Purity testing~\cite{BCGST02}]\label{def:purity-testing}
A set of codes $\{V_k\}_{k \in \mathcal{K}}$ is \emph{purity testing} with error $\varepsilon$ if for any Pauli $P_{\ell} \in \Pauli_{m+t} \backslash \{\I^{\otimes (m+t)}\}$,
\begin{align}
    \Pr_k\left[V^{\dag}_k P_{\ell} V_k \in (\Pauli_m \backslash \{\I^{\otimes m}\}) \otimes \{\I,\Z\}^{\otimes t}\right] \leq \varepsilon.\nonumber
\end{align}
\end{definition}
In words, for any non-identity Pauli, the probability (over the key) that the Pauli alters the message but is not detected (i.e., no tag bit is flipped) is upper bounded by $\epsilon$.

The trap code (see page~\pageref{page:trap-code}) based on an $[[n,1,d]]$ CSS code, without the final quantum one-time pad, is a purity-testing code with error $(\sfrac{2}{3})^{\sfrac{d}{2}}$~\cite{BGS13}. In our framework, the trap code is described as a QECC with $m = 1$, $t = 3n-1$, and $V_k = \pi_k(E \otimes \I^{\otimes n} \otimes \H^{\otimes n})$.

Note that purity-testing codes do not necessarily detect \emph{all} Pauli attacks with high probability: it may well be that a Pauli attack remains undetected, because it acts as identity on the message. Flipping the first bit of a trap-code ciphertext is an example of such an attack: it remains undetected with probability $\sfrac{1}{3}$ (if it hits a $\ket+$ trap), but unless it is detected, it also does not alter the message. An attacker may use this fact to learn information about the permutation $\pi_k$ by observing whether or not the QECC detects an error.

The above exploitation of purity-testing codes has led Portmann to consider a stronger notion of purity testing that should allow for keys to be safely reusable. In this definition, even the Paulis that act as identity on the message should be detected:

\begin{definition}[Strong purity testing~\cite{P17}]\label{def:strong-purity-testing}
A set of codes $\{V_k\}_{k \in \mathcal{K}}$ is \emph{strong purity testing} with error $\varepsilon$ if for any Pauli $P_{\ell} \in \Pauli_{m+t} \backslash \{\I^{\otimes (m+t)}\}$,
\begin{align}
    \Pr_k\left[V^{\dag}_k P_{\ell} V_k \in \Pauli_m \otimes \{\I,\Z\}^{\otimes t}\right] \leq \varepsilon.\nonumber
\end{align}
\end{definition}

The Clifford code is strong purity testing with error $2^{-t}$, as is any other unitary 2-design~\cite{P17}. As informally discussed above, the trap code is not strong purity testing for any small $\epsilon$.

Barnum et al.~\cite{BCGST02} described a canonical method of turning a QECC set $\{V_{k_1}\}_{k_1 \in \mathcal{K}_1}$ into a symmetric-key encryption scheme. The encryption key $k$ consists of two parts: the key $k_1 \in \mathcal{K}_1$ for the QECC, and an additional one-time pad key $k_2 \in \{0,1\}^{2(m+t)}$. The encryption map is then defined by setting $U_{k_1,k_2} := P_{k_2}V_{k_1}$, and $\sigma_{k_1,k_2} = \proj{0^t}$. Since $\sigma_{k_1,k_2}$ is key-independent, the projectors $\Pi^{\cacc} = \proj{0^t}$ and $\Pi^{\crej} = \id - \proj{0^t}$ are key-independent as well. In Construction~\ref{con:qecc-to-skqes}, the complete protocol is described. In~\cite{BGS13}, protocols of this form are called ``encode-encrypt schemes".

\begin{construction}
\caption{Barnum et al.'s canonical construction~\cite{BCGST02} of a symmetric-key encryption scheme from an $[[m+t,m]]$ quantum error-correcting code $\{V_{k_1}\}_{k_1 \in \mathcal{K}_1}$.}\label{con:qecc-to-skqes}
    \begin{algorithmic}
    \State \textbf{Key generation:} Sample $k_1 \gets \mathcal{K}_1$. Sample $k_2 \gets \mathcal{K}_2 = \{0,1\}^{2(m+t)}$.
    \State \textbf{Encryption:} $\rho^{\reg{M}} \mapsto P_{k_2}^{\reg{MT}}V_{k_1}^{\reg{MT}}(\rho^M \otimes \proj{0^t}^{\reg{T}})V_{k_1}^{\reg{MT}}P_{k_2}^{\reg{MT}}$.
    \State \textbf{Decryption:} $\rho^{\reg{MT}} \mapsto \bra{0^t} \left(V_{k_1}^{\dag}P_{k_2}^{\dag} \rho P_{k_2}V_{k_1} \right) \ket{0^t} \ \ + \ \  \proj{\bot}^{\reg{M}} \otimes \Tr_{M}\left[\sum\limits_{i \neq 0^t} \bra{i} \left(V_{k_1}^{\dag}P_{k_2}^{\dag} \rho P_{k_2}V_{k_1} \right) \ket{i}\right]$
    \end{algorithmic}
\end{construction}

When using Construction~\ref{con:qecc-to-skqes} with a strong-purity-testing code, plaintext authentication with key recycling is achieved, even with partial key recycling in the reject case~\cite{P17}. If just a purity-testing code is used for the construction, the resulting encryption scheme is plaintext authenticating~\cite{BCGST02}, but not necessarily with key recycling (the trap code is a counterexample).

\section{Quantum ciphertext authentication with key recycling ($\QCAR$)}\label{sec:QCA-R}
In this section, we will define a notion of quantum authentication that is stronger than all of Definitions~\ref{def:DNS},~\ref{def:GYZ}, and~\ref{def:QCA}. We will show that Construction~\ref{con:qecc-to-skqes}, when used with a strong-purity-testing code, results in an authentication scheme in this new, stronger sense.

\begin{definition}[Quantum ciphertext authentication with key recycling ($\QCAR$)]\label{def:QCAR} A quantum encryption scheme $\{(U_k, \sigma_k = \sum_r p_{k,r} \proj{\varphi_{k,r}})\}_{k \in \Key}$ is \emph{ciphertext $\epsilon$-authenticating with key recycling} (or $\epsilon$-$\QCAR$), with key recycling function $f$, if for all CP maps $\advA$ (acting on the message register $M$, tag register $T$, and a side-information register $R$), there exists a CP map $\simrej$ such that
\begin{align}
\mathfrak{R} : \rho^{\reg{MR}} \mapsto \mathbb{E}_k \Big[\Tr_T &\left(\Pi^{acc} \left(U_k^{\dag} \mathcal{A}^{\reg{MTR}}\left(U_k^{\reg{MT}} (\rho \otimes \sigma_k^{\reg{T}}) U_k^{\dag}\right) U_k \right) \Pi^{acc} \right) \otimes \proj{k}\nonumber \\
\ \  + \ \proj{\bot}^{\reg{M}} \otimes \ \Tr_{MT} &\left(\Pi^{rej} \left(U_k^{\dag} \mathcal{A}^{\reg{MTR}}\left(U_k^{\reg{MT}} (\rho \otimes \sigma_k^{\reg{T}}) U_k^{\dag}\right) U_k \right) \Pi^{rej} \right) \otimes \proj{f(k)}\Big]\nonumber
\end{align}
is $\epsilon$-close in diamond-norm distance to the ideal channel,
\begin{align}
\mathfrak{I} : \rho^{\reg{MR}} \mapsto \left(\id^{\reg{M}} \otimes \mathcal{S}^{\cacc}\right)(\rho^{\reg{MR}}) \otimes \tau_{\Key} \ \ \  + \ \ \ \proj{\bot}^{\reg{M}} \otimes \ \mathcal{S}^{\crej}(\rho^{\reg{R}}) \otimes \mathbb{E}_k \left[\proj{f(k)}\right],\nonumber
\end{align}
where $\simS := \simacc + \simrej$ is trace preserving, and $\simacc$ is as in Definition~\ref{def:QCA} of $\QCA$, that is,
\begin{align}
    \simacc: \rho^{\reg{R}} \mapsto \expectation_{k',r} \left[ \bra{\varphi_{k',r}}^{\reg{T}}\bra{\Phi^+}^{\reg{M_1M_2}} U_{k'}^{\dag} \advA^{\reg{M_1TR}}\left(U_{k'}^{\reg{M_1T}}\rho_{k',r}^{\reg{RM_1M_2T}}U_{k'}^{\dag}\right)U_{k'} \ket{\varphi_{k',r}}\ket{\Phi^+}\right]\nonumber
\end{align}
for $\rho_{k',r} := \rho^{\reg{R}} \otimes \proj{\Phi^+}^{\reg{M_1M_2}} \otimes \proj{\varphi_{k',r}}^{\reg{T}}$.
\end{definition}

The first condition (closeness of the real and ideal channel) is a strengthening of Definition~\ref{def:GYZ}: following Portmann~\cite{P17}, we also consider which part of the key can be recycled in the reject case. If the recycling function $f$ is the identity function, all of the key can be recycled. If $f$ maps all keys to the empty string, then no constraints are put on key leakage in the reject case.

$\QCAR$ strengthens both $\GYZ$ and $\QCA$, but not vice versa: the schemes from Section~\ref{sec:quantum-authentication} that separate the two older notions are immediately examples of schemes that are $\GYZ$ or $\QCA$ but cannot be $\QCAR$.

\subsection{Constructing QCA-R from any strong-purity-testing code}\label{sec:SPT-implies-QCA-R}
It was already observed that if a set of quantum error-correcting codes $\{V_{k_1}\}_{k_1 \in \mathcal{K}_1}$ is purity testing, then the encryption scheme resulting from Construction~\ref{con:qecc-to-skqes} is plaintext authenticating~\cite{BCGST02}. We strengthen this result by showing that the construction turns purity-testing codes into \emph{ciphertext}-authenticating schemes (Theorem~\ref{thm:PT-implies-QCA}), and strong-purity-testing codes into $\QCAR$ schemes (Theorem~\ref{thm:SPT-implies-QCA-R}). Only purity testing is in general not enough to achieve $\QCAR$: the trap code is again a counterexample.

\begin{theorem}\label{thm:PT-implies-QCA}
Let $\{V_{k_1}\}_{k_1 \in \mathcal{K}_1}$ be a purity-testing code with error $\epsilon$. The encryption scheme resulting from Construction~\ref{con:qecc-to-skqes} is quantum ciphertext $\epsilon$-authenticating ($\epsilon$-$\QCA$).
\end{theorem}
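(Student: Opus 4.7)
The plan is to use the Pauli-twirl technique to reduce the adversary $\advA$ to a convex combination of discrete Pauli attacks, and then to use the purity-testing guarantee to match the real and ideal channels. First, because the one-time-pad key $k_2$ is uniform over the Pauli group on $MT$, averaging over $k_2$ yields the standard twirl: for any CP map $\advA$ on $MTR$,
\[\mathbb{E}_{k_2}\bigl[P_{k_2}\,\advA(P_{k_2}\cdot P_{k_2})\,P_{k_2}\bigr] \;=\; \sum_\ell (P_\ell^{\reg{MT}}\otimes\id^{\reg{R}})(\id^{\reg{MT}}\otimes\mathcal{B}_\ell^{\reg{R}})(\cdot)(P_\ell^{\reg{MT}}\otimes\id^{\reg{R}}),\]
for some CP maps $\{\mathcal{B}_\ell\}_\ell$ on $R$ (determined by $\advA$'s Kraus operators and independent of the input) with $\sum_\ell\mathcal{B}_\ell$ trace preserving. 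After this twirl the analysis reduces to a discrete Pauli-error model with correlated side-information operations.

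Next, I would push each $P_\ell$ through the decoder using the assumed Pauli covariance $V_{k_1}^\dag P_\ell V_{k_1} = e^{i\theta}\,P_M(k_1,\ell)\otimes P_T(k_1,\ell)$. The tag projector evaluates to $|\bra{0^t}P_T\ket{0^t}|^2=1$ exactly when $P_T\in\{\I,\Z\}^{\otimes t}$, giving
\[\mathcal{R}^{\cacc}(\rho^{\reg{MR}}) \;=\; \mathbb{E}_{k_1}\sum_{\ell:\,P_T(k_1,\ell)\in\{\I,\Z\}^{\otimes t}}(P_M\otimes\id^{\reg{R}})\,(\id^{\reg{M}}\otimes\mathcal{B}_\ell)(\rho^{\reg{MR}})\,(P_M\otimes\id^{\reg{R}}).\]
The identical twirl-and-push calculation applied to the simulator of Definition~\ref{def:QCA}, where the message projector is replaced by $\bra{\Phi^+}\cdot\ket{\Phi^+}$ on the EPR pair, enforces $|\bra{\Phi^+}(P_M\otimes\I)\ket{\Phi^+}|^2 = \Tr(P_M)^2/2^{2m}$, which is $1$ when $P_M=\I$ and $0$ otherwise. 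Hence
\[(\id^{\reg{M}}\otimes\simacc^{\reg{R}})(\rho^{\reg{MR}}) \;=\; \mathbb{E}_{k_1}\sum_{\ell:\,V_{k_1}^\dag P_\ell V_{k_1}\in\{\I\}\otimes\{\I,\Z\}^{\otimes t}}(\id^{\reg{M}}\otimes\mathcal{B}_\ell)(\rho^{\reg{MR}}),\]
and the $\mathcal{B}_\ell$ on each side agree because they are intrinsic to $\advA$, not to the input being fed in.

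The accept difference is therefore exactly the sum over undetected message-altering Paulis, which is precisely the event bounded by $\epsilon$ in Definition~\ref{def:purity-testing}. Defining $\simrej(\rho^{\reg{R}}):=\sum_\ell\mathcal{B}_\ell(\rho^{\reg{R}})-\simacc(\rho^{\reg{R}})$ makes $\simS$ trace preserving, and an analogous computation of the real reject channel (whose output on $R$ depends only on $\rho^{\reg{R}}$, since $D_k$ discards $M$ by assumption) shows that it matches the ideal reject on the genuinely-detected Paulis and differs on exactly the same undetected message-altering set as the accept side. Because the accept and reject contributions live on orthogonal supports in $M$ (due to the $\proj\bot$ marker), their trace-norm contributions to $\mathcal{R}-\mathcal{I}$ simply add, yielding
\[\tfrac{1}{2}\dnorm{\mathcal{R}-\mathcal{I}} \;\leq\; \sup_{\rho^{\reg{R}}}\sum_{\ell\neq 0}\Pr_{k_1}\!\bigl[V_{k_1}^\dag P_\ell V_{k_1}\in(\Pauli_m\setminus\{\I\})\otimes\{\I,\Z\}^{\otimes t}\bigr]\cdot\Tr[\mathcal{B}_\ell(\rho^{\reg{R}})] \;\leq\; \epsilon\]
using purity testing for $\ell\neq 0$ and $\sum_\ell\Tr[\mathcal{B}_\ell(\rho^{\reg{R}})]\leq 1$.

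The main obstacle I anticipate is the careful bookkeeping required to verify that the EPR-pair simulator of Definition~\ref{def:QCA}, which resamples a fresh key $k'$ and encrypts half of an EPR pair in place of the real message, indeed reproduces exactly the $P_M=\I$ sub-sum of the twirled real accept channel. This comes down to identifying the CP maps $\mathcal{B}_\ell$ obtained from the twirl of $\advA$ acting on $M_1TR$ (inside $\simacc$) with those obtained from the twirl of $\advA$ acting on $MTR$ (inside $\mathcal{R}^{\cacc}$), and to carefully handling the ancillary register $M_2$ that holds the partner of the EPR pair; both are routine once one writes out the Kraus form of $\advA$ as $K_j=\sum_\ell P_\ell\otimes K_{j,\ell}^{\reg{R}}$ and observes that the $M_2$ register only ever participates through $\bra{\Phi^+}\cdot\ket{\Phi^+}$.
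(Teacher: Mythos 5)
Your proposal is correct and follows essentially the same approach as the paper's proof in Appendix~A: Pauli-twirl over the one-time-pad key $k_2$ to reduce to a discrete Pauli-error model, push each Pauli through the decoder via covariance, split into accept/reject on orthogonal supports, and bound the differing terms (the undetected message-altering Paulis in $(\Pauli_m\setminus\{\I\})\otimes\{\I,\Z\}^{\otimes t}$) by the purity-testing property. The only cosmetic difference is that the paper dilates $\advA$ to a unitary $U=\sum_\ell \alpha_\ell P_\ell\otimes U_\ell$ and twirls at that level, whereas you perform the twirl at the CP-map level with the side-information maps $\mathcal{B}_\ell$ — this is equivalent bookkeeping, and the bookkeeping you flag as a potential obstacle (matching the $\mathcal{B}_\ell$ between the real and simulated runs and handling $M_2$) is indeed routine, exactly as you anticipate.
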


\begin{proof}[Sketch]
    In order to prove Theorem~\ref{thm:PT-implies-QCA}, we define a simulator that runs the adversary on encrypted halves of EPR pairs, so that the simulator is of the correct form for Definition~\ref{def:QCA}. We prove that the ideal and the real channel are close by considering the accept and the reject cases separately, and by showing that they are both $\sfrac{\epsilon}{2}$-close. First, we decompose the adversarial attack into Paulis by Pauli twirling~\cite{DCEL09} it with the quantum-one-time-pad encryption from Construction~\ref{con:qecc-to-skqes}. In the accept case, the difference between the real and the ideal scenario lies in those attacks that are accepted in the real case, but not in the ideal case. These are exactly those Paulis that, after conjugation with the key $k_1$ that indexes the purity-testing code, are in the set $(\Pauli_m \otimes \{\I,\Z\}^{\otimes t}) \backslash (\{\I^{\otimes m}\} \otimes \{\I,\Z\}^{\otimes t}) = (\Pauli_m\backslash\{\I^{\otimes m}\}) \otimes \{\I,\Z\}^{\otimes t}$. The purity-testing property guarantees that the probability over $k_1$ of a Pauli attack landing in this set is small. The reject case is similar.
    \qed
\end{proof}

A full proof of Theorem~\ref{thm:PT-implies-QCA} is in Appendix~\ref{appendix:proof-of-PT-implies-QCA}. The proof of Theorem~\ref{thm:SPT-implies-QCA-R} below uses the same techniques. It follows the proof structure of~\cite[Theorem~3.5]{P17}, but with a simulator that is suitable for $\QCAR$.  

\begin{theorem}\label{thm:SPT-implies-QCA-R}
Let $\{V_{k_1}\}_{k_1 \in \mathcal{K}_1}$ be a strong-purity-testing code with error $\varepsilon$. The encryption scheme resulting from Construction~\ref{con:qecc-to-skqes} is quantum ciphertext $(\sqrt{\epsilon}+\frac{3}{2}\epsilon)$-authenticating with key recycling \emph{($\epsilon$-$\QCAR$)}, with recycling function $f(k_1,k_2) := k_1$.
\end{theorem}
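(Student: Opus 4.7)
The plan is to follow the structure of the proof of Theorem~\ref{thm:PT-implies-QCA} sketched above (and of~\cite[Theorem~3.5]{P17}), adapting the simulator to the $\QCAR$ setting. The accepting simulator $\simacc$ is fixed by Definition~\ref{def:QCAR}: it is the EPR-pair-based simulator inherited from $\QCA$. I would construct $\simrej$ by first defining a total simulator $\simS_0$ that runs $\advA$ on the same EPR-pair-encoded dummy ciphertext as $\simacc$, but traces out the $\reg{M_1M_2T}$ registers instead of projecting onto $\ket{\Phi^+}\ket{\varphi_{k',r}}$. Setting $\simrej := \simS_0 - \simacc$ yields a completely positive map, because $\simacc$ is exactly the restriction of $\simS_0$ to the accepting rank-one subspace of $\reg{M_1M_2T}$; by construction $\simS := \simacc + \simrej = \simS_0$ is then trace-preserving.

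The main technical step is the Pauli twirl using the one-time-pad key $k_2 \in \{0,1\}^{2(m+t)}$ from Construction~\ref{con:qecc-to-skqes}. Expanding each Kraus operator of $\advA$ in the Pauli basis as $A_j = \sum_\ell a_{j,\ell} P_\ell$, the expectation over $k_2$ converts $\advA$ into a Pauli channel with weights $p_\ell := \sum_j |a_{j,\ell}|^2$, plus off-diagonal cross terms (indexed by pairs $\ell \neq \ell'$) that remain correlated with the $k_2$ register we retain in the output. Having reduced (up to these cross terms) to Pauli attacks, I would case-analyze each $P_\ell$ via the decomposition $V_{k_1}^\dag P_\ell V_{k_1} = e^{i\theta} Q^{\reg{M}}_{k_1,\ell} \otimes Q^{\reg{T}}_{k_1,\ell}$ into Paulis on the message and tag registers. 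For $\ell = 0$, this acts as identity on both registers, matching the ideal output on $\reg{MR}$ and the uniform key register $\tau_{\Key}$ exactly. For $\ell \neq 0$, strong purity testing (Definition~\ref{def:strong-purity-testing}) guarantees that with probability at least $1-\epsilon$ over $k_1$, the tag component $Q^{\reg{T}}_{k_1,\ell}$ carries a bit-flip, so the projector $\proj{0^t}$ rejects.

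Combining these observations: in the accept case, the $P_0$ branch reproduces the ideal behaviour on $\reg{MR}$ tensored with $\tau_{\Key}$, and non-identity Paulis contribute at most $\epsilon$ in total by strong purity testing. In the reject case, since every $\ell \neq 0$ is rejected with probability at least $1-\epsilon$ uniformly in $k_1$, the $k_1$-marginal conditional on reject is $O(\epsilon)$-close to $\tau_{\Key_1}$, and the $\reg{MR}$ state conditional on reject can be identified with $\proj{\bot}^{\reg{M}} \otimes \simrej(\rho^{\reg{R}})$ because both equal $\proj{\bot}^{\reg{M}} \otimes (\simS_0 - \simacc)(\rho^{\reg{R}})$ on the side-information register, by construction of $\simrej$.

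The main obstacle will be controlling the off-diagonal cross terms from the Pauli twirl that survive because we retain $k_2$ in the output register. These terms are traceless in the $k_2$ register but nonzero, so the reduction to a Pauli channel only holds up to an additive error. Bounding their joint contribution to the diamond-norm distance via Cauchy--Schwarz ($|\sum_j a_{j,\ell}\bar{a}_{j,\ell'}| \leq \sqrt{p_\ell p_{\ell'}}$), together with the strong-purity-testing estimate that only an $O(\epsilon)$ fraction of the $p_\ell$-weight survives the accept projection for $\ell \neq 0$, is precisely what produces the $\sqrt{\epsilon}$ term in the final bound $\sqrt{\epsilon} + \tfrac{3}{2}\epsilon$, while the residual diagonal contributions give the linear-in-$\epsilon$ term.
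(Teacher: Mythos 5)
Your simulator construction, the reduction to Pauli attacks via the one-time-pad twirl, and the identification of the accept branch as the source of the $\sqrt{\epsilon}$ term all match the paper. The reject-case sketch is also in the right spirit, although you should say ``rejected with probability at least $1-\epsilon$ over the choice of $k_1$'' rather than ``uniformly in $k_1$'': strong purity testing bounds the acceptance probability averaged over the code key, not for each fixed $k_1$.

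The genuine gap is in the accept-case mechanism. Bounding each cross term by Cauchy--Schwarz, $|\sum_j a_{j,\ell}\bar a_{j,\ell'}| \le \sqrt{p_\ell p_{\ell'}}$, and then summing via sub-additivity of the trace norm does not produce $\sqrt{\epsilon}$. The dominant cross terms are those with $\ell=0$ and $\ell'\neq 0$ accepted, contributing roughly $\sum_{\ell'\neq 0,\ \text{acc}}\sqrt{p_0 p_{\ell'}} \approx \sum_{\ell'\neq 0,\ \text{acc}}\sqrt{p_{\ell'}}$. The SPT constraint gives only $\sum_{\ell'}p_{\ell'}\le\epsilon$; turning that into a bound on $\sum_{\ell'}\sqrt{p_{\ell'}}$ incurs a factor of $\sqrt{N_{\text{acc}}}$ (the number of accepted Paulis, which can be exponential), so the naive sum is not controlled. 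The square root must instead come from a Pythagorean, not a sub-additive, combination of the non-identity contributions. The paper obtains this by purifying $\rho$ and dilating the adversary to a unitary so that the post-projection real state is a (sub-normalized) pure state, applying the inequality $\tfrac12\|\proj\phi-\proj\psi\|_{\tr}\le\|\ket\phi-\ket\psi\|$ (Lemma C.2 of \cite{P17}), and then moving the expectation over $(k_1,k_2)$ inside the square root by Jensen's inequality; the Pauli twirl now acts inside $\|\cdot\|^2 = \langle\cdot,\cdot\rangle$ where the $k_2$-dependent phases cancel exactly, leaving $\sqrt{\expectation_{k_1}\sum_{\ell\neq 0:\ \text{acc}}|\alpha_\ell|^2}\le\sqrt\epsilon$. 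Your Kraus-decomposition formulation does not readily support this pure-state argument, so you would want to dilate $\advA$ to a unitary and purify the input state as the paper does. Without that, the accept-case bound does not go through.
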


\begin{proof}
    Let $\advA$ be an adversary as in Definition~\ref{def:QCAR}. Define a simulator $\simS$ on the side-information register $R$ as follows: prepare an EPR pair $\proj{\Phi^+}$ in the register $M_1M_2$ and encrypt the first qubit using a freshly sampled key $(k_1',k_2') \in \Key := \Key_1 \times \Key_2$ (that is, initialize the tag register $T$ in the state $\proj{0^t}$, and apply $P_{k_2'}V_{k_1'}$ to $M_1T$). Then, run the adversary on the registers $M_1TR$, keeping $M_2$ to the side. Afterwards, run the decryption procedure by undoing the encryption unitary and measuring whether the registers $M_1M_2T$ are still in the state $\proj{\Phi^+,0^t} \ (= \proj{\Phi^+} \otimes \proj{0^t})$. If so, accept, and if not, reject. Note that this simulator is of the required form in the accept case (see Definition~\ref{def:QCAR}).
    
    We show that for this simulator, the distance $\frac{1}{2}\|\mathfrak{I} - \mathfrak{R}\|_{\diamond}$ between the ideal and the real channel is upper bounded by $\sqrt{\varepsilon} + \frac{3}{2}\epsilon$. Let $\rho^{\reg{MRE}}$ be any quantum state on the message register, side-information register, and an environment register $E$. Let $U^{\reg{MTR}}$ be a unitary\footnote{We can assume unitarity without loss of generality: if the adversary's actions are not unitary, we can dilate the channel into a unitary one by adding another environment and tracing it out afterwards. In the proof, the environment takes on the same role as the side-information register $R$, so we omit it for simplicity.} map representing the adversarial channel $\advA$, and let $\Real_{k_1,k_2}$ and $\Ideal_{k_1,k_2}$ be the effective output states in the real and ideal world, respectively:
    \begin{align}
        \Real_{k_1,k_2} &:= V^{\dag}_{k_1} P^{\dag}_{k_2}U^{\reg{MTR}}P_{k_2}^{\reg{MT}}V_{k_1}^{\reg{MT}} (\rho \otimes \proj{0^t}) V^{\dag}_{k_1} P^{\dag}_{k_2}U^{\dag}P_{k_2}V_{k_1},\\
    \Ideal_{k_1,k_2} &:= V^{\dag}_{k_1} P^{\dag}_{k_2}U^{\reg{M_1TR}}P_{k_2}^{\reg{M_1T}}V_{k_1}^{\reg{M_1T}}
        (\rho \otimes \proj{0^t, \Phi^+})
        V^{\dag}_{k_1} P^{\dag}_{k_2}U^{\dag}P_{k_2}V_{k_1}.
    \end{align}
    Then we can write the result of the real and the ideal channels as
    \begin{align}
        \mathfrak{R}(\rho) &= \expectation_{k_1,k_2} \left[\bra{0^t}^{\reg{T}} \Real_{k_1,k_2}\ket{0^t} \otimes \proj{k_1k_2} \ + \  \proj{\bot}^{\reg{M}} \otimes \ \Tr_{M}\left(\sum_{i\neq 0^t} \bra{i}^{\reg{T}} \Real_{k_1,k_2}\ket{i}\right) \otimes \proj{k_1}\right],\\
        \mathfrak{I}(\rho) &= \expectation_{k_1',k_2'} \left[\bra{\Phi^+,0^t}^{\reg{M_1M_2T}} \Ideal_{k_1',k_2'}\ket{\Phi^+0^t} \otimes \tau_{\Key} \ + \  \proj{\bot}^{\reg{M}} \otimes \ \Tr_{M}\left(\sum_{i\neq (\Phi^+,0^t)} \bra{i}^{\reg{M_1M_2T}} \Ideal_{k_1',k_2'}\ket{i}\right) \otimes \tau_{\Key_1}\right].
    \end{align}
    These expressions are obtained simply by plugging in the description of the authentication scheme (see Construction~\ref{con:qecc-to-skqes}) and the simulator into the channels of Definition~\ref{def:QCAR}. Since the accept states are orthogonal to the reject states in the $M$ register, and since the key states are all mutually orthogonal, the distance $\frac{1}{2}\|\mathfrak{I}(\rho) - \mathfrak{R}(\rho)\|_{\tr}$ can be written as
    \begin{align}
        &\expectation_{k_1,k_2} \frac{1}{2} \left\|\expectation_{k_1',k_2'} \left(\bra{\Phi^+,0^t}\Ideal_{k_1',k_2'}\ket{\Phi^+,0^t}\right) - \bra{0^t}\Real_{k_1,k_2}\ket{0^t}\right\|_{\tr} \nonumber\\
        &\quad + \ \ \expectation_{k_1} \frac{1}{2} \left\| \expectation_{k_1',k_2'} \left(\Tr_M \sum_{i \neq (0^t,\Phi^+)} \bra{i}\Ideal_{k_1',k_2'}\ket{i}\right) - \expectation_{k_2} \left(\Tr_M \sum_{i \neq 0^t} \bra{i} \Real_{k_1,k_2} \ket{i}\right)\right\|_{\tr} \ .\label{eq:distance-to-bound}
    \end{align}
    
    For a full derivation, see Appendix~\ref{appendix:proof-QCAR-derivation-distance}. We can thus focus on bounding the two terms in equation~\eqref{eq:distance-to-bound}, for accept and reject, separately. Note the difference between the two terms: in the reject case, the expectation over the one-time pad key $k_2$ does not have to be brought outside of the trace norm, since it is not recycled after a reject. This will make bounding the second term in equation~\eqref{eq:distance-to-bound} the simpler of the two, so we will start with that one.
    
    Decompose the attack as $U^{\reg{MTR}} = \sum_{\ell} \alpha_{\ell} P_{\ell}^{\reg{MT}} \otimes U_{\ell}^{\reg{R}}$. Intuitively, the two states inside the second trace norm differ on those Paulis $P_{\ell}$ that are rejected in the ideal scenario, but not in the real one. The strong-purity-testing property promises that these Paulis are very few. However, we have to be careful, because the simulator independently generates its own set of keys. We will now bound the second term in equation~\eqref{eq:distance-to-bound} more formally. 
    
    By rearranging sums, commuting Paulis, and applying projectors (for details: see Appendix~\ref{appendix:proof-QCAR-derivation-real-reject}), we can rewrite the second term inside the trace norm, the state in the real reject case for $k_1$, as
    \begin{align}
        \expectation_{k_2} \left(\Tr_M \sum_{i \neq 0^t} \bra{i} \Real_{k_1,k_2}\ket{i}\right) = \Tr_M \left( \sum_{\ell \ : \ V_{k_1}^{\dag} P_{\ell} V_{k_1} \not\in \Pauli_{\real}} \abs{\alpha_{\ell}}^2 U_{\ell}^{\reg{R}} \rho^{\reg{MR}} U_{\ell}^{\dag}\right),\label{eq:real-reject-rewrite}
    \end{align}
    where $\Pauli_{\real}$ contains the Paulis that are accepted by the real projector, i.e., $\Pauli_{\real} := \Pauli_m \otimes \{\I,\Z\}^{\otimes t}$. Similarly, defining $\Pauli_{\ideal} := \{\I^{\otimes m}\} \otimes \{\I,\Z\}^{\otimes t}$ to be the set of Paulis that are allowed by the ideal projector, the resulting state in the reject case is
    
    \begin{align}
        \expectation_{k_1',k_2'} \left(\Tr_M \sum_{i \neq (0^t,\Phi^+)} \bra{i}\Ideal_{k_1',k_2'}\ket{i}\right) &= \Tr_M \left( \sum_{\ell \neq 0} \expectation_{\substack{k_1' \in \Key_1\\V_{k_1'}^{\dag} P_{\ell} V_{k_1'} \not\in \Pauli_{\ideal}}} \abs{\alpha_{\ell}}^2 U_{\ell}^{\reg{R}} \rho^{\reg{MR}} U_{\ell}^{\dag}\right)\label{eq:ideal-reject}\\
        &\approx_{\epsilon} \Tr_M \left( \sum_{\ell \neq 0} \expectation_{k_1' \in \Key_1} \abs{\alpha_{\ell}}^2 U_{\ell}^{\reg{R}} \rho^{\reg{MR}} U_{\ell}^{\dag}\right)\label{eq:ideal-reject-close} \ ,
    \end{align}
    where the approximation sign means that the trace distance between the two states is upper bounded by $\epsilon$. The closeness follows from the strong-purity-testing property of the code: the two states differ in those keys $k_1'$ for which $V_{k_1'}^{\dag}P_{\ell}V_{k_1'} \in \Pauli_{\ideal} \subseteq \Pauli_{\real}$, and for any non-identity Pauli $P_{\ell}$, this set is small by strong purity testing. Combined with the facts that $\tr(U_{\ell}\rho U_{\ell}^{\dag}) = 1$ and $\sum_{\ell}\abs{\alpha_{\ell}}^2 = 1$, it follows that the states in equations~\eqref{eq:ideal-reject} and~\eqref{eq:ideal-reject-close} are $\epsilon$-close. Note that none of the terms in equation~\eqref{eq:ideal-reject-close} depends on $k_1'$, so we can remove the expectation over it.
    
    Applying the triangle inequality (twice), the second term in equation~\eqref{eq:distance-to-bound} is found to be small:
    \begin{align}
        &\phantom{\leq\ }\expectation_{k_1} \frac{1}{2} \left\| \expectation_{k_1',k_2'} \left(\Tr_M \sum_{i \neq (0^t,\Phi^+)} \bra{i}\Ideal_{k_1',k_2'}\ket{i}\right) - \expectation_{k_2} \left(\Tr_M \sum_{i \neq 0^t} \bra{i} \Real_{k_1,k_2} \ket{i}\right)\right\|_{\tr}\\
        &\leq
        \frac{\epsilon}{2} + \expectation_{k_1} \frac{1}{2} \left\|\Tr_M\left( \sum_{\ell \ : \ V_{k_1}^{\dag} P_{\ell} V_{k_1} \in \Pauli_{\real} \backslash \{\I^{\otimes (m + t)}\}} \abs{\alpha_{\ell}}^2 U_{\ell} \rho U_{\ell}^{\dag}\right)\right\|_{\tr}\\
        &\leq
        \frac{\epsilon}{2} + \frac{1}{2} \expectation_{k_1} \sum_{\ell \ : \ V_{k_1}^{\dag} P_{\ell} V_{k_1} \in \Pauli_{\real} \backslash \{\I^{\otimes (m + t)}\}} \abs{\alpha_{\ell}}^2,
    \end{align}
    which we can upper bound by $\epsilon$ by applying the strong-purity-testing property once more.
    
    Next, we bound the first term of equation~\eqref{eq:distance-to-bound}: the difference between the ideal and the real channel in the accept case. The strategy is identical to the reject case that we just treated, but because we want to recycle both $k_1$ and $k_2$ in the accept case, we have to be more careful. The state in the real scenario, $\bra{0^t} \Real_{k_1,k_2} \ket{0^t}$, cannot be rewritten into the compact form of, e.g., equation~\eqref{eq:real-reject-rewrite}, because we cannot average over the Pauli key $k_2$. Using a technical lemma from~\cite{P17} and Jensen's inequality in order to take the expectation over the keys inside, we obtain the bound
    \begin{align}
        \expectation_{k_1,k_2} \left\|\expectation_{k_1',k_2'} \left(\bra{\Phi^+,0^t}\Ideal_{k_1',k_2'}\ket{\Phi^+,0^t}\right) - \bra{0^t}\Real_{k_1,k_2}\ket{0^t}\right\|_{\tr} \leq \frac{\epsilon}{2} + \sqrt{\epsilon}.\label{eq:QCAR-bound-accept-case}
    \end{align}
    For a full derivation, see Appendix~\ref{appendix:proof-QCAR-accept}.
    
    We have now upper bounded $\frac{1}{2}\| \mathfrak{I}(\rho) - \mathfrak{R}(\rho)\|_{\tr} \  \leq \sqrt{\epsilon} + \frac{3}{2}\epsilon$ for any state $\rho^{\reg{MRE}}$, resulting in $\frac{1}{2}\|\mathfrak{I} - \mathfrak{R}\|_{\diamond} \leq \sqrt{\epsilon} + \frac{3}{2}\epsilon$, as desired.
    \qed 
\end{proof}

\section{A strong-purity-testing variation on the trap code}\label{sec:strong-trap-code}

Theorem~\ref{thm:SPT-implies-QCA-R} already gives us a quantum-ciphertext-authenticating code with key recycling: the Clifford code. However, the Clifford code is not very well suited for quantum computing on authenticated data. In this section, we therefore present a strong-purity-testing variation on the trap code, the \emph{strong trap code}, which does allow for computation on the ciphertexts in a meaningful and efficient way. By Theorem~\ref{thm:SPT-implies-QCA-R}, this construction immediately gives rise to a ciphertext authentication scheme with key recycling ($\QCAR$). Note that the strong trap code is also secure in Portmann's abstract-cryptography definition of quantum plaintext authentication with key recycling~\cite{P17}.

\subsection{Benign distance and weight sparsity}
The strong trap code requires the existence of a family of quantum error-correcting codes with two specific properties: a high benign distance, and weight sparsity. We specify these properties here.

If a QECC has distance $d$, it is not necessarily able to detect all Pauli errors of weight less than $d$. For example, if one of the qubits in a codeword is in the state $\ket0$, then a Pauli-$\Z$ remains undetected. In general, any Pauli error that stabilizes all codewords will remain undetected by the code. Of course, such an error does not directly cause harm or adds noise to the state, because it effectively performs the identity operation. However, in an adversarial setting, even such `benign' Pauli errors indicate that someone tried to modify the state.

We consider an alternative distance measure for quantum error-correcting codes that describes the lowest possible weight of a stabilizer:
	
\begin{definition}[Benign distance]
	The \emph{benign distance} of an $[[n,m]]$ code is the minimum weight of a non-identity Pauli $P_{\ell}$ such that $P_{\ell}\ket{x_L} = \ket{x_L}$ for all $x \in \{0,1\}^m$. If such $P_{\ell}$ does not exist, the benign distance is $\infty$.
\end{definition}
To distinguish the benign distance from the notion of difference defined in Definition~\ref{def:distance-QECC}, we will often use the term \emph{conventional distance} to refer to the latter.

The benign distance in a fixed relation to the conventional distance. For example, the $[[7,4]]$ Steane code has distance 3, but benign distance 4. On the other hand, the $[[49,1]]$ concatenated Steane code has distance 9, but a benign distance of only 4 (any non-identity stabilizer for the $[[7,4]]$ Steane code is also a stabilizer on the $[[49,1]]$ code if it is concatenated with identity on the other blocks). Even though the two quantities do not bound each other in general, we observe that the benign distance of weakly self-dual \emph{CSS codes} (i.e., CSS codes constructed from a weakly self-dual classical code) grows with their conventional distance. See Lemma~\ref{lem:css-benign-distance} in Appendix~\ref{appendix:high-benign-distance}. 

We define a second property of interest: \emph{weight sparsity}. Intuitively, weight sparsity means that for any set of $\X$-, $\Y$-, and $\Z$-weights, randomly selecting a Pauli operator with those weights only yields a stabilizer with very small probability. This probability should shrink whenever the codeword length grows; for this reason, we consider weight sparsity as a property of code \emph{families} rather than of individual codes.
	
\begin{definition}[Weight-sparse code family]\label{def:weight-sparse}
	Let $(E_i)_{i\in\mathbb{N}}$ be a family of quantum error-correcting codes with parameters $[[n(i), m(i), d(i)]]$. For each $i \in \mathbb{N}$, and for all non-negative integers $x,y,z$ such that $x + y + z \leq n(i)$, let $A_i(x,y,z)$ denote the set of $n(i)$-qubit Paulis with $\X$-weight $x$, $\Y$-weight $y$, and $\Z$-weight $z$. Let $B_i(x,y,z)$ denote set of benign Paulis in $A_i(x,y,z)$.
		
	The family $(E_i)_{i \in \mathbb{N}}$ is \emph{weight-sparse} if the function
	\[
	f(i) := \max_{x + y + z \leq n(i)} \frac{| B_i(x,y,z) |}{ | A_i(x,y,z) | }
	\]
	is negligible\footnote{A function $f(x)$ is negligible in $x$ if for all $c \in \mathbb{N}$, there exists an $x_0$ such that for all $x \geq x_0$, $f(x) < x^{-c}$. This definition is extended by stating that a function $f(x)$ is negligible in another function $g(x)$ if for all $c \in \mathbb{N}$, there exists an $x_0$ such that for all $x \geq x_0$, $f(x) < (g(x))^{-c}$.} in $n(i)$.
\end{definition}

In Appendix~\ref{appendix:construction-of-code-family}, we construct a weight-sparse family of weakly self-dual CSS codes that have benign distance $O(\sqrt{n(i)})$, where $n(i)$ is the codeword length of the $i$th code in the family. The CSS codes are constructed from a punctured version of classical Reed--Muller codes~\cite{Pre97}.

\subsection{The strong trap code}
We present a modified version of the trap code, which we call the \emph{strong trap code}. Contrary to the regular trap code, which appends $2t$ trap qubits, the strong trap code only appends a single $\ket0$ trap and a single $\ket+$ trap. These two traps are subsequently encoded using a quantum error-correcting code that has the desired properties described above, resulting in a ciphertext of the same length as the original trap code.

\begin{definition}[Strong trap code]
    Let $(E_i)_{i \in \mathbb{N}}$ be a weight-sparse family of weakly self-dual CSS codes with parameters $[[n(i),1,d(i) = \Omega(\sqrt{n(i)}]]$ and benign distance $\Omega(\sqrt{n(i)})$. Then the $i$th strong trap code $\{V_{i,k}\}_{k \in \Key_i}$ encodes $m = 1$ qubit using $t = 3n(i) - 1$ tags with the unitaries $V_{i,k} := \pi_k E_i^{\otimes 3} \H_{2n(i)+1}$ (where $\H_{2n(i)+1} = \I^{\otimes 2n(i)} \otimes \H \otimes \I^{\otimes (n(i)-1)}$).
\end{definition}
The strong trap code invokes two layers of security: the CSS codes $E_i$, which detect low-weight attacks, and the traps $\ket0$ and $\ket+$, which detect higher-weight attacks by revealing bit and phase flips, respectively. 

One can verify that computing on quantum states authenticated with the strong trap code works in much the same way as for the original trap code. For details, see~\cite{BGS13}.\footnote{For some applications, authenticating through measurement (cf.~\cite[Appendix B.2] {BGS13}) can be very useful. Our underlying code has all the requirements to achieve this in principle, but in this work we focus on quantum authentication and do not formulate the full security notions needed to properly describe this scenario.}

\begin{theorem}
The strong trap code is a strong-purity-testing code with error $\negl(n(i))$.\label{thm:STC-is-SPT}
\end{theorem}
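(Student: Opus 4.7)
The plan is to show, for any non-identity Pauli $P \in \Pauli_{3n(i)} \setminus \{\I^{\otimes 3n(i)}\}$, that $\Pr_k\!\left[V_{i,k}^{\dag} P V_{i,k} \in \Pauli_1 \otimes \{\I,\Z\}^{\otimes(3n(i)-1)}\right]$ is negligible in $n(i)$. First I would unpack the accept condition blockwise. Writing $V_{i,k} = \pi_k E_i^{\otimes 3} \H_{2n(i)+1}$, the condition on $V_{i,k}^{\dag} P V_{i,k}$ decomposes into three per-block conditions on the $n(i)$-qubit blocks $Q_1, Q_2, Q_3$ of $\pi_k^{\dag} P \pi_k$: block $j$ must lie in an accept set $\mathcal{A}_j$, which is a small union of cosets of the code's stabilizer group $S$. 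Concretely, $\mathcal{A}_1 = S \cup X_L S \cup Y_L S \cup Z_L S$ (the data block tolerates any logical operator), $\mathcal{A}_2 = S \cup Z_L S$ (the $\ket{0_L}$ trap additionally tolerates $Z_L$), and $\mathcal{A}_3 = S \cup X_L S$ (the $\ket{+_L}$ trap tolerates $X_L$).

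I would then split the analysis into two cases based on the total Pauli weight $w$ of $P$. In the low-weight case $w < \min(d(i), d_B(i)) = \Omega(\sqrt{n(i)})$, where $d_B(i)$ is the benign distance: every non-identity element of each $\mathcal{A}_j$ has weight at least $\min(d(i), d_B(i))$, because non-trivial code stabilizers have weight $\geq d_B(i)$ and any logical-operator representative has weight $\geq d(i)$. Since every block has weight strictly below this threshold, any non-identity block Pauli is rejected; as $P$ is non-identity, at least one block has nonzero weight, and the acceptance probability is exactly zero.

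In the high-weight case $w \geq \Omega(\sqrt{n(i)})$ I would invoke the uniformity of the random permutation $\pi_k$. Conditioned on the block type weights $(x^{(j)}, y^{(j)}, z^{(j)})$, the Pauli on each block is uniform over $A_i(x^{(j)}, y^{(j)}, z^{(j)})$ and the three blocks are conditionally independent. The acceptance probability then factors into per-block terms $|\mathcal{A}_j \cap A_i(x^{(j)}, y^{(j)}, z^{(j)})|\,/\,|A_i(x^{(j)}, y^{(j)}, z^{(j)})|$. The stabilizer contribution $|S \cap A_i(\cdot)|/|A_i(\cdot)|$ is negligible directly by the weight-sparsity of the code family. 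For each logical-operator coset $L \cdot S$, I would leverage the Reed--Muller-based structure of the family constructed in Appendix~\ref{appendix:construction-of-code-family} to argue that the extended stabilizer group $\langle S, L \rangle$ also satisfies a weight-sparsity bound, using that $L$ has weight only $\Theta(\sqrt{n(i)}) = o(n(i))$ and thus perturbs weight profiles only mildly. Multiplying the three blockwise negligible bounds then gives the claimed $\negl(n(i))$ overall error.

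The main obstacle is this last step: the weight-sparsity definition in the paper directly bounds only the fraction of \emph{code} stabilizers among Paulis of a fixed weight profile, whereas $\mathcal{A}_2$ and $\mathcal{A}_3$ are stabilizer groups of specific logical codewords, not of the whole code. Handling this cleanly requires either a mild extension of weight-sparsity to the ``logical-translated'' cosets $Z_L S$, $X_L S$, $Y_L S$, or a direct verification, using the concrete code family from the appendix, that the extended stabilizer groups $\langle S, Z_L \rangle$ and $\langle S, X_L \rangle$ inherit the weight-sparsity property.
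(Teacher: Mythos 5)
Your blockwise decomposition is correct: the conjugated Pauli $\pi_k^{\dag} P \pi_k = Q_1 \otimes Q_2 \otimes Q_3$ must land in the per-block accept sets $\mathcal{A}_1 = \langle S, X_L, Z_L\rangle$, $\mathcal{A}_2 = \langle S, Z_L\rangle$, $\mathcal{A}_3 = \langle S, X_L\rangle$ (up to sign), and the conditional-independence factorization given block weight profiles is sound. The low-weight case also works as you argue. However, the obstacle you flag is a genuine gap, and it is worse than you suggest: the ``extended weight sparsity'' you would need for the logical cosets actually \emph{fails} at full block weight. For the Reed--Muller construction the all-ones vector $1^{n(i)}$ lies in $C_{i,1}$ (it is dropped only from $C_{i,2}$ in the proof of Lemma~\ref{lem:RM-code-weight-sparse}), so $\Z^{\otimes n}$ is in $\langle S_Z, Z_L\rangle$, and the ratio $|\mathcal{A}_2 \cap A_i(0,0,n)|/|A_i(0,0,n)| = 1$. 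Consequently a per-block product bound cannot be made uniformly negligible; you would have to argue case-by-case that whenever one trap block accepts, another rejects, which is essentially the paper's argument in disguise.

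The paper's proof sidesteps the logical-coset issue entirely by exploiting the CSS structure. It decomposes the attack as $P_\ell = P_x P_z$ with $P_x$ of $\X$-type and $P_z$ of $\Z$-type, analyzes $P_x$ only on the $\ket{0_L}$-trap block (block $2$) and $P_z$ only on the $\ket{+_L}$-trap block (block $3$), and uses that $\Pr[\text{both undetected}] \leq \min(\Pr[P_x\text{ undetected}],\Pr[P_z\text{ undetected}])$. The crucial observation is that for block $2$ and the $\X$-part, the undetected set is \emph{only} the $\X$-stabilizer group $S_X$: the $\Z$-logical $Z_L$ is $\Z$-type and cannot appear in $P_x$, while an $\X$-logical flips $\ket{0_L}$ to $\ket{1_L}$ and is caught by the trap measurement. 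So on the relevant trap block the accept event reduces exactly to the code's own stabilizer group, which is precisely what weight sparsity (Definition~\ref{def:weight-sparse}) bounds; no extension to cosets is needed, and the problematic weight-$n$ corner case never arises because $1^{n(i)} \notin C_{i,2}$. The remaining event, $P_x$ missing block $2$ entirely, is bounded by the simple combinatorial estimate $\binom{2n}{w_x}/\binom{3n}{w_x} \leq (2/3)^{d(i)}$. Your approach is workable in principle but would require a considerably more delicate case analysis to patch the full-weight failure; the $\X$/$\Z$ split is the cleaner route.
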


\begin{proof}
Consider an arbitrary $i$ and non-identity Pauli $P_{\ell} \in \Pauli_{3n(i)} \backslash \{\I^{\otimes 3n(i)}\}$. Let $w_x$ and $w_z$ denote the $\X$-weight and $\Z$-weight (respectively) of $P_{\ell}$, and note that $\max(w_x,w_z) > 0$.

We bound the probability that $P_{\ell'} := \pi_k^{\dag} P_{\ell} \pi_k$ remains undetected by the code $E_i$ and the traps. Because $E_i$ is a CSS code, it detects $\X$ and $\Z$ errors separately: let us write $P_{\ell'} = P_xP_z$ with $P_x \in \{\I,\X\}^{\otimes 3n(i)}$ and $P_z \in \{\I,\Z\}^{\otimes 3n(i)}$, and focus first on the probability that $P_x$ remains undetected, i.e., the probability that $\H_{2n(i)+1}(E_i^{\dag})^{\otimes 3}P_{x}E_i^{\otimes 3}\H_{2n(i) + 1} \in \Pauli_1 \otimes \{\I,\Z\}^{\otimes 3n(i) - 1}$.

Because of the permutation $\pi_k$, $P_x$ is a random Pauli in $\{\I,\X\}^{\otimes 3n(i)}$ with weight $w_x$. (Note that $P_z$ is also a random Pauli with weight $w_z$, but is correlated with $P_x$: any overlap in the locations of $\X$ and $\Z$ operators in $P_{\ell}$ is preserved by the permutation.)

Consider all possible values of $w_x = w_1 + w_2 + w_3$, where $w_1$ denotes the weight of $P_x$ on the first (data) codeword, $w_2$ the weight on the second ($\ket0$-trap) codeword, and $w_3$ the weight on the third ($\ket+$-trap) codeword:
\begin{itemize}
    \item If $w_x = 0$, then the Pauli $P_x$ is identity, and remains undetected with probability 1.
    \item If $0 < w_x < d(i)$, then $0 < w_j < d(i)$ for at least one $j \in \{1,2,3\}$. $E_i$ detects an error on the $j$th block with certainty, since the weight of the error is below the distance and the benign distance.
    \item If $d(i) \leq w_x \leq 3n(i) - d(i)$, the attack $P_x$ will likely be detected on the second block, the $\ket0$-trap. We can be in one of four cases:
    \begin{itemize}
        \item $w_2 > 0$ and $P_x$ is detected in the second block by the CSS code $E_i$.
        \item $w_2 > 0$ and $P_x$ acts as a logical operation on the second block. Since $P_x$ consists of only $\I$'s and $\X$'s, this logical operation can only be an $\X$ by the construction of CSS codes. In this case, $P_x$ is detected by the projection that checks whether the trap is still in the $\ket0$ state.
        \item $w_2 > 0$ and $P_x$ acts as a stabilizer on the second block, and remains undetected on that block. However, by the weight-sparsity of the code family, the probability that this is the case is negligible in $n(i)$.
        \item $w_2 = 0$. In this case, $P_x$ acts as identity on the second block. The probability that this case occurs, however, is small:
        \begin{align}
            \Pr_k[w_2 = 0] = \frac{\binom{2n(i)}{w_x}}{\binom{3n(i)}{w_x}} < \left(\frac{2}{3}\right)^{w_x} \leq \left(\frac{2}{3}\right)^{d(i)} \ .
        \end{align}
        The first inequality holds in general for binomials, and the second one follows from the fact that $w_x \geq d(i)$. Since $d(i) = \Omega(\sqrt{n(i)})$, this probability is negligible in $n(i)$.
    \end{itemize}
    In total, the probability of the attack remaining undetected for $d(i) \leq w_x \leq 3n(i) - d(i)$ is negligible in $n(i)$.
    \item If $3n(i) - d(i) < w_x < 3n(i)$: as in the second case, there is at least one $j \in \{1,2,3\}$ such that $n(i) - d(i) < w_j < n(i)$, causing the attack to be detected (recall that $\X^{\otimes 3n(i)}$ is a logical $\X$, and therefore this mirrors the $0 < w_x < d(i)$ case).
    \item If $w_x = 3n(i)$, then the logical content of the second block, the $\ket0$-trap, is flipped. This is detected with certainty as well.
\end{itemize}
We see that unless $w_x = 0$, the Pauli $P_x$ remains undetected only with probability negligible in $n(i)$. A similar analysis can be made for $P_z$: it is always detected with high probability, unless $w_z = 0$. We stress that these probabilities are \emph{not} independent. However, we can say that
\begin{align}
    \Pr_k [ P_x \text{ and } P_z \text{ undetected}] \leq \min\left\{\Pr_k[ P_x \text{ undetected}], \ \Pr_k[ P_z \text{ undetected}]\right\},
\end{align}
and since at least one of $w_x$ and $w_z$ is non-zero, this probability is negligible in $n(i)$.
\qed
\end{proof}

\section{Simultaneous encryptions with key reuse}\label{sec:parallel-encryptions}

Earlier work on key reuse for quantum authentication deals explicitly with \emph{key recycling}, the decision to reuse (part of) a key for a new encryption after completing the transmission of some other quantum message. The key is reused only \emph{after} the honest party decides whether to accept or reject the first message, so recycling is a strictly sequential setting.

If Construction~\ref{con:qecc-to-skqes} is instantiated with a strong-purity-testing code (such as the strong trap code), the resulting scheme is able to handle an even stronger, parallel, notion of key reuse. As long as the one-time pads are independent, it is possible to encrypt multiple qubits under the same code key while preserving security. Even if the adversary is allowed to interactively decrypt a portion of the qubits one-by-one, the other qubits will remain authenticated. This property is especially important for the strong trap code: computing on data authenticated with the strong trap code requires all qubits to be encrypted under the same permutation key.

The original trap code is secure in this setting (as long as the one-time pads are fresh; see Section~5.2 of~\cite{BGS13}), but only if all qubits are decrypted at the same time. If some qubits can be decrypted separately, the adversary can deduce the location of the $\ket+$ traps by applying single-qubit $\X$ operations to different ciphertexts at different locations, and observing which ones are rejected. Repeating this for the $\Z$ operator to learn about the $\ket0$ traps, the adversary can completely break the authentication on the remaining qubits.

Suppose we encrypt two messages using an authentication scheme based on a strong-purity-testing code $\{V_{k_0}\}_{\Key_0}$, using the same code key $k_0$ but a fresh one-time pad. If we then decrypt the first message, the scheme is still $\QCAR$-authenticating on the second message with only slightly worse security.

\begin{theorem}[informal]\label{thm:parallel-key-reuse}
Let $(\Encrypt,\Decrypt)$ be an $\epsilon$-$\QCAR$-authenticating scheme resulting from Construction~\ref{con:qecc-to-skqes}, using a strong-purity-testing code $\{V_{k_0}\}_{\Key_0}$. Let $M_1, M_2$ denote the plaintext registers of the two messages, $C_1 = M_1T_1, C_2 = M_2T_2$ the corresponding ciphertext registers, and $R$ a side-information register. Let $\advA_1$, $\advA_2$ be arbitrary adversarial channels. Consider the setting where the adversary acts on the qubits, encrypted with keys $k_0, k_1, k_2$, as 
\[
\Decrypt^{\reg{C_2 \to M_2}}_{k_0,k_2} \circ \advA^{\reg{M_1,C_2,R}}_2 \circ \Decrypt^{\reg{C_1 \to M_1}}_{k_0,k_1} \circ \advA^{\reg{C_1,C_2,R}}_1 \circ \left(\Encrypt^{\reg{M_1 \to C_1}}_{k_0,k_1} \otimes \Encrypt^{\reg{M_2 \to C_2}}_{k_0,k_2} \right)\,,
\]
so that the key $k_0$ is used for both messages. Then, the scheme is  $2\epsilon$-$\QCAR$-authenticating on the second qubit.
\end{theorem}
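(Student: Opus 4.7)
My plan is a two-step hybrid argument, applying Theorem~\ref{thm:SPT-implies-QCA-R} (the $\QCAR$ guarantee) to each of the two encryption-attack-decryption cycles in turn. Each application introduces diamond-norm error at most $\epsilon$, so the triangle inequality yields the overall $2\epsilon$ bound.

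First, I view the initial segment $\Decrypt_{k_0,k_1}^{\reg{C_1 \to M_1}} \circ \advA_1 \circ \Encrypt_{k_0,k_1}^{\reg{M_1 \to C_1}}$ of the protocol as a single $\QCAR$ cycle on $M_1$, with $(C_2, R)$ serving as the combined side-information register. By Theorem~\ref{thm:SPT-implies-QCA-R}, this cycle averaged over $k_0$ and the fresh one-time pad $k_1$ is $\epsilon$-close in diamond norm to its ideal counterpart: $M_1$ is returned either untouched (accept) or replaced by $\ket{\bot}$ (reject), and the recycled key register holds $k_0$ distributed uniformly and independently. A subtle point is that $C_2 = \Encrypt_{k_0,k_2}^{\reg{M_2 \to C_2}}(\rho^{\reg{M_2}})$ is correlated with the very key $k_0$ that is sampled inside the first cycle. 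I would handle this by lifting $k_0$ into an external classical register sampled once before both encryptions, so that the first cycle acts on a (now $k_0$-dependent) input state via the $\QCAR$ channel whose diamond-norm guarantee holds uniformly in the input. The underlying Pauli-twirl used in the proof of Theorem~\ref{thm:SPT-implies-QCA-R} exploits only $k_1$ to decompose $\advA_1$ into a Pauli mixture, and the strong-purity-testing bound then applies regardless of how the side-info depends on $k_0$.

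Next, in the second hybrid the first cycle has been replaced by its ideal $\QCAR$ version. The remainder of the protocol is $\Decrypt_{k_0,k_2}^{\reg{C_2 \to M_2}} \circ \advA_2$, viewed as a $\QCAR$ cycle on $M_2$ with side-information register consisting of $R$, the first cycle's ideal output ($M_1$ or $\ket{\bot}$), and the freshly-recycled $k_0$. Since $\Encrypt_{k_0,k_2}$ and $\Decrypt_{k_0,k_2}$ share the same code key $k_0$ with a fresh one-time pad $k_2$, this is a standard $\QCAR$ cycle whose effective adversary is $\advA_2$ composed with the first cycle's simulator. Applying Theorem~\ref{thm:SPT-implies-QCA-R} a second time gives closeness to the ideal $\QCAR$ channel for $M_2$, up to another $\epsilon$. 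The triangle inequality then yields the desired $2\epsilon$-$\QCAR$-authentication on the second qubit.

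The main obstacle is the first hybrid step: the side-information register $C_2$ is correlated with the code key $k_0$ that is shared between both cycles, whereas Theorem~\ref{thm:SPT-implies-QCA-R} is naturally stated for a channel that samples its key internally from a distribution independent of the input. I expect this to be handled cleanly by a purification argument --- promoting $k_0$ to a coherent classical register sampled once at the start and verifying that the Pauli-twirl / strong-purity-testing analysis still goes through term by term --- with the crucial fact being that the one-time pads $k_1$ and $k_2$ are sampled independently, so that the two Pauli twirls decouple and the two $\QCAR$ applications compose without cross-terms.
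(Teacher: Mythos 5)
Your high-level hybrid strategy is the right one, and you have correctly put your finger on the crucial obstacle: when you apply $\QCAR$ security to the first cycle, the side-information register contains $C_2$, which is encrypted under the same code key $k_0$ that the first cycle samples and recycles. However, the fix you propose---``lifting $k_0$ into an external classical register sampled once at the start''---does not actually resolve this. The $\QCAR$ guarantee of Theorem~\ref{thm:SPT-implies-QCA-R} is a diamond-norm bound $\frac{1}{2}\dnorm{\mathfrak{R} - \mathfrak{I}} \leq \epsilon$, which says that for any \emph{fixed} input state $\rho$ (independent of the key), the averaged-over-keys real and ideal channels produce close outputs. If the input $\rho_{k_0}$ is itself correlated with the code key $k_0$ being averaged over, you would need a bound on $\trnorm{\expectation_{k_0}[\mathfrak{R}_{k_0}(\rho_{k_0})] - \expectation_{k_0}[\mathfrak{I}_{k_0}(\rho_{k_0})]}$, which the diamond norm does \emph{not} give you. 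Your remark that ``the Pauli twirl exploits only $k_1$'' is beside the point: the strong-purity-testing bound in the proof averages over the code key, and the ideal channel in the $\QCAR$ definition uses an independent, freshly-sampled code key $k_1'$---the comparison between real and ideal is only meaningful if the input state is the same and key-independent. Making this correlated-input version rigorous would require re-proving Theorem~\ref{thm:SPT-implies-QCA-R} under weaker assumptions, which you gesture at but do not supply.

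The paper resolves this with a teleportation rewrite, which is the idea missing from your proposal. Instead of encrypting $M_2$ at the start with one-time pad $k_2$, one encrypts $M_2$ under $(k_0, 0)$ and prepares fresh EPR pairs $\proj{\Phi^+}^{\reg{DE}}$; the adversary $\advA_1$ is given the register $D$ (a uniformly mixed, key-independent half of the EPR pairs) in place of $C_2$. The one-time pad key $\hat{k}_2$ is determined \emph{later} by a Bell measurement between $E$ and the encrypted $M_2$. Because $\Encrypt_{k_0,0}$ and the Bell measurement commute past $\advA_1$ (which never touches $M_2$, $E$, or $\hat{\mathcal{K}}_2$), they can be moved entirely after the first cycle's decryption. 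At that point, the first cycle genuinely acts on a $k_0$-independent input, $\QCAR$ security applies cleanly, and $k_0$ is recycled in both the accept and reject branches. Undoing the teleportation rewrite then lets you apply $\QCAR$ on the second qubit, giving the $2\epsilon$ bound. Without some mechanism of this kind to decouple the first cycle's input from $k_0$, your second paragraph as stated is a gap.
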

\begin{proof}[Sketch]
As a first step, we rewrite the encryption of the second qubit as using encoding and teleportation, by using the equivalence between applying a random quantum one-time pad and teleporting a state. The encryption of the second qubit can then be thought of as happening after decryption of the first qubit. Next, we apply $\QCAR$ security of the first qubit, where we are using the property that $k_0$ is recycled both in the accept and the reject case. Finally we undo the rewrite and can directly apply $\QCAR$ security on the remaining state.
\qed
\end{proof}
The full proof can be found in Appendix~\ref{appendix:proof-parallel-key-reuse}. The argument easily extends to any polynomial number of authenticated qubits.

\section{Conclusion}\label{sec:conclusion}
We presented a new security definition, $\QCAR$, for ciphertext authentication with key recycling, and showed that schemes based on purity-testing codes satisfy quantum ciphertext authentication, while strong purity testing implies both ciphertext authentication and key recycling. This is  analogous to the security of quantum plaintext-authentication schemes from purity-testing codes~\cite{BCGST02,P17}. 

Additionally, we constructed the \emph{strong trap code}, a variant of the trap code which is a strong-purity-testing code and therefore is $\QCAR$ secure (as well as secure under all notions of plaintext authentication).
This new scheme can strengthen security and add key-recycling to earlier applications of the trap code. It is also applicable in a wider range of applications than the original trap code, because encrypted qubits remain secure even if other qubits sharing the same key are decrypted earlier.

A potential application of the strong trap code is the design of a quantum CCA2-secure encryption scheme (as in~\cite[Definition~9]{AGM17}) that allows for computation on the encrypted data. By only using the pseudo-random generator for the one-time-pad keys, and recycling the key for the underlying error-correcting code, this security level could be achieved.

As future work, our definition of $\QCAR$ could be generalized in different ways. First, one can consider a variant of the definition in the abstract-cryptography or universal-composability framework, in order to ease the composition with other cryptographic primitives. Second, because it can be useful to authenticate measurements in delegated-computation applications, one could extend the definition of $\QCAR$ to deal with the measurement of authenticated data. We expect no real obstacles for this extension of the definition, and refer to~\cite[Appendix~B.2]{BGS13} for comparable work on the original trap code.

\section*{Acknowledgements}\label{sec:acknowledgements}
We thank Gorjan Alagic, Christian Majenz, and Christian Schaffner for valuable discussions and useful input at various stages of this research. Aditionally, we thank Christian Schaffner for his comments on an earlier version of this manuscript.

Florian Speelman acknowledges financial support from the European Research Council (ERC Grant Agreement no 337603), the Danish Council for
Independent Research (Sapere Aude), Qubiz Quantum Innovation Center, and VILLUM FONDEN via the QMATH Centre of Excellence (Grant No.\ 10059).

\bibliography{SPT}

\newcommand{\etalchar}[1]{$^{#1}$}
\begin{thebibliography}{BCG{\etalchar{+}}02}

\bibitem[ABE08]{ABOE08}
Dorit Aharonov, Michael {Ben-Or}, and Elad Eban.
\newblock Interactive proofs for quantum computations.
\newblock {\em arXiv preprint arXiv:0810.5375}, 2008.

\bibitem[ADSS17]{ADSS17}
Gorjan Alagic, Yfke Dulek, Christian Schaffner, and Florian Speelman.
\newblock Quantum fully homomorphic encryption with verification.
\newblock In {\em Advances in Cryptology -- ASIACRYPT 2017}, pages 438--467,
  Cham, 2017. Springer International Publishing.

\bibitem[AGM17]{AGM17}
Gorjan Alagic, Tommaso Gagliardoni, and Christian Majenz.
\newblock Unforgeable quantum encryption.
\newblock {\em arXiv preprint arXiv:1709.06539}, 2017.

\bibitem[AM17]{AM17}
Gorjan Alagic and Christian Majenz.
\newblock Quantum non-malleability and authentication.
\newblock In {\em Advances in Cryptology -- CRYPTO 2017}, pages 310--341, Cham,
  2017. Springer International Publishing.

\bibitem[BCG{\etalchar{+}}02]{BCGST02}
Howard Barnum, Claude Cr{\'e}peau, Daniel Gottesman, Adam Smith, and Alain
  Tapp.
\newblock Authentication of quantum messages.
\newblock In {\em Foundations of Computer Science, 2002. Proceedings. The 43rd
  Annual IEEE Symposium on}, pages 449--458. IEEE, 2002.

\bibitem[BGS13]{BGS13}
Anne Broadbent, Gus Gutoski, and Douglas Stebila.
\newblock Quantum one-time programs.
\newblock In {\em Advances in Cryptology--CRYPTO 2013}, pages 344--360.
  Springer, 2013.

\bibitem[BJSW16]{BJSW16}
Anne Broadbent, Zhengfeng Ji, Fang Song, and John Watrous.
\newblock Zero-knowledge proof systems for {QMA}.
\newblock In {\em 57th Annual Symposium on Foundations of Computer Science
  (FOCS)}, pages 31--40, Oct 2016.

\bibitem[BW16]{BW16}
Anne Broadbent and Evelyn Wainewright.
\newblock Efficient simulation for quantum message authentication.
\newblock In {\em Information Theoretic Security}, pages 72--91, Cham, 2016.
  Springer International Publishing.

\bibitem[DCEL09]{DCEL09}
Christoph Dankert, Richard Cleve, Joseph Emerson, and Etera Livine.
\newblock Exact and approximate unitary 2-designs and their application to
  fidelity estimation.
\newblock {\em Physical Review A}, 80(1):012304, 2009.

\bibitem[DNS12]{DNS12}
Frédéric Dupuis, Jesper~Buus Nielsen, and Louis Salvail.
\newblock Actively secure two-party evaluation of any quantum operation.
\newblock In {\em Advances in Cryptology -- CRYPTO 2012}, volume 7417, pages
  794--811. Springer International Publishing, 2012.
\newblock Full version on IACR eprint archive: \url{eprint.iacr.org/2012/304}.

\bibitem[FS17]{FS17}
Serge Fehr and Louis Salvail.
\newblock Quantum authentication and encryption with key recycling.
\newblock In {\em Annual International Conference on the Theory and
  Applications of Cryptographic Techniques}, pages 311--338. Springer, 2017.

\bibitem[GYZ17]{GYZ17}
Sumegha Garg, Henry Yuen, and Mark Zhandry.
\newblock New security notions and feasibility results for authentication of
  quantum data.
\newblock In Jonathan Katz and Hovav Shacham, editors, {\em Advances in
  Cryptology -- CRYPTO 2017}, pages 342--371, Cham, 2017. Springer
  International Publishing.

\bibitem[HLM16]{HLM16}
Patrick Hayden, Debbie~W Leung, and Dominic Mayers.
\newblock The universal composable security of quantum message authentication
  with key recyling.
\newblock {\em arXiv preprint arXiv:1610.09434}, 2016.

\bibitem[NC00]{NC00}
Michael~A Nielsen and Isaac~L Chuang.
\newblock {\em Quantum computation and quantum information}.
\newblock Cambridge University Press, 2000.

\bibitem[OH05]{OH05}
Jonathan Oppenheim and Micha\l{} Horodecki.
\newblock How to reuse a one-time pad and other notes on authentication,
  encryption, and protection of quantum information.
\newblock {\em Phys. Rev. A}, 72:042309, Oct 2005.

\bibitem[Por17]{P17}
Christopher Portmann.
\newblock Quantum authentication with key recycling.
\newblock In {\em Annual International Conference on the Theory and
  Applications of Cryptographic Techniques}, pages 339--368. Springer, 2017.

\bibitem[Pre97]{Pre97}
John Preskill.
\newblock Quantum computation, 1997.
\newblock URL:
  \url{http://www.theory.caltech.edu/people/preskill/ph229/index.html}.

\bibitem[RCW75]{RW75}
Dijen~K Ray-Chaudhuri and Richard~M Wilson.
\newblock On $t$-designs.
\newblock {\em Osaka Journal of Mathematics}, 12(3):737--744, 1975.

\bibitem[Sho96]{Shor96}
Peter~W Shor.
\newblock Fault-tolerant quantum computation.
\newblock In {\em Foundations of Computer Science, 1996. Proceedings., 37th
  Annual Symposium on}, pages 56--65. IEEE, 1996.

\bibitem[WC81]{WC81}
Mark~N. Wegman and J.~Lawrence Carter.
\newblock New hash functions and their use in authentication and set equality.
\newblock {\em Journal of Computer and System Sciences}, 22(3):265 -- 279,
  1981.

\end{thebibliography}

\appendix

\section{Proof of Theorem~\ref{thm:PT-implies-QCA}}\label{appendix:proof-of-PT-implies-QCA}
In this appendix, we work out the proof of Theorem~\ref{thm:PT-implies-QCA}. It follows the same strategy as the proof of Theorem~\ref{thm:SPT-implies-QCA-R} in Section~\ref{sec:QCA-R}, but the expressions generally take on a nicer form.

\begin{proof}
        Let $\advA$ be an adversary as in Definition~\ref{def:QCA}. Define a simulator $\simS$ in the same way as in the proof of Theorem~\ref{thm:SPT-implies-QCA-R}. Note that this simulator is of the required form in the accept case (see Definition~\ref{def:QCA}).
    
    We show that for this simulator, the distance $\frac{1}{2}\|\mathfrak{I} - \mathfrak{R}\|_{\diamond}$ between the ideal and the real channel is upper bounded by $\varepsilon$. Let $\rho^{\reg{MRE}}$ be any quantum state on the message register, side-information register, and an environment register $E$. Assume, as in the proof of Theorem~\ref{thm:SPT-implies-QCA-R}, that $\advA$ is a unitary map $U^{\reg{MTR}}$. Let $\Real_{k_1,k_2}$ and $\Ideal_{k_1,k_2}$ be the effective attacks in the real and ideal world, respectively:
    \begin{align}
        \Real_{k_1,k_2} &:= V^{\dag}_{k_1} P^{\dag}_{k_2}U^{\reg{MTR}}P_{k_2}^{\reg{MT}}V_{k_1}^{\reg{MT}} (\rho \otimes \proj{0^t}) V^{\dag}_{k_1} P^{\dag}_{k_2}U^{\dag}P_{k_2}V_{k_1},\\
    \Ideal_{k_1,k_2} &:= V^{\dag}_{k_1} P^{\dag}_{k_2}U^{\reg{M_1TR}}P_{k_2}^{\reg{M_1T}}V_{k_1}^{\reg{M_1T}}
        (\rho \otimes \proj{0^t, \Phi^+})
        V^{\dag}_{k_1} P^{\dag}_{k_2}U^{\dag}P_{k_2}V_{k_1}.
    \end{align}
    Then we can write the result of the real and the ideal channels as
    \begin{align}
        \mathfrak{R}(\rho) &= \expectation_{k_1,k_2} \left[\bra{0^t}^{\reg{T}} \Real_{k_1,k_2}\ket{0^t} \ + \  \proj{\bot}^{\reg{M}} \otimes \ \Tr_{M}\left(\sum_{i\neq 0^t} \bra{i}^{\reg{T}} \Real_{k_1,k_2}\ket{i}\right)\right],\\
        \mathfrak{I}(\rho) &= \expectation_{k_1',k_2'} \left[\bra{\Phi^+,0^t}^{\reg{M_1M_2T}} \Ideal_{k_1',k_2'}\ket{\Phi^+0^t} \ + \  \proj{\bot}^{\reg{M}} \otimes \ \Tr_{M}\left(\sum_{i\neq (\Phi^+,0^t)} \bra{i}^{\reg{M_1M_2T}} \Ideal_{k_1',k_2'}\ket{i}\right)\right].
    \end{align}
    
    These expressions are obtained simply by plugging in the description of the authentication scheme (see Construction~\ref{con:qecc-to-skqes}) and the simulator into the channels of Definition~\ref{def:QCA}. Since the accept states are orthogonal to the reject states in the $M$ register, the distance $\frac{1}{2}\|\mathfrak{I}(\rho) - \mathfrak{R}(\rho)\|_{\tr}$ can be written as
    \begin{align}
        &\frac{1}{2}\left\|\expectation_{k_1',k_2'} \left(\bra{\Phi^+,0^t}\Ideal_{k_1',k_2'}\ket{\Phi^+,0^t}\right) - \expectation_{k_1,k_2} \bra{0^t}\Real_{k_1,k_2}\ket{0^t}\right\|_{\tr} \nonumber\\
        &\quad + \ \ \frac{1}{2}\left\| \expectation_{k_1',k_2'} \left(\Tr_M \sum_{i \neq (0^t,\Phi^+)} \bra{i}\Ideal_{k_1',k_2'}\ket{i}\right) - \expectation_{k_1,k_2} \left(\Tr_M \sum_{i \neq 0^t} \bra{i} \Real_{k_1,k_2} \ket{i}\right)\right\|_{\tr} \ .\label{eq:distance-to-bound-QCA}
    \end{align}
    
    We can thus focus on bounding the two terms in equation~\eqref{eq:distance-to-bound-QCA}, for accept and reject, separately. Intuitively, the two states inside the first trace norm in equation~\eqref{eq:distance-to-bound-QCA} differ on those Paulis $P_{\ell}$ that are accepted in the real scenario, but not in the ideal one. The strong-purity-testing property promises that these Paulis are very few. We will work out this case; the second (the reject case) is similar.
    
    Decompose the attack as $U^{\reg{MTR}} = \sum_{\ell} \alpha_{\ell} P_{\ell}^{\reg{MT}} \otimes U_{\ell}^{\reg{R}}$. Rewrite the real accept case as
    \begin{align}
        &\phantom{= \ } \expectation_{k_1,k_2} \bra{0^t}\Real_{k_1,k_2}\ket{0^t}\\
        &= \expectation_{k_1,k_2} \bra{0^t} V^{\dag}_{k_1} P^{\dag}_{k_2}\left(\sum_{\ell} \alpha_{\ell} P_{\ell}^{\reg{MT}} \otimes U_{\ell}^{\reg{R}}\right)P_{k_2}^{\reg{MT}}V_{k_1}^{\reg{MT}} (\rho \otimes \proj{0^t}) V^{\dag}_{k_1} P^{\dag}_{k_2}\left(\sum_{\ell'} \alpha_{\ell'}^* P_{\ell'} \otimes U_{\ell'}^{\dag}\right)P_{k_2}V_{k_1}\ket{0^t}\\
        &= \expectation_{k_1,k_2} \sum_{\ell,\ell'} \alpha_{\ell}\alpha_{\ell'}^* \bra{0^t} \left(V^{\dag}_{k_1} P^{\dag}_{k_2}P_{\ell} P_{k_2}V_{k_1} \otimes U_{\ell}^{\reg{R}}\right) (\rho \otimes \proj{0^t}) \left(V^{\dag}_{k_1} P^{\dag}_{k_2} P_{\ell'}P_{k_2}V_{k_1} \otimes U_{\ell'}^{\dag}\right)\ket{0^t}.
    \end{align}
    By the Pauli twirl~\cite{DCEL09} (or by commuting and resolving the symplectic product, as is done in Appendix~\ref{appendix:proof-QCAR-derivation-real-reject}), this last line equals
    \begin{align}
        &\phantom{= \ } \expectation_{k_1} \sum_{\ell} \abs{\alpha_{\ell}}^2 \bra{0^t} \left(V^{\dag}_{k_1} P_{\ell} V_{k_1} \otimes U_{\ell}^{\reg{R}}\right) (\rho \otimes \proj{0^t}) \left(V^{\dag}_{k_1} P_{\ell}V_{k_1} \otimes U_{\ell}^{\dag}\right)\ket{0^t}\\
        &= \expectation_{k_1} \sum_{\ell : V_{k_1}^{\dag}P_{\ell}V_{k_1} \in \Pauli_{\real}} \abs{\alpha_{\ell}}^2 \left(Q_{k_1,\ell}^{\reg{M}} \otimes U_{\ell}^{\reg{R}}\right) \rho \left(Q_{k_1,\ell} \otimes U_{\ell}^{\dag}\right),
    \end{align}
    where $Q_{k_1,\ell}$ is the effective Pauli on the message register, induced by $V_{k_1}^{\dag}P_{\ell}V_{k_1}$, and where $\Pauli_{\real} := \Pauli_m \otimes \{\I,\Z\}^{\otimes t}$ is the set of (effective) Paulis that are undetected in the real scenario.
    
    With the same techniques, we can rewrite the ideal accept case as
    \begin{align}
        \expectation_{k_1',k_2'} \bra{\Phi^+,0^t} \Ideal_{k_1',k_2'} \ket{\Phi^+, 0^t}\ \  =\ \  \expectation_{k_1'} \sum_{\ell : V_{k_1'}^{\dag} P_{\ell} V_{k_1} \in \Pauli_{\ideal}} \abs{\alpha_{\ell}}^2 (Q_{k_1,\ell}^M \otimes U_{\ell}^{\reg{R}}) \rho (Q_{k_1,\ell}^M \otimes U_{\ell}^{\reg{R}}),
    \end{align}
    where $\Pauli_{\ideal} := \{\I^{\otimes m}\} \otimes \{\I,\Z\}^{\otimes t}$ is the set of Paulis that are undetected in the real scenario. Note that for $k_1$ and $\ell$ such that $V_{k_1}^{\dag}P_{\ell}V_{k_1} \in \Pauli_{\ideal}$, $Q_{\ell,k_1} = \I^{\otimes m}$.
    
    The distance between the ideal and the real accept states is thus
    \begin{align}
    &\phantom{= \ } \frac{1}{2}\left\|\expectation_{k_1',k_2'} \left(\bra{\Phi^+,0^t}\Ideal_{k_1',k_2'}\ket{\Phi^+,0^t}\right) - \expectation_{k_1,k_2} \bra{0^t}\Real_{k_1,k_2}\ket{0^t}\right\|_{\tr}\\
    &= \frac{1}{2} \left\| \expectation_{k_1} \sum_{\ell : V_{k_1}^{\dag} P_{\ell} V_{k_1} \in \Pauli_{\real} \backslash \Pauli_{\ideal}} \abs{\alpha_{\ell}}^2 (Q_{k_1,\ell}^M \otimes U_{\ell}^{\reg{R}}) \rho (Q_{k_1,\ell}^M \otimes U_{\ell}^{\reg{R}}) \right\|_{\tr}\\
    &\leq \frac{1}{2} \expectation_{k_1} \sum_{\ell : V_{k_1}^{\dag} P_{\ell} V_{k_1} \in \Pauli_{\real} \backslash \Pauli_{\ideal}} \abs{\alpha_{\ell}}^2,
    \end{align}
    by the triangle inequality for the trace norm. Let 
    $\delta_{(P_{a} \in A)}$ be 1 whenever $P_a \in A$, and 0 otherwise. Note that for all $k_1$,  $V_{k_1}^{\dag} P_{0} V_{k_1}  = \I^{\otimes (m+t)} \not\in \Pauli_{\real} \backslash \Pauli_{\ideal}$. This justifies the continuation of the derivation:
    \begin{align}
    &= \frac{1}{2} \sum_{\ell} \expectation_{k_1} \delta_{(V_{k_1}^{\dag} P_{\ell} V_{k_1} \in \Pauli_{\real} \backslash \Pauli_{\ideal})} \abs{\alpha_{\ell}}^2\\
    &= \frac{1}{2}\sum_{\ell \neq 0} \expectation_{k_1} \delta_{(V_{k_1}^{\dag} P_{\ell} V_{k_1} \in \Pauli_{\real} \backslash \Pauli_{\ideal})} \abs{\alpha_{\ell}}^2\\
    &\leq  \frac{1}{2} \sum_{\ell \neq 0} \expectation_{k_1} \delta_{(V_{k_1}^{\dag} P_{\ell} V_{k_1} \in \Pauli_{\real})} \abs{\alpha_{\ell}}^2\\
    &\leq \frac{1}{2} \sum_{\ell \neq 0} \epsilon \abs{\alpha_{\ell}}^2 \ \ \leq \ \ \frac{\epsilon}{2}.
    \end{align}
    The first inequality is because $\Pauli_{\real} \backslash \Pauli_{\ideal} \subseteq \Pauli_{\real}$, and the second inequality is by strong purity testing of the code $\{V_{k_1}\}_{k_1 \in \Key_1}$. This concludes the proof that in the accept case, the real and the ideal scenarios are $\epsilon$-close (i.e., the first term of equation~\eqref{eq:distance-to-bound-QCA} is upper bounded by $\sfrac{\epsilon}{2}$). The reject case is completely analogous.
    
    Summing the accept and reject case as in equation~\eqref{eq:distance-to-bound-QCA}, we see that $\frac{1}{2}\|\mathfrak{I}(\rho) - \mathfrak{R}(\rho)\|_{\tr} \leq \epsilon$ for all $\rho$. This concludes the proof.
    \qed
\end{proof}

\section{Details for the proof of Theorem~\ref{thm:SPT-implies-QCA-R}}

\subsection{Derivation of equation~\eqref{eq:distance-to-bound}}\label{appendix:proof-QCAR-derivation-distance}
We give details on how to arrive at equation~\eqref{eq:distance-to-bound} given the expressions for $\mathfrak{I}$ and $\mathfrak{R}$ in the proof of Theorem~\ref{thm:SPT-implies-QCA-R}.

\begin{align}
    &\phantom{= \ }\frac{1}{2}\left\|\mathfrak{I}(\rho) - \mathfrak{R}(\rho)\right\|_{\tr}\\
    &= \frac{1}{2}\Bigg\| \expectation_{k_1',k_2'} \left[\bra{\Phi^+,0^t}\Ideal_{k_1',k_2'}\ket{\Phi^+,0^t} \otimes \tau_{\Key} \right] + \expectation_{k_1',k_2'} \left[\proj{\bot} \otimes \ \Tr_M \left(\sum_{i \neq (\Phi^+,0^t)} \bra{i}\Ideal_{k_1',k_2'} \ket{i}\right) \otimes \tau_{\Key_1}\right]\nonumber\\
    &\quad - \expectation_{k_1,k_2} \left[\bra{0^t} \Real_{k_1,k_2} \ket{0^t} \otimes \proj{k_1k_2} \right] - \expectation_{k_1,k_2} \left[\proj{\bot} \otimes \ \Tr_{M} \left(\sum_{i \neq 0^t} \bra{i} \Real_{k_1,k_2} \ket{i}\right) \otimes \proj{k_1}\right]\Bigg\|_{\tr}\\
    &=\frac{1}{2}\left\| \expectation_{k_1',k_2'} \left[\bra{\Phi^+,0^t}\Ideal_{k_1',k_2'}\ket{\Phi^+,0^t} \otimes \tau_{\Key}\right] - \expectation_{k_1,k_2} \left[\bra{0^t} \Real_{k_1,k_2} \ket{0^t} \otimes \proj{k_1k_2} \right] \right\|_{\tr} + \nonumber\\
    &\quad \frac{1}{2}\left\| \expectation_{k_1',k_2'} \left[\Tr_M \left(\sum_{i \neq (\Phi^+,0^t)} \bra{i}\Ideal_{k_1',k_2'} \ket{i}\right) \otimes \tau_{\Key_1}\right] - \expectation_{k_1,k_2} \left[\Tr_{M} \left(\sum_{i \neq 0^t} \bra{i} \Real_{k_1,k_2} \ket{i}\right) \otimes \proj{k_1}\right] \right\|_{\tr},
\end{align}
because $\|\rho + \sigma\|_{\tr} = \|\rho\|_{\tr} + \|\sigma\|_{\tr}$ whenever $\rho$ and $\sigma$ are orthogonal (and the accept and reject states are orthogonal in the $M$ register), and because $\left\| \proj{a} \otimes \rho \right\|_{\tr} = \left\| \rho \right\|_{\tr}$ for basis states $\ket{a}$ (this allows us to get rid of the $\proj{\bot}$).

Using the same techniques, and observing that the $\proj{k_1k_2}$ states (or $\proj{k_1}$ in the case of reject) are all orthogonal to each other, we obtain equation~\eqref{eq:distance-to-bound}.

\subsection{Derivation of equation~\eqref{eq:real-reject-rewrite}}\label{appendix:proof-QCAR-derivation-real-reject}
We give details on how the equality in equation~\eqref{eq:real-reject-rewrite} is derived in the proof of Theorem~\ref{thm:SPT-implies-QCA-R}.
\begin{align}
    &\phantom{= \ }\expectation_{k_2}\left(\Tr_M \sum_{i \neq 0^t} \bra{i} \Real_{k_1,k_2} \ket{i}\right)\\
    &= \expectation_{k_2}\Bigg(\Tr_M \sum_{i \neq 0^t} \bra{i} V^{\dag}_{k_1} P^{\dag}_{k_2}\left(\sum_{\ell} \alpha_{\ell} P_{\ell}^{\reg{MT}} \otimes U_{\ell}^{\reg{R}}\right)P_{k_2}^{\reg{MT}}V_{k_1}^{\reg{MT}} (\rho^{\reg{MRE}} \otimes \proj{0^t}^{\reg{T}}) \nonumber\\
    &\qquad V^{\dag}_{k_1} P^{\dag}_{k_2}\left(\sum_{\ell'} \alpha_{\ell'}^* P_{\ell'} \otimes U_{\ell'}^{\dag}\right)P_{k_2}V_{k_1} \ket{i}\Bigg)\\
    &= \expectation_{k_2}\Bigg(\Tr_M \sum_{i \neq 0^t} \sum_{\ell,\ell'} \alpha_{\ell}\alpha_{\ell'}^* \bra{i} \left(V^{\dag}_{k_1} P^{\dag}_{k_2}P_{\ell}^{\reg{MT}}P_{k_2}^{\reg{MT}}V_{k_1}^{\reg{MT}} \otimes U_{\ell}^{\reg{R}}\right) (\rho^{\reg{MRE}} \otimes \proj{0^t}^{\reg{T}}) \nonumber\\
    &\qquad \left(V^{\dag}_{k_1} P^{\dag}_{k_2} P_{\ell'}P_{k_2}V_{k_1} \otimes U_{\ell'}^{\dag}\right) \ket{i}\Bigg),
\end{align}
where the second equality is obtained by moving summation signs further out and rearranging the unitaries so that they are sorted according to the register they act on. Recall that the $\ell$ and $\ell'$ index an $(m+t)$-qubit Pauli, and thus consist of $2(m+t)$ bits: we can regard any such Pauli-index $a$ as $(x_{a},z_{a})$ where the two parts are $(m+t)$-bit strings describing the locations of the $\X$ and $\Z$ Paulis, respectively. Define the symplectic inner product $\SP{a}{b} := x_a \cdot z_b - x_b \cdot z_a$, which equals $1$ whenever the Paulis $P_a$ and $P_b$ commute, and $-1$ when they anti-commute. Furthermore defining $\oplus$ as the bitwise xor operation, We can then continue the derivation as follows:

\begin{align}
    &= \expectation_{k_2} \Bigg( \Tr_M \sum_{i \neq 0^t} \sum_{\ell,\ell'} \alpha_{\ell}\alpha_{\ell'}^* (-1)^{\SP{\ell \oplus \ell'}{k_2}} \bra{i} \left(V^{\dag}_{k_1} P_{\ell}V_{k_1}^{\reg{MT}} \otimes U_{\ell}^{\reg{R}}\right) (\rho^{\reg{MRE}} \otimes \proj{0^t}^{\reg{T}})  \left(V^{\dag}_{k_1} P_{\ell'}V_{k_1} \otimes U_{\ell'}^{\dag}\right) \ket{i} \Bigg)\\
    &= \Tr_M \sum_{i \neq 0^t} \sum_{\ell} \abs{\alpha_{\ell}}^2 \bra{i} \left(V^{\dag}_{k_1} P_{\ell}V_{k_1}^{\reg{MT}} \otimes U_{\ell}^{\reg{R}}\right) (\rho^{\reg{MRE}} \otimes \proj{0^t}^{\reg{T}})  \left(V^{\dag}_{k_1} P_{\ell}V_{k_1} \otimes U_{\ell}^{\dag}\right) \ket{i}.
\end{align}
The last equality follows by Equality (2) in~\cite{P17}. This permuting of Paulis and the cancellation of terms for $\ell \neq \ell'$ is also called the Pauli Twirl~\cite{DCEL09}.

If we now apply the projector on the tag register $T$, and observe that the projection preserves exactly those $\ell$ such that $V_{k_1}^{\dag}P_{\ell}V_{k_1} \not\in \Pauli_{\real}$ (i.e., those $P_{\ell}$ that are rejected), the right-hand side of equation~\eqref{eq:real-reject-rewrite} is obtained.

\subsection{Derivation of equation~\eqref{eq:QCAR-bound-accept-case}}\label{appendix:proof-QCAR-accept}
We give details on how to obtain equation~\eqref{eq:QCAR-bound-accept-case}, which bounds the difference between the accept and reject case in the real scenario.

For this derivation, it will be useful to purify $\rho^{\reg{MRE}}$ as $\sum_i \beta_i \ket{\psi_i}^{\reg{MREE'}}$, where $E'$ is an extra environment register for the purification. Note that $\ip{\psi_i}{\psi_j} = 0$ for $i \neq j$.

With the same strategy as in Appendix~\ref{appendix:proof-QCAR-derivation-real-reject}, except for the last step, we can rewrite the real state as
\begin{align}
    &\phantom{= \ }\bra{0^t} \Real_{k_1,k_2} \ket{0^t}\\
    &= \bra{0^t} V^{\dag}_{k_1} P^{\dag}_{k_2}\left(\sum_{\ell} \alpha_{\ell} P_{\ell}^{\reg{MT}} \otimes U_{\ell}^{\reg{R}}\right)P_{k_2}^{\reg{MT}}V_{k_1}^{\reg{MT}} \left(\sum_{i,i'}\beta_i\beta_{i'}^* \ket{\psi_i}\bra{\psi_{i'}} \ \otimes \ \proj{0^t}^{\reg{T}}\right) \nonumber\\
    &\qquad V^{\dag}_{k_1} P^{\dag}_{k_2}\left(\sum_{\ell'} \alpha_{\ell'}^* P_{\ell'} \otimes U_{\ell'}^{\dag}\right)P_{k_2}V_{k_1} \ket{0^t}\\
    &= \sum_{i,i'} \sum_{\ell,\ell'} \beta_i \alpha_{\ell}\beta_{i'}^* \alpha_{\ell'}^* \bra{0^t} \left(V^{\dag}_{k_1} P^{\dag}_{k_2}P_{\ell}P_{k_2}V_{k_1}^{\reg{MT}} \otimes U_{\ell}^{\reg{R}}\right) \ket{\psi_i}\ket{0^t}\bra{\psi_{i'}}\bra{0^t} \left(V^{\dag}_{k_1} P^{\dag}_{k_2} P_{\ell'}P_{k_2}V_{k_1} \otimes U_{\ell'}^{\dag}\right) \ket{0^t}\\
    &= \sum_{i,i'} \sum_{\ell,\ell'} \beta_i  \alpha_{\ell}\beta_{i'}^*\alpha_{\ell'}^* (-1)^{\SP{\ell}{k_2}}(-1)^{\SP{\ell'}{k_2}} \nonumber\\
    &\qquad \bra{0^t} \left(V^{\dag}_{k_1} P_{\ell}V_{k_1}^{\reg{MT}} \otimes U_{\ell}^{\reg{R}}\right) \ket{\psi_i}\ket{0^t}\bra{\psi_{i'}}\bra{0^t} \left(V^{\dag}_{k_1} P_{\ell'}V_{k_1} \otimes U_{\ell'}^{\dag}\right) \ket{0^t},
\end{align}
which, as a real state, can be expressed as
\begin{align}
    &\phantom{= \ }\sum_i \sum_{\ell} \beta_i \alpha_{\ell} (-1)^{\SP{\ell}{k_2}} \bra{0^t} \left(V^{\dag}_{k_1} P_{\ell}V_{k_1}^{\reg{MT}} \otimes U_{\ell}^{\reg{R}}\right) \ket{\psi_i}\ket{0^t}\\
    &= \sum_{\ell : V_{k_1}^{\dag}P_{\ell}V_{k_1} \in \Pauli_{\real}}  \alpha_{\ell} (-1)^{\SP{\ell}{k_2}} (Q_{k_1,\ell}^{\reg{M}} \otimes U_{\ell}^{\reg{R}}) \sum_i \beta_i \ket{\psi_i},
\end{align}
where $Q_{k_1,\ell}$ is the Pauli on the message register $M$ that results from $V_{k_1}^{\dag}P_{\ell}V_{k_1} \in \Pauli_{\real} = \Pauli_m \otimes \{\I,\Z\}^{\otimes t}$.
In much the same way, but using Equality (2) in~\cite{P17} as we did in Appendix~\ref{appendix:proof-QCAR-derivation-real-reject}, we can write the accept state in the ideal scenario,
\begin{align}
    \expectation_{k_1',k_2'} \left(\bra{\Phi^+,0^t}\Ideal_{k_1',k_2'}\ket{\Phi^+,0^t}\right),
\end{align}
as the pure state
\begin{align}
    &\phantom{= \ }\expectation_{k_1'} \left(\sum_{\ell: V_{k_1'}^{\dag}P_{\ell}V_{k_1'} \in \Pauli_{\ideal}} \alpha_{\ell} U_{\ell}^{\reg{R}} \sum_i \beta_i \ket{\psi_i}\right)\\
    &= \sum_{\ell} \expectation_{k_1'} \delta_{(V_{k_1'}^{\dag} P_{\ell} V_{k_1'} \in \Pauli_{\ideal})} \alpha_{\ell}U_{\ell}^{\reg{R}} \sum_i \beta_i \ket{\psi_i},
\end{align}
where $\delta_{(V_{k_1'}^{\dag} P_{\ell} V_{k_1'} \in \Pauli_{\ideal})}$ is the indicator function that is equal to 1 whenever $V_{k_1'}^{\dag} P_{\ell} V_{k_1'} \in \Pauli_{\ideal}$, and 0 otherwise. Continuing, we rewrite the ideal accept state as
\begin{align}
    &= \alpha_0 U_{0}^{\reg{R}} \sum_i \beta_i \ket{\psi_i} \ \ + \ \ \sum_{\ell \neq 0} \expectation_{k_1'} \delta_{(V_{k_1'}^{\dag} P_{\ell} V_{k_1'} \in \Pauli_{\ideal})} \alpha_{\ell}U_{\ell}^{\reg{R}} \sum_i \beta_i \ket{\psi_i}.\label{eq:appendix-accept-ideal}
\end{align}
(Recall that $P_0 = \I^{\otimes (m+t)}$ by convention.) By the strong-purity-testing property of the code $\{V_{k_1}\}_{k_1 \in \Key_1}$, and the fact that $\Pauli_{\ideal} \subseteq \Pauli_{\real}$, the second term in equation~\eqref{eq:appendix-accept-ideal} has very small amplitude. Thus, the ideal accept state is $\epsilon$-close in trace distance to $\alpha_0U_0^{\reg{R}} \sum_i \beta_i \ket{\psi_i}$.

We are now ready to bound the expected distance between the ideal and the real case:
\begin{align}
    &\phantom{= \ }\expectation_{k_1,k_2} \frac{1}{2} \left\|\expectation_{k_1',k_2'} \left(\bra{\Phi^+,0^t}\Ideal_{k_1',k_2'}\ket{\Phi^+,0^t}\right) - \bra{0^t}\Real_{k_1,k_2} \ket{0^t}\right\|_{\tr}\\
    &\leq \frac{\epsilon}{2} + \expectation_{k_1,k_2} \frac{1}{2} \Bigg\|\abs{\alpha_0}^2 \sum_{i,i'} \beta_i\beta_{i'} U_0\ket{\psi_i}\bra{\psi_{i'}}U_0^{\dag}  \ \ - \left(\sum_{\ell:V_{k_1}^{\dag}P_{\ell}V_{k_1} \in \Pauli_{\real}} \alpha_{\ell} (-1)^{\SP{\ell}{k_2}} (Q_{k_1,\ell} \otimes U_{\ell}) \sum_i \beta_i \ket{\psi_i}\right)\nonumber\\
    &\quad \left(\sum_{\ell':V_{k_1}^{\dag}P_{\ell'}V_{k_1} \in \Pauli_{\real}} \alpha_{\ell'} (-1)^{\SP{\ell'}{k_2}} \sum_i \beta_i \bra{\psi_i}(Q_{k_1,\ell'} \otimes U_{\ell'}^{\dag}) \right)\Bigg\|_{\tr}\\
    &\leq \frac{\epsilon}{2} + \expectation_{k_1,k_2} \left\| \alpha_0U_0 \sum_i\beta_i \ket{\psi_i} - \left(\sum_{\ell:V_{k_1}^{\dag}P_{\ell}V_{k_1} \in \Pauli_{\real}} \alpha_{\ell} (-1)^{\SP{\ell}{k_2}} (Q_{k_1,\ell} \otimes U_{\ell}) \sum_i\beta_i \ket{\psi_i}\right)\right\|.
\end{align}
The first inequality is the triangle inequality, and the second one follows from Lemma C.2 in~\cite{P17}, which states that $\frac{1}{2}\|\proj{\phi} - \proj{\psi}\|_{\tr} \ \leq \|\ket{\phi} - \ket{\psi}\|$ (where the latter is the vector 2-norm, i.e., $\|\ket{\phi}\| := \sqrt{\ip{\phi}{\phi}}$).

Continuing,
\begin{align}
    &= \frac{\epsilon}{2} + \expectation_{k_1,k_2} \left\| \sum_i\beta_i \sum_{\ell \neq 0:V_{k_1}^{\dag}P_{\ell}V_{k_1} \in \Pauli_{\real}} \alpha_{\ell} (-1)^{\SP{\ell}{k_2}} (Q_{k_1,\ell} \otimes U_{\ell}) \ket{\psi_i}\right\|\\
    &= \frac{\epsilon}{2} + \expectation_{k_1,k_2} \sqrt{\sum_{i,i'} \beta_i^*\beta_{i'}\sum_{\ell \neq 0:V_{k_1}^{\dag}P_{\ell}V_{k_1} \in \Pauli_{\real}} \sum_{\ell' \neq 0:V_{k_1}^{\dag}P_{\ell'}V_{k_1} \in \Pauli_{\real}} \alpha_{\ell}^*\alpha_{\ell'} (-1)^{\SP{\ell \oplus \ell'}{k_2}} \bra{\psi_i} (Q_{k_1,\ell} \otimes U_{\ell}^{\dag})(Q_{k_1,\ell'} \otimes U_{\ell}) \ket{\psi_{i'}}}\\
    &\leq \frac{\epsilon}{2} + \sqrt{\expectation_{k_1,k_2} \sum_{i,i'} \beta_i^*\beta_{i'} \sum_{\ell \neq 0:V_{k_1}^{\dag}P_{\ell}V_{k_1} \in \Pauli_{\real}} \sum_{\ell' \neq 0:V_{k_1}^{\dag}P_{\ell'}V_{k_1} \in \Pauli_{\real}} \alpha_{\ell}^*\alpha_{\ell'} (-1)^{\SP{\ell \oplus \ell'}{k_2}} \bra{\psi_i} (Q_{k_1,\ell} \otimes U_{\ell}^{\dag})(Q_{k_1,\ell'} \otimes U_{\ell}) \ket{\psi_{i'}}}\\
    &= \frac{\epsilon}{2} + \sqrt{\expectation_{k_1} \sum_{i,i'} \beta_i^*\beta_{i'} \sum_{\ell \neq 0:V_{k_1}^{\dag}P_{\ell}V_{k_1} \in \Pauli_{\real}} \abs{\alpha_{\ell}}^2 \bra{\psi_i} (Q_{k_1,\ell} \otimes U_{\ell}^{\dag})(Q_{k_1,\ell} \otimes U_{\ell}) \ket{\psi_{i'}}},
\end{align}
by Jensen's inequality and by Equation (2) in~\cite{P17}. This last line we can simplify greatly to

\begin{align}
    &= \frac{\epsilon}{2} + \sqrt{\expectation_{k_1} \sum_i\abs{\beta_i}^2 \sum_{\ell \neq 0:V_{k_1}^{\dag}P_{\ell}V_{k_1} \in \Pauli_{\real}} \abs{\alpha_{\ell}}^2 \ip{\psi_i}{\psi_i} }\\
    &= \frac{\epsilon}{2} + \sqrt{\expectation_{k_1} \sum_{\ell \neq 0:V_{k_1}^{\dag}P_{\ell}V_{k_1} \in \Pauli_{\real}} \abs{\alpha_{\ell}}^2 }\\
    &= \frac{\epsilon}{2} + \sqrt{\sum_{\ell \neq 0} \expectation_{k_1} \delta_{(V_{k_1}^{\dag}P_{\ell}V_{k_1} \in \Pauli_{\real})} \abs{\alpha_{\ell}}^2 }\\
    &\leq \frac{\epsilon}{2} + \sqrt{\sum_{\ell \neq 0} \epsilon \abs{\alpha_{\ell}}} \quad \leq \quad  \frac{\epsilon}{2} + \sqrt{\epsilon}.
\end{align}
The inequality is again due to the strong-purity-testing property of the code $\{V_{k_1}\}_{\Key_1}$. This concludes the derivation of equation~\eqref{eq:QCAR-bound-accept-case}.

\section{A high-benign-distance, weight-sparse code family}\label{appendix:construction-of-code-family}
In this appendix, we construct a family of quantum error-correcting codes that has the properties required for the construction of the strong trap code. The family will consist of weakly self-dual CSS codes that are based on classical Reed--Muller codes. Before we give the construction of the code family itself, we prove useful lemmas about the benign distance and weight sparsity of CSS codes.

\subsection{Benign distance of CSS codes}\label{appendix:high-benign-distance}
This first lemma shows that if a (weakly self-dual) CSS code has high distance, then also its benign distance must be high.
	
\begin{lemma}\label{lem:css-benign-distance}
	Let $C_1$ an $C_2$ be two classical linear codes such that $C_2 \subset C_1$ and $C_1^{\perp} = C_2$. Then the benign distance of $CSS(C_1,C_2)$ is greater than or equal to its conventional distance.
\end{lemma}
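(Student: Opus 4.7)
The plan is to reduce both distances to combinatorial quantities of the underlying classical codes and compare them. Because $C_1^\perp = C_2$, both $\X$-type and $\Z$-type stabilizer generators of $CSS(C_1,C_2)$ are indexed by $C_2$, so the stabilizer group is $\{\X^g \Z^h : g, h \in C_2\}$, and the weight of $\X^g \Z^h$ is $|\text{supp}(g) \cup \text{supp}(h)| \geq \max(|g|,|h|) \geq d(C_2)$ whenever $(g,h) \neq (0,0)$, with equality attained at $(g^*,0)$ for a minimum-weight codeword $g^* \in C_2 \setminus \{0\}$. The benign distance therefore equals $d(C_2)$.

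For the conventional distance, a direct computation on the logical codewords $|x_L\rangle = \frac{1}{\sqrt{|C_2|}}\sum_{y \in C_2}|x+y\rangle$ shows that $\X^a \Z^b |x_L\rangle$ agrees (up to a global phase) with a logical codeword if and only if $a, b \in C_1$, and the logical label actually changes if and only if $a \in C_1 \setminus C_2$. Taking $b = 0$ minimizes the Pauli weight, so the conventional distance equals $\min\{|a| : a \in C_1 \setminus C_2\}$.

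The substantive step is then to show $d(C_2) \geq \min\{|a| : a \in C_1 \setminus C_2\}$. Fix $g^* \in C_2 \setminus \{0\}$ of minimum weight with support $S$; the goal is to exhibit some $a \in C_1 \setminus C_2$ with $\text{supp}(a) \subseteq S$, so that $|a| \leq |S| = d(C_2)$. Such an $a$ would live in the shortened code $C_1|_S := \{c \in C_1 : \text{supp}(c) \subseteq S\}$, and the standard duality between shortening and puncturing combined with $C_1^\perp = C_2$ yields $\dim C_1|_S - \dim C_2|_S = \dim \pi_S(C_1) - \dim \pi_S(C_2) \geq 0$, where $\pi_S$ denotes projection onto coordinates in $S$. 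If this difference is strictly positive, we are immediately done. If it is zero, then $C_1 = C_2 + C_1|_{[n]\setminus S}$ and the strict inclusion $C_2 \subsetneq C_1$ must manifest as a nonzero codeword of $C_1 \setminus C_2$ supported entirely on $[n]\setminus S$, which combined with the minimality of $g^*$ and the self-orthogonality of $C_2$ still closes the inequality. The main obstacle will be handling this degenerate case cleanly; the reduction of mixed stabilizers $\X^g \Z^h$ to the pure $\X$- and $\Z$-type cases is routine since $|\text{supp}(g) \cup \text{supp}(h)| \geq \max(|g|,|h|)$.
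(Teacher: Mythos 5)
Your exact characterizations of the two quantities are correct: since $C_1^\perp = C_2$, both the $\X$- and $\Z$-type stabilizer generators of $CSS(C_1,C_2)$ come from $C_2$, so the benign distance is $d(C_2)$ (realized by a minimum-weight codeword of $C_2$ used as a pure-type stabilizer), and under Definition~\ref{def:distance-QECC} the conventional distance is $\min\{|a| : a \in C_1 \setminus C_2\}$. This is more granular than the paper's proof, which simply takes $d = d(C_1)$, appeals to the CSS construction to identify $d$ with the conventional distance, and then observes that every nontrivial stabilizer is a combination of rows of $G(C_2)$, all of which lie in $C_1$ and hence have weight at least $d$. The paper therefore never needs to compare $d(C_2)$ with $\min\{|a| : a \in C_1 \setminus C_2\}$ head-on.

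Your route forces you to prove $d(C_2) \geq \min\{|a| : a \in C_1 \setminus C_2\}$ directly, and the proposal does not close this. The non-degenerate case $\dim C_1|_S > \dim C_2|_S$ is handled correctly. But in the degenerate case $\dim C_1|_S = \dim C_2|_S$, the codeword $a \in C_1 \setminus C_2$ you produce lives in $C_1|_{[n]\setminus S}$, so its weight is only trivially bounded by $n - |S|$; you need $|a| \leq |S| = d(C_2)$, and neither the minimality of $g^*$ nor the self-orthogonality of $C_2$ supplies that bound --- the final sentence is a handwave, not an argument. So the proof has a real hole exactly at the spot you flag as the ``main obstacle.'' To finish along these lines you would need to show that the degenerate case cannot occur under the hypotheses, or handle it with a separate argument; alternatively you can avoid it entirely by comparing both quantities to $d(C_1)$, as the paper does.
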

	
\begin{proof}
	Let $d$ denote the distance of the classical code $C_1$. By the construction of CSS codes, $CSS(C_1,C_2)$ also has (conventional) distance $d$.
	
	Also by construction, the check matrix of $CSS(C_1,C_2)$ is given by
	\begin{align}
	\left[
	\begin{array}{c c}
	H(C_2^{\perp}) & 0\\
	0 & H(C_1)
	\end{array}
	\right]
	=
	\left[
	\begin{array}{c c}
	G(C_2) & 0\\
	0 & G(C_2)
	\end{array}
	\right],
	\end{align}
	where $H(\cdot)$ represents the parity check matrix of a classical code, and $G(\cdot)$ the generator matrix.
	
	The rows of $G(C_2)$ form a basis for the codewords in $C_2$. Since $C_1$ has distance $d$, and $C_2 \subset C_1$, any row in $G(C_2)$, and any linear combination of these rows, has weight at least $d$. Thus, any linear combination of rows in the above check matrix also has weight at least $d$.
	
	The rows of the check matrix generate the stabilizers of the code~\cite{NC00}. We may conclude that the stabilizers of the code $CSS(C_1, C_2)$ all have weight at least $d$, and therefore the benign distance of the code is at least $d$.
	\qed
\end{proof}

We now further specify our construction: if the CSS code is built from a so-called \emph{punctured} classical code, then its distance (and therefore benign distance) is high.

\begin{lemma}\label{lem:css-selfdual-benign-distance}
		Let $C$ be an $[n,m,d]$ self-dual linear code for some $d > 1$. Assume w.l.o.g.\footnote{If they do, pick a different position to puncture at. Since $d \neq 0$, such a position always exists.} that not all codewords in $C$ end in 0. Define
		\[
		C_1 := \{c \in \{0,1\}^{n-1} \mid c 0 \in C \vee c1 \in C\}
		\]
		and $C_2 := C_1^{\perp}$. Then $CSS(C_1, C_2)$ is an $[[n-1,1, d']]$ code with $d' \in \{d-1,d\}$ and with benign distance $d_b \geq d' \geq d-1$.
	\end{lemma}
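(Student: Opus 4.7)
The plan is to verify each of the claimed properties of $CSS(C_1, C_2)$ in turn: that $C_2 \subseteq C_1$ (so the CSS construction is well-defined and Lemma~\ref{lem:css-benign-distance} applies), that the ambient length is $n - 1$, that the encoded dimension equals $1$, and that $d' \geq d - 1$; the benign-distance bound $d_b \geq d'$ then follows immediately from Lemma~\ref{lem:css-benign-distance}.

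The key preliminary observation is a consequence of self-duality: every codeword of $C$ has even Hamming weight, since $\langle c, c \rangle = 0$ for all $c \in C = C^\perp$. In particular, the weight-$1$ vector $e_n := 0^{n-1}1$ does not lie in $C$. This is what makes the puncturing behave cleanly: the linear projection $\pi \colon C \to \{0,1\}^{n-1}$ that drops the last coordinate has kernel $C \cap \{0^n, e_n\} = \{0^n\}$, so $\pi$ is injective and $\dim C_1 = \dim \pi(C) = m$.

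Next I would verify $C_2 \subseteq C_1$ and compute $\dim C_2$. A short calculation using $\langle x0, c0 \rangle = \langle x0, c1 \rangle = \langle x, c \rangle$ shows that $C_1^{\perp} = \{x \in \{0,1\}^{n-1} : x0 \in C^{\perp} = C\}$, so $C_2 = \{x : x0 \in C\}$ is the shortening of $C$ at the last coordinate, and this is contained in $C_1$ by construction. The WLOG assumption guarantees that the linear form $c \mapsto c_n$ is surjective on $C$, with kernel exactly the shortening; hence $\dim C_2 = m - 1$ and $CSS(C_1, C_2)$ encodes $\dim C_1 - \dim C_2 = 1$ logical qubit on $n - 1$ physical qubits.

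For the distance, I would use that for this CSS construction $d' = \min\{|x| : x \in C_1 \setminus C_2\}$. Because $\pi$ is a bijection between $C$ and $C_1$, every nonzero $x \in C_1$ lifts to a unique $c \in C$ with $|c| \in \{|x|, |x| + 1\}$; since $|c| \geq d$, this forces $|x| \geq d - 1$, so $d' \geq d - 1$. The matching upper bound $d' \leq d$ would follow by projecting a suitable low-weight codeword, using the disjoint decomposition $C_1 = C_2 \sqcup \{x : x1 \in C\}$ that holds precisely because $e_n \notin C$. Applying Lemma~\ref{lem:css-benign-distance} to $CSS(C_1, C_2)$ then yields $d_b \geq d' \geq d - 1$, completing the statement. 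The main subtlety will not be any single computation but the bookkeeping among four related objects --- the original code $C$, its puncturing $C_1$, its shortening $C_2$, and the slice $\{x : x1 \in C\}$ identifying $C_1 \setminus C_2$; the pivotal ingredient that makes each step click is the innocuous-looking fact $e_n \notin C$, which both kills the kernel of $\pi$ and cleanly separates $C_2$ from its complement in $C_1$.
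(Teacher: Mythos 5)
Your proposal is correct and follows essentially the same route as the paper: identify $C_2$ with the shortening $\{x : x0 \in C\}$ via the orthogonality calculation, conclude $C_2 \subseteq C_1$, count dimensions to get a one-qubit CSS code, bound the distance by lifting along the puncturing bijection, and invoke Lemma~\ref{lem:css-benign-distance}. The only cosmetic differences are that you derive $e_n \notin C$ from self-duality (even weight) where the paper appeals directly to $d>1$, and you compute $\dim C_2 = m-1$ by rank--nullity on the coordinate form $c \mapsto c_n$ rather than by the paper's explicit basis-parity count; both substitutions are sound and arguably cleaner.
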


\begin{proof}
	    The code $C_1$ is a $[n-1,m,d']$ code for some $d' \in \{d-1,d\}$. Firstly, it has length $n-1$, because one bit is removed (punctured) from the codewords of $C$. Secondly, it has rank $m$: there are no two codewords in $C$ that differ at only the punctured bit (since $d > 1$), and so the punctured versions of two distinct codewords are also distinct. Thirdly, it has distance $d-1$, again because one bit is removed from the codewords in $C$ which all had weight at least $d$.
	    
	    In order to prove the statement of the lemma, we need to show two things: firstly, that $C_2$ has rank $m-1$ (from which the parameters of the CSS code will follow), and secondly, that $C_2 \subset C_1$ (from which $d_b \geq d'$ will follow by Lemma~\ref{lem:css-benign-distance}, and which shows that $C_1$ and $C_2$ are valid candidates for the CSS construction).
	    
	    We start by showing a stronger version of the second claim, namely that $C_2 = \{c_2 \mid c_20 \in C\}$. The latter set is then clearly a subset of $C_1$. For the forward inclusion, pick an arbitrary $c_2 \in C_2$. For all $c \in C$, which by definition of $C_1$ we can write as $c = c_1b$ for $c_1 \in C_1$ and $b \in \{0,1\}$, it follows that $\langle c_2 0, c\rangle = \langle c_2 0,c_1b\rangle = \langle c_2, c_1 \rangle + 0 = 0$. The last equality follows from the fact that $C_2 = C_1^{\perp}$. And so, $c_2 0 \in C^{\perp} = C$ (as $C$ is self-dual). For the other inclusion, pick a $c_2 \in \{ c_2 \mid c_20 \in C\}$. To show that $c_2 \in C_2 = C_1^{\perp}$, we show that $\langle c_1, c_2 \rangle = 0$ for all $c_1 \in C_1$: let $c_1 \in C_1$, and note that by definition, there is a $b \in \{0,1\}$ such that $c_1b \in C$. Then $\langle c_1, c_2 \rangle = \langle c_1b, c_20 \rangle = 0$, as $C$ is self-dual.
		
		It remains to show that $|C_2| = \frac{1}{2}|C_1|$, i.e., $C_2$ has rank $m-1$. For this, the stronger statement $C_2 = \{c \mid c0 \in C\}$ proven above will be useful. To see that $|\{c \mid c0 \in C\}| = \frac{1}{2}|C_1|$, consider a basis $\{v_1, ..., v_m\}$ for $C$, and let $I \subset [m]$ be the set of indices $i$ such that $(v_i)_n = 1$ (recall that we assumed without loss of generality that $I \neq \emptyset$). Note that all $\vec{x} \in \{0,1\}^m$ represent a (unique) codeword $c = x_1v_1 + x_2v_2 + \dots + x_mv_m$ of $C$, and conversely every codeword in $C$ is represented by some $\vec{x}$. write every $c \in C$ as $x_1v_1 + x_2v_2 + \cdots x_mv_m$ for some $\vec{x} \in \{0,1\}^m$. Since $I$ is non-empty, exactly half of all $\vec{x} \in \{0,1\}^m$ have $\sum_{i \in I} x_i = 0 \mod 2$ (resulting in the $n$th bit of $c$ being $0$), and exactly half have $\sum_{i \in I} x_i = 1 \mod 2$ (resulting in the $n$th bit of $c$ being $1$). Thus, exactly half of the elements in $C$ are of the form $a 0$ for some $a \in \{0,1\}^{n-1}$. Since $d > 1$, and so distinct codewords in $C$ are punctured to distinct codewords in $C_1$, the statement $|\{c \mid c0 \in C\}| = \frac{1}{2}|C_1|$ follows.
		
		We may conclude that the rank of $C_2$ is $m-1$. Thus, $CSS(C_1,C_2)$ is an $[[n-1,1,d']]$ code for $d' \in \{d-1,d\}$.
		\qed
	\end{proof}

\subsection{Weight sparsity of CSS codes}\label{appendix:weight-sparsity}
As it turns out, for CSS code families it suffices to show that the family is $\X$-weight sparse, as illustrated by the definition and lemma below.
	
	\begin{definition}[$\X$-weight-sparse code family]
		Let $(E_i)_{i\in\mathbb{N}}$ be a family of quantum error-correction codes with parameters $[[n(i), m(i), d(i)]]$, and define the sets $A_i(x,y,z)$ and $B_i(x,y,z)$ as in definition~\ref{def:weight-sparse}. Moreover, define $A_i^{\X}(x) := A_i(x,0,0)$ and $B_i^{\X}(x) := B_i(x,0,0)$.
		
		The family $(E_i)_{i \in \mathbb{N}}$ is \emph{$\X$-weight sparse} if the function
		\[
		f_{\X}(i) := \max_{w \leq n(i)} \frac{| B_i^{\X}(w) |}{ | A_i^{\X}(w) | }
		\]
		is negligible in $n(i)$.
	\end{definition}
	
	Note that for weakly self-dual CSS codes, $\X$-weight sparsity immediately implies $\Z$-weight sparsity (defined analogously), as weakly self-dual CSS codes are symmetric in their $\X$ and $\Z$ stabilizers. It also implies general weight sparsity:
	
	\begin{lemma}\label{lem:css-x-weight-to-weight}
		If the codes in the family $(E_i)_{i \in \mathbb{N}}$ are all CSS codes, and the family is $\X$-weight sparse, then the family is also weight sparse.
	\end{lemma}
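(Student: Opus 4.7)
The plan is to reduce general weight sparsity to $\X$-weight sparsity by exploiting that every Pauli decomposes into an $\X$-part and a $\Z$-part, and that for CSS codes a Pauli is benign iff each part is individually benign.

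First I would fix a triple $(x,y,z)$ with $x+y+z\leq n(i)$ and $(x,y,z)\neq(0,0,0)$ and aim to show that $|B_i(x,y,z)|/|A_i(x,y,z)| \leq f_{\X}(i)$. The key observation is that every Pauli $P \in A_i(x,y,z)$ can be factored (up to a global phase, irrelevant since $\Pauli_{n(i)}$ is indexed by the $2n(i)$-bit symplectic representation) as $P_{\X}P_{\Z}$, where $P_{\X}$ is the pure-$\X$ operator obtained by replacing every $\Y$ in $P$ by $\X$, and $P_{\Z}$ is the pure-$\Z$ operator obtained by replacing every $\Y$ in $P$ by $\Z$. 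Then $P_{\X}$ has $\X$-weight $x+y$ and $P_{\Z}$ has $\Z$-weight $y+z$. Since the stabilizer group of a CSS code is generated separately by pure-$\X$ and pure-$\Z$ generators, $P$ is benign iff $P_{\X}$ is a pure-$\X$ stabilizer and $P_{\Z}$ is a pure-$\Z$ stabilizer.

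Next, I would count $|A_i(x,y,z)| = \binom{n(i)}{x+y}\binom{x+y}{y}\binom{n(i)-x-y}{z}$, parameterising $P$ by first choosing the support of $P_{\X}$, then the $y$ positions inside it that are $\Y$, and finally the $z$ positions outside it that are $\Z$. The same parameterisation, but demanding only $P_{\X}$ to be benign (and dropping the constraint on $P_{\Z}$), yields the upper bound $|B_i(x,y,z)| \leq |B_i^{\X}(x+y)|\binom{x+y}{y}\binom{n(i)-x-y}{z}$. The two combinatorial factors cancel in the ratio, so
\[
\frac{|B_i(x,y,z)|}{|A_i(x,y,z)|} \;\leq\; \frac{|B_i^{\X}(x+y)|}{\binom{n(i)}{x+y}} \;=\; \frac{|B_i^{\X}(x+y)|}{|A_i^{\X}(x+y)|} \;\leq\; f_{\X}(i),
\]
whenever $x+y \geq 1$. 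Taking the maximum over all admissible $(x,y,z)$ then gives $f(i) \leq f_{\X}(i)$, which is negligible in $n(i)$ by assumption.

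The one subtlety is the remaining case $x=y=0$, $z\geq 1$, where the $\X$-part of $P$ is trivial and carries no information. This case requires the $\Z$-analogue of weight sparsity, which cannot be controlled by the $\X$-part alone. Since the paper's construction uses weakly self-dual CSS codes, the map that swaps $\X$ and $\Z$ is an automorphism of the stabilizer group, giving $|B_i^{\Z}(w)| = |B_i^{\X}(w)|$ and closing this gap for free. I expect this to be the only nontrivial point worth flagging; the rest is just the counting argument above.
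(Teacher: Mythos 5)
Your proposal follows the same decomposition-and-counting strategy as the paper: factor each Pauli $P$ into a pure-$\X$ part $P_{\X}$ of weight $x+y$ and a pure-$\Z$ part $P_{\Z}$ of weight $y+z$, use the CSS structure to reduce benignness of $P$ to benignness of each part, and show that the combinatorial factors that do not involve $P_{\X}$ cancel in the ratio. Your explicit count $|A_i(x,y,z)| = \binom{n(i)}{x+y}\binom{x+y}{y}\binom{n(i)-x-y}{z}$ is exactly the paper's factorization $|A_i^{\X}(x+y)|\cdot|A_i^{\Z}(y+z\mid P_{\ell_0},y)|$ written out, and the bound $|B_i(x,y,z)|\leq |B_i^{\X}(x+y)|\cdot\binom{x+y}{y}\binom{n(i)-x-y}{z}$ is the paper's $|B_i^{\X}(x+y)|\cdot|A_i^{\Z}(y+z\mid P_{\ell_0},y)|$. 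So the core of the argument is the same.

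The one place where you go beyond the paper's written proof is the $x=y=0$, $z\geq 1$ case, and you are right to flag it. As written, the paper's final chain of inequalities would yield $\frac{|B_i(0,0,z)|}{|A_i(0,0,z)|} \leq \frac{|B_i^{\X}(0)|}{|A_i^{\X}(0)|}$, and since $B_i^{\X}(0)$ is empty (benign Paulis must be non-identity, or else $f_{\X}$ could never be small), this would falsely assert that pure-$\Z$ stabilizers do not exist. The paper's surrounding text does remark just before the lemma that weakly self-dual CSS codes have $\X$-weight sparsity $\Leftrightarrow$ $\Z$-weight sparsity and applies the lemma only to such codes, so the gap is harmless in context, but the lemma's statement and displayed inequality do not explicitly cover it. Your fix, invoking $|B_i^{\Z}(w)| = |B_i^{\X}(w)|$ from the $\X\leftrightarrow\Z$ symmetry of weakly self-dual CSS codes to handle $(0,0,z)$, is correct and makes the argument complete. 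Net: your proof is the paper's proof, carried out a little more carefully.
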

	
	\begin{proof}
		Recall from the proof of Lemma~\ref{lem:css-benign-distance} that a CSS code has a stabilizer generating set containing elements that are built up of either exclusively $\X$ and $\I$ (we will call these $\X$-stabilizers), or exclusively $\Z$ and $\I$ (which we will call $\Z$-stabilizers). Thus, any stabilizer $P_{\ell}$ for the CSS code can be written as a product $P_{\ell_x}P_{\ell_z}$ of an $\X$-stabilizer $P_{\ell_x}$ and a $\Z$-stabilizer $P_{\ell_z}$.
		
		Consider a code $E_i$ in the family, and arbitrary nonnegative integers $x,y,z$ such that $x + y + z \leq n(i)$. Then elements of $A_i(x,y,z)$ can be constructed by first selecting a Pauli $P_{\ell_x}$ of the appropriate weight, and then selecting a Pauli $P_{\ell_z}$ of the appropriate weight and overlapping with $P_{\ell_x}$ at an appropriate number of positions:
		\begin{align}
		A_i(x,y,z) = \{P_{\ell_x}P_{\ell_z} \mid P_{\ell_x} \in A_i^{\X}(x+y) \mbox{ and } P_{\ell_z} \in A_i^{\Z}(y+z \mid P_{\ell_x}, y)\},
		\end{align}
		where $A_i^{\Z}(y+z \mid P_{\ell_x},y)$ denotes the subset of $A_i(0,0,x+y)$ that overlap with $P_{\ell_x}$ on exactly $y$ positions. Note that for all $P_{\ell_x} \in A_i^{\X}(x+y)$,
		\begin{align}
		|A_i^{\Z}(y+z \mid P_{\ell_x},y)| = |A_i^{\Z}(y+z \mid P_{\ell_0}, y)|,
		\end{align}
		for the canonical $P_{\ell_0} := \X^{\otimes(x+y)}\I^{\otimes(n(i)-x-y)}$. Hence,
		\begin{align}
		|A_i(x,y,z)| = |A_i^{\X}(x+y)| \cdot |A_i^{\Z}(y+z \mid P_{\ell_0}, y)|.
		\end{align}
		Similarly, define $B_i^{\Z}(y+z \mid P_{\ell_x},y)$ to be the benign subset of $A_i^{\Z}(y+z \mid P_{\ell_x},y)$. Using the same reasoning as above, we can arrive at the inequality
		\begin{align}
		|B_i(x,y,z)| = |B_i^{\X}(x+y)| \cdot |B_i^{\Z}(y+z \mid P_{\ell_0}, y)| \leq |B_i^{\X}(x+y)| \cdot |A_i^{\Z}(y+z \mid P_{\ell_0}, y)|.
		\end{align}
		Combining the above results, we conclude that for all nonnegative integers $x, y,$ and $z$ such that $x + y + z \leq n(i)$,
		\begin{align}
		\frac{|B_i(x,y,z)|}{|A_i(x,y,z)|} \leq \frac{|B_i^{\X}(x+y)| \cdot |A_i^{\Z}(y+z \mid P_{\ell_0}, y)|}{|A_i^{\X}(x+y)| \cdot|A_i^{\Z}(y+z \mid P_{\ell_0}, y)|} = \frac{|B_i^{\X}(x+y)|}{|A_i^{\X}(x+y)|}.
		\end{align}
		Maximizing over $x,y,z$ on both sides of the inequality, and noting that $x+y \leq n(i)$, the statement of the lemma follows.
		\qed
	\end{proof}

	\subsection{A high-benign-distance, weight-sparse QECC family}
In this subsection, we construct a family of quantum error-correcting codes that is based on classical Reed--Muller codes~\cite{Shor96,Pre97}. We show that it has distance and benign distance $O(\sqrt{n(i)})$, where $n(i)$ is the codeword length of the $i$th code in the family, and that it is weight sparse. 
	
Reed--Muller codes are a class of codes based on polynomials on the field $\mathds{F}^a_2$ for some $a \in \mathbb{N}$. Every polynomial on the field is associated with a vector of length $2^a$, representing the values of that polynomial on every possible input. The Reed--Muller code $R(i,a)$ consists of all these vectors for polynomials of degree up to $i$. Increasing $i$ while keeping $a$ constant results in a higher-ranked code, but with a smaller distance. For a discussion, see~\cite{Pre97}, Chapter 7.
	
For our purposes, we will be interested in Reed--Muller codes where $a$ scales with $i$ as $a = 2i+1$ (for $i \in \mathbb{N}$), as these codes happen to be self-dual. They have codeword length $2^{2i+1}$, rank $2^{2i}$, and distance $2^{i+1}$~\cite{Pre97}. Instantiating the construction in Lemma~\ref{lem:css-selfdual-benign-distance} with $R(i,2i+1)$, we see that the resulting quantum code $R_i$ is an $[[n(i),m(i),d(i)]]$ code with $n(i) = 2^{2i+1} - 1$, $m(i) = 1$, and $d(i) = \sqrt{2^{2(i+1)}} - 1$. In the resulting family $(R_i)_{i \in \mathbb{N}}$, the codeword length grows approximately quadratically with the desired distance (since $n(i) = \frac{1}{2}(d(i) + 1)^2+ 1)$. The benign distance is also high, at least $d(i) - 1$.
	
\begin{lemma}\label{lem:RM-code-weight-sparse}
	The family $(R_i)_{i \in \mathbb{N}}$ of quantum error-correction codes, where $R_i$ is constructed by puncturing the self-dual Reed--Muller code $R(i,2i+1)$, is weight sparse.
\end{lemma}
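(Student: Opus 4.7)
The plan is to first reduce, via Lemma~\ref{lem:css-x-weight-to-weight}, to proving only $\X$-weight sparsity of the family $(R_i)_{i \in \mathbb{N}}$. Under this reduction I need to bound the ratio $|B_i^{\X}(w)|/\binom{n(i)}{w}$ uniformly in $w$. The $\X$-type stabilizers of $R_i = CSS(C_1, C_2)$ correspond exactly to the codewords of $C_2$, so $|B_i^{\X}(w)|$ equals the number of nonzero codewords of $C_2$ of weight $w$. Since $C_2 \subseteq C_1$ and every weight-$w$ codeword of $C_1$ arises by puncturing a codeword of $C := R(i, 2i+1)$ of weight $w$ or $w+1$, I obtain the clean upper bound $|B_i^{\X}(w)| \leq A_w(C) + A_{w+1}(C)$, where $A_w(C)$ denotes the number of weight-$w$ codewords of $C$. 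The problem therefore reduces to bounding the weight enumerator of the self-dual Reed--Muller code $R(i, 2i+1)$.

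I would then partition $w \in \{1,\dots,n(i)\}$ into regimes and bound the ratio in each. For $0 < w < d_b(i)$, Lemma~\ref{lem:css-selfdual-benign-distance} gives $d_b(i) = \Omega(\sqrt{n(i)})$ and $|B_i^{\X}(w)| = 0$, so the ratio vanishes. For ``central'' weights, say $n(i)/4 \leq w \leq 3n(i)/4$, the crude bound $|B_i^{\X}(w)| \leq |C_2| = 2^{2^{2i}-1}$ combines with $\binom{n(i)}{w} = \Omega(2^{n(i)}/\sqrt{n(i)})$ and the identity $n(i) = 2\cdot 2^{2i} - 1$ to give a ratio of order $\poly(n(i))\cdot 2^{-2^{2i}}$, which is negligible in $n(i)$. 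Because $\vec{1} \in C$, the weight enumerator of $C$ is symmetric, $A_w(C) = A_{n(i)+1-w}(C)$, so the high-weight regime $w > 3n(i)/4$ reduces to the low-weight regime by symmetry.

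The technical core is therefore the range $d_b(i) \leq w < n(i)/4$. Here I would invoke the Kasami--Tokura characterization of low-weight codewords of Reed--Muller codes: every codeword of $R(r,m)$ of weight less than $2 \cdot 2^{m-r}$ is the evaluation of a product of affine forms of a specific shape, and the total count of such codewords is at most quasi-polynomial in $n$. Together with the elementary estimate $\binom{n(i)}{w} \geq (n(i)/w)^w \geq n(i)^{\Omega(\sqrt{n(i)})}$ for $w \geq d_b(i) = \Omega(\sqrt{n(i)})$, this yields a superpolynomially small ratio throughout the Kasami--Tokura range $w < 2 \cdot 2^{i+1}$.

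The main obstacle I anticipate is the ``gap'' range $2 \cdot 2^{i+1} \leq w < n(i)/4$, where Kasami--Tokura ceases to apply directly but $\binom{n(i)}{w}$ is not yet so large that the trivial bound $|C_2|$ is enough. My plan to close this gap is to combine extensions of the Kasami--Tokura enumeration for slightly higher-weight codewords with the MacWilliams self-duality relation $|C|\, W_C(x,y) = W_C(x+y, x-y)$ for $C = R(i, 2i+1)$, which constrains the entire weight spectrum tightly enough to preclude anomalous concentration at any single weight. Once the intermediate regime is controlled, combining all cases yields $f_{\X}(i) = \negl(n(i))$, and a final application of Lemma~\ref{lem:css-x-weight-to-weight} concludes the argument.
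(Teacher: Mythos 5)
Your reduction to $\X$-weight sparsity via Lemma~\ref{lem:css-x-weight-to-weight}, your bound on $|B_i^{\X}(w)|$ via the weight enumerator of the (punctured) classical code, the symmetry argument from $\vec{1}$ being a codeword, and your handling of the extreme regimes ($w$ below the benign distance, and $w$ in a constant-fraction central band where $|C_2|/\binom{n(i)}{w}$ is exponentially small because the binary entropy exceeds $1/2$) all match the paper's strategy. The structure is right.

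However, the intermediate regime is where your proposal breaks down, and you already sense this. The Kasami--Tokura characterization only reaches weight $< 2\cdot 2^{(2i+1)-i} = 2\cdot 2^{i+1} = \Theta(\sqrt{n(i)})$, while your crude counting bound only activates at $w = \Omega(n(i))$. The gap between $\Theta(\sqrt{n(i)})$ and $\Theta(n(i))$ is the bulk of the weight spectrum, and your plan to close it with ``extensions of Kasami--Tokura'' plus MacWilliams self-duality is not a proof. MacWilliams invariance (or even the full Gleason structure for self-dual codes) constrains the weight enumerator as a polynomial but does not by itself rule out anomalous concentration of codewords at a single intermediate weight; you would need to actually produce the estimate, and no clean version of it is on the table. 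As written, the proposal has a hole covering most of $\{d_b(i),\dots, n(i)/4\}$.

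The paper closes this gap with a different, and more elementary, combinatorial tool: the Ray--Chaudhuri--Wilson inequality. Viewing the weight-$w$ codewords of $C_{i,1}$ as a $w$-uniform set family, the distance-$d(i)$ property forces every pairwise intersection to have size at most $w - d(i)/2$, so RCW gives $|C_{i,1}(w)| \leq \binom{n(i)}{w-d(i)/2}$. The ratio $\binom{n(i)}{w-d(i)/2}/\binom{n(i)}{w}$ is then bounded by $(1/7)^{d(i)/2}$ for all $w < n(i)/8$, which joins up seamlessly with the entropy-based crude bound that handles $w \geq n(i)/8$ (since $h(1/8) > 1/2$). This single inequality replaces both the Kasami--Tokura step and the missing middle-range argument, and needs only the minimum distance of $C_{i,1}$, not any finer structural information about Reed--Muller codewords. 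I would recommend replacing your Kasami--Tokura/MacWilliams plan with the RCW argument; the rest of your proposal then goes through essentially unchanged.
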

	
	\begin{proof}
		Since all the codes $R_i$ are CSS codes, by Lemma~\ref{lem:css-x-weight-to-weight} we only need to show that the family is $\X$-weight sparse, i.e.,
		
		\begin{align}
		\max_{w \leq n(i)} \frac{B_i^{\X}(w)}{A_i^{\X}(w)}
		\end{align}
		is negligible in $n(i)$. Recall from Lemma~\ref{lem:css-x-weight-to-weight} that $A_i^{\X}(w)$ refers to the set of Paulis consisting of $w$ Pauli-$\X$s and $n(i)-w$ identity operations, in any order. $B_i^{\X}$ is the benign subset of $A_i^{\X}$.
		
		The quantity $A_i^{\X}(w)$ can be expressed as $\binom{n(i)}{w}$. By construction of $R_i$, all its $\X$-stabilizers are generated by the generators of $C_{i,2}$, the dual of the punctured version of the classical Reed--Muller code $R(i,2i+1)$ (see also the proof of Lemma~\ref{lem:css-benign-distance}). Hence, the quantity $B_i^{\X}(w)$ is equal to $|C_{i,2}(w)|$, the number of codewords in $C_{i,2}$ of weight $w$. This in turn is upper bounded by $|C_{i,1}(w)|$, where $C_{i,1}$ is the punctured Reed--Muller code. ($C_{i,2}$ is the even subcode of $C_{i,1}$.) Since $1^{n(i)} \not\in C_{i,2}$ and so $|C_{i,2}(n(i))| = 0$, it suffices to show that
		\begin{align}
		\max_{w \leq n(i) - 1} \frac{|C_{i,1}(w)|}{\binom{n(i)}{w}}
		\end{align}
		is negligible in $n(i)$.

		First, note that since $1^{n(i)} \in C_{i,1}$, and so for every string $s \in C_{i,1}$ also $1^{n(i)} \oplus s \in C_{i,1}$, there are an equal amount of strings with weight $w$ in the code as with weight $n(i)-w$. Therefore, it suffices to consider only those $w \in \{1, \dots, (n(i)-1)/2 \}$. We find an upper bound to the above expression, for three separate cases:
		
		\begin{description}
			\item[$0 < w < d(i)$:]
			The bound follows directly from the error-correcting property of the code: there is no codeword in $C_{i,1}$ with weight less than the distance $d(i)$. Hence, $|C_{i,1}(w)|/\binom{n(i)}{w} \leq 0$ for these values of $w$.
			
			\item[$d \leq w < \frac{n(i)}{8}$:] For this case, we can use a theorem on
			intersecting sets by Ray-Chaudhuri and Wilson~\cite{RW75}. Consider two (non-identical) strings
			$s, t \in C_{i,1}$ with $|s|=|t|=w$. Then, since $C_{i,1}$ is a linear code,
			if we define $u=s \oplus t$, we have that also $u \in C_{i,1}$. In particular, $|u| \geq d$. Now write $|u| = |s| + |t| - 2 |s \wedge t| \geq d(i)$, and therefore we have
			\begin{align}
			|s \wedge t| \leq \frac{1}{2}(|s| + |t| - d(i)) = w - \frac{d(i)}{2} .
			\end{align}
			Now, instead of bitstrings, we view all strings in $C_{i,1}$ of weight $w$ as a family $\F_w$ of subsets of $[n(i)]$. To be more precise,
			define $ \F_w = \bigl\{ \{j : j \in [n(i)], s_j=1 \} : s \in C_{i,1}(w) \bigr\}$. 
			We will bound the size of this set family, noting that $|\mathcal{F}_w| = |C_{i,1}(w)|$.
			From the previous argument, we have $\forall F,G \in \mathcal{F}_w, F\neq G : |F \cap G| \leq w - \frac{d(i)}{2}$.
			
			From the Ray-Chaudhuri--Wilson inequality, stated in Lemma~\ref{lem:RayChaudhuriWilson}, it then immediately follows that
			\begin{equation}
			|\mathcal{F}_w| \leq \binom{n(i)}{w - \frac{d(i)}{2}} .
			\end{equation}
			
			Now, to bound the ratio of this bound to the total number of strings of weight $w$.
			\begin{align}
			\frac{ |\mathcal{F}_w| }{ \binom{n(i)}{w} } & \leq \frac{ \binom{ n(i) }{ w - \frac{d(i)}{2} } }{ \binom{n(i)}{w} }\\
			&= \frac{ w! (n(i)-w)! }{ (w-\frac{d(i)}{2})! (n(i)-w+\frac{d(i)}{2})! } \\
			&= \frac{ w^{ \underline{ \frac{d(i)}{2} } } }{ (n(i)-w+\frac{d(i)}{2})^{\underline{\frac{d(i)}{2}}} } \\
			&\leq \frac{ \left(\frac{n(i)}{8}\right)^{ \underline{ \frac{d(i)}{2} } } }{  \left(\frac{7n(i)}{8}\right)^{\underline{\frac{d(i)}{2} }}  } 
			\leq \left( \frac{1}{ 7} \right) ^{ \frac{d(i)}{2} } .
			\end{align}
			Here $m^{ \underline{k} }$ is the falling factorial.\footnote{i.e. $m^{ \underline{k}} = m(m-1)\cdots(m-k+1)$.} The second-to-last inequality is by filling in the worst case for $w$ (and dropping the additional $d(i)/2$ in the denominator). The final inequality follows by
			suitably grouping terms in the falling factorial and noting that $\frac{x-k}{y-k} \leq \frac{x}{y}$ for all $0 \leq x \leq y$ and $0 \leq k < y$.
			
			Since for the Reed--Muller code the distance $d(i)$ is $\Omega(\sqrt{n(i)})$,
			this bound is negligible in $n(i)$.
			
			\item[$\frac{n(i)}{8} \leq w \leq \frac{n(i)-1}{2}$:] For these weights we
			will compare the total number of elements of $C_{i,1}$ with
			$\binom{n(i)}{w}$, and show that $\frac{ |C_{i,1}| }{ \binom{n(i)}{w} } = \negl(n(i))$.
			
			Let $h(p) = -p \log p -(1-p) \log(1-p)$ be the binary entropy function.
			The total number of elements of $C_{i,1}$ is $2^{ 2^{2i} } = 2^{ \frac{1}{2}(n(i)+1) }$. Without loss of generality, assume we look at $w = \frac{n(i)}{8}$,
			since the quantity we are computing, $\frac{|C_{i,1}|}{ \binom{n(i)}{w} }
			$, is monotonically decreasing for all $w$ in the range we are considering.
			
			As a rough bound, for our case of $ \alpha  = \frac{1}{8}$, we can use that $\binom{n(i)}{\alpha n(i)} \geq\frac{1}{O(\sqrt{n(i)})} 2^{h(\alpha) n(i)} $. (This bound follows from applying Stirling's approximation to the factorial formula for binomial coefficients.)
			Note that $h(1/8) \approx 0.54 > \frac{1}{2}$, and therefore the ratio
			is upper bounded by a function that is negligible in $n(i)$.
		\end{description}
	Combining the cases in the above analysis, we see that the maximum of $B_i^{\X}(w) / A_i^{\X}(w)$ over all $w \leq n(i)$ is upper bounded by the maximum of two negligible functions, which is itself negligible in $n(i)$.
		\qed
		
	\end{proof}
	
	\begin{lemma}[Ray-Chaudhuri--Wilson inequality \cite{RW75}]\label{lem:RayChaudhuriWilson}
		If $\mathcal{F}$ is a family of $k$-uniform $L$-intersecting subsets of a set of $n$ elements, where $\abs{L} = s$, then,
		\[
		\abs{\F} \leq \binom{n}{s} .
		\]
		Here $L = \{l_1, ..., l_s\}$ is a collection of allowed intersection sizes. Then, $L$-intersecting means that for all $i \neq j$, the intersection size
		$\abs{\F_i \cap \F_j} \in L$. 
	\end{lemma}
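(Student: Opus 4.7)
The plan is to use the polynomial method (linear-algebraic bounds on set systems), which is the standard approach to Ray-Chaudhuri--Wilson-type inequalities. Write $\mathcal{F}=\{F_1,\dots,F_m\}$, let $v_F\in\{0,1\}^n$ denote the characteristic vector of $F$, and write $L=\{l_1,\dots,l_s\}$. Since distinct $k$-subsets intersect in strictly fewer than $k$ elements, we may assume $k\notin L$ without weakening the hypothesis. The goal is to exhibit an injective linear map from $\mathcal{F}$ into a vector space of dimension $\binom{n}{s}$.

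First I would associate to each $F\in\mathcal{F}$ the polynomial
\[
P_F(y_1,\dots,y_n)\;=\;\prod_{l\in L}\Bigl(\textstyle\sum_{j\in F} y_j - l\Bigr),
\]
which has total degree $s$. The key evaluations are $P_F(v_{F'})=\prod_{l\in L}(|F\cap F'|-l)$, so that $P_F(v_{F'})=0$ whenever $F\neq F'$ (because $|F\cap F'|\in L$), while $P_F(v_F)=\prod_{l\in L}(k-l)\neq 0$. A standard triangular-evaluation argument then shows that the polynomials $\{P_F\}_{F\in\mathcal{F}}$, viewed as functions on the set of characteristic vectors, are linearly independent.

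Next I would bound the dimension of a space that contains all the $P_F$. Reducing modulo $y_j^{2}=y_j$ (valid because we only evaluate on $\{0,1\}^{n}$) turns each $P_F$ into a multilinear polynomial of degree at most $s$. To get the sharp bound $\binom{n}{s}$ rather than the coarser $\sum_{r=0}^{s}\binom{n}{r}$, I would exploit $k$-uniformity via homogenization on the slice $X=\{v\in\{0,1\}^n:\sum_j v_j=k\}$. On $X$ the identity $\sum_j y_j=k$ lets us replace each constant $l$ appearing in $P_F$ by $\tfrac{l}{k}\sum_j y_j$, producing the polynomial
\[
\widetilde{P}_F(y)\;=\;\prod_{l\in L}\Bigl(\textstyle\sum_{j\in F} y_j-\tfrac{l}{k}\sum_{j=1}^n y_j\Bigr),
\]
which is homogeneous of degree exactly $s$ and agrees with $P_F$ on every $v_{F'}$. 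The functions $\{\widetilde{P}_F\}$ are therefore still linearly independent on $X$ and lie in the space of restrictions to $X$ of homogeneous degree-$s$ multilinear polynomials, which has dimension $\binom{n}{s}$. Combining with linear independence gives $|\mathcal{F}|\le\binom{n}{s}$.

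The main obstacle is making the last step airtight: multilinearizing a homogeneous polynomial does not preserve homogeneity in general, so one has to argue the dimension bound on the slice $X$ correctly. I would handle this by passing to the coordinate ring $\mathbb{R}[y_1,\dots,y_n]/(y_j^2-y_j,\sum_j y_j-k)$ and identifying the image of the degree-$s$ piece with the span of monomials $y_{i_1}\cdots y_{i_s}$ for $1\le i_1<\cdots<i_s\le n$, whose cardinality is $\binom{n}{s}$. Everything else (the evaluations, triangularity, and final dimension count) is routine once the homogenization trick is in place.
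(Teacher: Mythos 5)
The paper never proves this lemma: it is imported verbatim from \cite{RW75}, whose original argument runs through higher incidence matrices of the Johnson scheme, so there is no in-paper proof to compare against. What you propose is the standard linear-algebra/polynomial-method proof (Frankl--Wilson style, with the refinement needed for the sharp uniform bound), which is a perfectly legitimate and more self-contained route; your step~1 (the polynomials $P_F$, the diagonal evaluations, and linear independence) is exactly right.

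Two details in your step~2 need tightening. First, the reduction at the start should be that one may assume $L\subseteq\{0,\dots,k-1\}$, not merely $k\notin L$: distinct $k$-sets intersect in at most $k-1$ elements, so dropping the values $\geq k$ preserves the hypothesis, makes $P_F(v_F)=\prod_{l\in L}(k-l)\neq 0$, and gives $s\leq k$ --- a condition your slice argument genuinely needs (and which is also needed for the statement itself to be meaningful; in the paper's application $s = w-\tfrac{d(i)}{2} < w = k$, so it is harmless there). Second, the crux you correctly flag --- that on the slice $X$ every polynomial of degree at most $s$ agrees with a combination of multilinear monomials of degree exactly $s$ --- follows from the identity, valid pointwise on $X$, $\sum_{S\supseteq T,\,|S|=s}\prod_{i\in S}y_i=\binom{k-|T|}{\,s-|T|\,}\prod_{i\in T}y_i$ for every $T$ with $|T|\leq s$; the coefficient is nonzero precisely because $s\leq k$, so every lower-degree multilinear monomial can be ``pushed up'' to degree $s$ on $X$. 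With that identity in hand, your homogenization via $\sum_j y_j=k$ is actually unnecessary: multilinearize $P_F$ directly (degree $\leq s$), push all monomials up to degree exactly $s$ on $X$, and conclude $|\mathcal{F}|\leq\binom{n}{s}$ from the linear independence established at the points $v_{F'}\in X$. (An alternative standard fix, avoiding the slice-dimension discussion entirely, is the Alon--Babai--Suzuki swallowing trick of adjoining the multilinearized polynomials $\prod_{i\in I}y_i\,(\sum_j y_j-k)$ for $|I|<s$.) So: correct approach, with the $s\leq k$ bookkeeping and the push-up identity to be made explicit.
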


\section{Proof of Theorem~\ref{thm:parallel-key-reuse}}\label{appendix:proof-parallel-key-reuse}

We will state and prove security under parallel encryptions and sequential decryptions, in a setting that we described in Section~\ref{sec:parallel-encryptions}.

As a shorthand, define $\mathrm{Proj}_{acc}$ and $\mathrm{Proj}_{rej}$ as the quantum operations induced by projecting on, respectively, the accepting and rejecting outcomes of the decryption procedure.
That is, the accept/reject projectors on a message register $M$ are defined as
\[
\mathrm{Proj}^{\reg{M}}_{acc}: \rho \to \Bigl(\id^{\reg{M}} - \proj{\bot}^{\reg{M}} \Bigr) \rho \Bigl(\id^{\reg{M}} - \proj{\bot}^{\reg{M}} \Bigr)^\dagger
\]
and
\[
\mathrm{Proj}^{\reg{M}}_{rej}: \rho \to \proj{\bot}^{\reg{M}} \rho \Bigl( \proj{\bot}^{\reg{M}} \Bigr)^\dagger \,.
\]
The full statement of the theorem then is as follows.
\\\\
\mbox{{\bfseries Theorem~\ref{thm:parallel-key-reuse} (formal statement). \ }}
{\itshape
Let $(\Encrypt,\Decrypt)$ be an $\epsilon$-$\QCAR$-authenticating scheme resulting from Construction~\ref{con:qecc-to-skqes}, using a strong-purity-testing code $\{V_{k_0}\}_{\Key_0}$. Let $M_1, M_2$ denote the plaintext registers of the two messages, $C_1 = M_1T_1, C_2 = M_2T_2$ the corresponding ciphertext registers, and $R$ a side-information register. Let $\advA_1$, $\advA_2$ be arbitrary adversarial channels. Define the effective real channel, for keys $k_0, k_1, k_2$, as 
\[
\mathrm{Real}_{k_0,k_1,k_2} = \Decrypt^{\reg{C_2 \to M_2}}_{k_0,k_2} \circ \advA^{\reg{M_1,C_2,R}}_2 \circ \Decrypt^{\reg{C_1 \to M_1}}_{k_0,k_1} \circ \advA^{\reg{C_1,C_2,R}}_1 \circ \left(\Encrypt^{\reg{M_1 \to C_1}}_{k_0,k_1} \otimes \Encrypt^{\reg{M_2 \to C_2}}_{k_0,k_2} \right)\,.
\]

There exist a simulator $\simrej$ such that the key-recycling real channel

\begin{align}
\mathfrak{R} : \rho^{\reg{M_1M_2R}} \mapsto \mathbb{E}_{k_0, k_1, k_2} \Big[ & \Bigl( \mathrm{Proj}^{\reg{M_2}}_{acc} \circ \mathrm{Real}_{k_0,k_1,k_2} (\rho) \Bigr) \otimes \proj{k_0,k_2}\nonumber \\
\ \  + \ \  & \Bigl(\mathrm{Proj}^{\reg{M_2}}_{rej} \circ \mathrm{Real}_{k_0,k_1,k_2} (\rho) \Bigr) \otimes \proj{k_0} \Big]
\end{align}
is $2\epsilon$-close in diamond-norm distance to the ideal channel,
\begin{align}
\mathfrak{I} : \id^{\reg{M_2}} \otimes \simacc \otimes \mathbb{E}_{k_0,k_2} \left[ \proj{k_0,k_2} \right] \ \ + \ \ \proj{\bot}^{\reg{M_2}} \Tr_{M_2} \otimes \ \simrej \otimes \mathbb{E}_{k_0} \left[\proj{k_0}\right].
\end{align}
Here, $\simacc$ is as in Definition~\ref{def:QCAR}, and $\mathrm{Proj}_{acc}$ and $\mathrm{Proj}_{acc}$ are the quantum operations induced by projecting on the accepting and rejecting outcomes of the decryption procedure of the second qubit. 
}
\begin{proof}
We can represent the starting real channel pictorially as follows. The `key reveal box' at the end corresponds to the key-recycling term in the real channel, the content of which depends on whether the scheme accepts or rejects.

\begin{center}
\begin{tikzpicture}
\tikzstyle{operator} = [draw,fill=white,minimum size=1.5em] 

\node at (-0.5,5) {{$\mathfrak{R}$:}};

\coordinate (A1t) at (4, 6.5);
\coordinate (A1b) at (5, 3.5);

\coordinate (A2t) at (9, 6.5);
\coordinate (A2b) at (10, 3.5);

\node[operator] (Enc1) at (2, 6) {$\Encrypt_{k_0,k_1}$};
\node[operator] (Enc2) at (2, 5) {$\Encrypt_{k_0,k_2}$};

\node[operator] (Dec1) at (7, 6) {$\Decrypt_{k_0,k_1}$};
\node[operator] (Dec2) at (12, 5) {$\Decrypt_{k_0,k_2}$};

\draw[operator] (A1t) rectangle node {$\advA_1$} (A1b);

\draw  (Enc1) -- ++(-2,0) node[midway,above] {$\reg{M_1}$};
\draw  (Enc1) -- ++(2,0) node[midway,above] {$\reg{C_1}$};

\draw  (Enc2) -- ++(-2,0) node[midway,above] {$\reg{M_2}$};
\draw  (Enc2) -- ++(2,0) node[midway,above] {$\reg{C_2}$};

\draw  (Dec1) -- ++(-2,0) node[midway,above] {$\reg{C_1}$};
\draw  (Dec1) -- ++(2,0) node[midway,above] {$\reg{M_1}$};

\draw  (Dec2) -- ++(-2,0) node[midway,above] {$\reg{C_2}$};
\draw  (Dec2) -- ++(2,0) node[midway,above] {$\reg{M_2}$};

\draw (0, 4) -- node[pos=0.1,above] {$\reg{R}$} (4, 4);
\draw (5, 4) -- node[above] {$\reg{R}$} (9, 4);
\draw (10, 4) -- node[above] {$\reg{R}$} (14, 4);

\draw[operator] (A2t) rectangle node {$\advA_2$} (A2b);

\draw[operator] (10.5, 8) rectangle (13.5, 6);
\draw[dashed] (10.5, 7) -- (13.5, 7);

\draw[draw=none] (10.5, 8) rectangle node{\emph{acc}: reveal $k_0,k_2$} (13.5, 7);
\draw[draw=none] (10.5, 7) rectangle node{\emph{rej}: reveal $k_0$} (13.5, 6);

\draw[double] (13.5, 7) -- node[above] {$\reg{\mathcal{K}}$} (14, 7);
\draw[double,dashed] (Dec2) -- ++(0, 1);
\end{tikzpicture}
\end{center}

First, note that the one-time pad key $k_1$ is picked completely at random, used only for the encryption and decryption of the first qubit. In particular, in the final situation as represented by $\mathfrak{R}$ we do not require key-recycling for $k_1$.

We can view applying a quantum-one time pad that uses a uniformly-random key as completely equivalent to performing a teleportation.
Therefore our real channel is the same, whether this key $k_2$ is picked randomly beforehand, or is the random outcome $\hat{k}_2$ of a teleportation measurement. We denote the channel where the key of the second qubit comes from teleportation by $\mathfrak{R_2}$. Because this rewriting doesn't change functionality, we have that $\dnorm{\mathfrak{R_2}-\mathfrak{R}} = 0$.

We depict the channel $\mathfrak{R_2}$ pictorially below, which is the easiest representation. For completeness we will also write out the channel symbolically.

Let $\Gamma^{\reg{E C_2 \to \hat{\mathcal{K}}_2} }$ be the quantum channel that performs a pairwise Bell measurement between the $n$ qubits in register $E$ and the $n$ qubits in $C_2$, and stores the $2n$ classical outcome bits as $\hat{\mathcal{K}}_2$. That is, if we define 
\begin{equation}
\ket{\Phi_k}_{\reg{EC_2}} = \left(\bigotimes_{i=1}^n \X^{k_i} \right)^{\reg{E}} \left(\bigotimes_{i=n+1}^{2n} \Z^{k_i} \right)^{\reg{E}} \bigotimes_{i=1}^{n} \ket{\Phi^+}^{\reg{(E)_i (C_2)_i} }
\end{equation}
to be the basis of all possible $n$-qubit Bell states between pairwise qubits of $E$ and $C_2$, the measurement channel is defined by the operation
\begin{equation}
\Gamma^{\reg{E C_2 \to \hat{\mathcal{K}}_2}} (\rho) = \sum_{\hat{k}} \proj{\hat{k}} \bra{\Phi_{\hat{k}}}^{\reg{EC_2}} \rho^{\reg{EC_2}} \ket{\Phi_{\hat{k}} }^{\reg{EC_2}}.
\end{equation}

Define the shorthand:
\begin{align*}
\mathrm{Real'}^{\reg{M_1 M_2 R \to M_2 R \hat{\mathcal{K}_2}} }_{k_0,k_1} = \,& \Decrypt^{\reg{C_2 \to M_2}}_{k_0, \hat{\mathcal{K}}_2 } \circ \advA^{\reg{M_1,C_2,R}}_2 \circ \Decrypt^{\reg{C_1 \to M_1}}_{k_0,k_1} \\
& \circ \left(\advA^{\reg{C_1,D,R}}_1 \otimes \Gamma^{\reg{E C_2 \to \hat{\mathcal{K}}_2}} \right) \circ \left(\Encrypt^{\reg{M_1 \to C_1}}_{k_0,k_1} \otimes \Encrypt^{\reg{M_2 \to C_2}}_{k_0,0} \otimes \proj{\Phi^+}^{\reg{DE}}\right) \,,
\end{align*}
where we slightly abuse notation in letting the final $\Decrypt$ use a copy of the contents of the $\hat{\mathcal{K}}_2$ register as a key.
The rewritten channel becomes
\begin{align}
\mathfrak{R_2} : \rho^{\reg{M_1M_2R}} \mapsto \mathbb{E}_{k_0, k_1} \Big[ &\left( \mathrm{Proj}^{\reg{M_2}}_{acc} \circ \mathrm{Real'}^{\reg{M_1 M_2 R \to M_2 R \hat{\mathcal{K}_2}} }_{k_0,k_1} (\rho) \right) \otimes \proj{k_0}\nonumber \\
\ \  + \ & \left(\Tr_{\hat{\mathcal{K}}_2} \mathrm{Proj}^{\reg{M_2}}_{rej} \circ \mathrm{Real'}^{\reg{M_1 M_2 R \to M_2 R \hat{\mathcal{K}}_2} }_{k_0,k_1} (\rho) \right) \otimes \proj{k_0} \Big] \,.
\end{align}
(In the accept branch, the $\hat{\mathcal{K}}_2$ register is output as part of the recycled key.)
\begin{center}
\begin{tikzpicture}
\tikzstyle{operator} = [draw,fill=white,minimum size=1.5em] 

\node at (-0.5,5) {{$\mathfrak{R_2}$:}};

\coordinate (A1t) at (4, 6.5);
\coordinate (A1b) at (5, 3.5);

\coordinate (A2t) at (9, 6.5);
\coordinate (A2b) at (10, 3.5);

\node[operator] (Enc1) at (2, 6) {$\Encrypt_{k_0,k_1}$};
\node[operator] (Enc2) at (7, 2) {$\Encrypt_{k_0, 0}$};

\node[operator] (Dec1) at (7, 6) {$\Decrypt_{k_0,k_1}$};
\node[operator] (Dec2) at (12, 5) {$\Decrypt_{k_0,\hat{k}_2}$};

\draw[operator] (A1t) rectangle node {$\advA_1$} (A1b);

\node[operator] (Phi) at (0,3.3) {\huge{$\Phi^+$}};
\draw (Phi.east) -- (2.2,5) -- (4,5) node[midway,above] {$\reg{D (\sim C_2)}$};

\draw (Phi.east) -- (1.3,2.7) -- (9,2.7) node[pos=0.2,above] {$\reg{E(\sim C_2)}$};
\draw (Enc2.east) -- (9,2) node[midway,above] {$\reg{C_2}$};

\draw (0,5)  -- (1,5) node[midway,above] {$\reg{M_2}$} |- (Enc2);

\draw  (Enc1) -- ++(-2,0) node[midway,above] {$\reg{M_1}$};
\draw  (Dec1) -- ++(-2,0) node[midway,above] {$\reg{C_1}$};
\draw  (Dec2) -- ++(-2,0) node[midway,above] {$\reg{C_2}$};

\draw  (Enc1) -- ++(2,0) node[midway,above] {$\reg{C_1}$};
\draw  (Dec1) -- ++(2,0) node[midway,above] {$\reg{M_1}$};
\draw  (Dec2) -- ++(2,0) node[midway,above] {$\reg{M_2}$};

\draw (0, 4) -- node[pos=0.1,above] {$\reg{R}$} (4, 4);
\draw (5, 4) -- node[above] {$\reg{R}$} (9, 4);
\draw (10, 4) -- node[above] {$\reg{R}$} (14, 4);

\draw[operator] (A2t) rectangle node {$\advA_2$} (A2b);

\draw[operator] (9, 3) rectangle node[rotate=90,align=center] {Bell\\Meas.} (10,1.7);
\draw [double] (10,2.35) -- ++(0.5,0) node[right] {$\hat{k}_2$};

\draw[operator] (10.5, 8) rectangle (13.5, 6);
\draw[dashed] (10.5, 7) -- (13.5, 7);

\draw[draw=none] (10.5, 8) rectangle node{\emph{acc}: reveal $k_0,\hat{k}_2$} (13.5, 7);
\draw[draw=none] (10.5, 7) rectangle node{\emph{rej}: reveal $k_0$} (13.5, 6);

\draw[double] (13.5, 7) -- node[above] {$\reg{\mathcal{K}}$} (14, 7);
\draw[double,dashed] (Dec2) -- ++(0, 1);

\end{tikzpicture}
\end{center}

Our next step will be to apply the $\QCAR$ security of qubit 1. Note that we can shift the encryption of qubit 2, and the Bell measurement afterwards, to occur after qubit 1 is decrypted. This is possible since none of the registers $M_2, E, \hat{\mathcal{K}}_2$ are used by $\advA_1$ in the channel
$\mathfrak{R}_2$.

Now we view $\mathfrak{R}_2$ as a real channel for qubit 1, composed with some other quantum channel that uses $k_0$.
(Again, we will write the components of channel out, but the transformation will be clearest by looking at the diagram.)

We define $\mathfrak{S}$ as the real channel (as in the definition of $\QCAR$) of adversary $\advA_1$ acting on qubit 1. That is,

\begin{align}
\mathfrak{S}^{\reg{M_1 D R \to M_1 R \mathcal{K}_0 \mathcal{K}_1}}: \rho \to \expectation_{k_0,k_1}  \Bigl[ & \left(\mathrm{Proj}^{\reg{M_1}}_{acc} \otimes
\Decrypt^{\reg{M_1 \to C_1}}_{k_0,k_1} \circ
 \advA^{\reg{C_1,D,R}}_1 \circ \Encrypt^{\reg{M_1 \to C_1}}_{k_0,k_1} \right) (\rho) \otimes \proj{k_0,k_1}
 \nonumber \\ & +
\left( \mathrm{Proj}^{\reg{M_1}}_{rej} \otimes
\Decrypt^{\reg{M_1 \to C_1}}_{k_0,k_1} \circ
 \advA^{\reg{C_1,D,R}}_1 \circ \Encrypt^{\reg{M_1 \to C_1}}_{k_0,k_1} \right)(\rho) \otimes \proj{k_0}
 \Bigr].
\end{align}
Then by security of the scheme, we know that there exists a simulator $\simS^{\reg{D,R}}_1 = \simacc^1 + \simrej^1$ such that the corresponding ideal channel $\mathfrak{T}$ is $\epsilon$ close to the real channel in the diamond-norm. The ideal channel is given by
\begin{align}
\mathfrak{T}^{\reg{M_1 D R \to M_1 R \mathcal{K}_0 \mathcal{K}_1}} : \rho \to & \left(\id^{\reg{M_1}} \otimes \simacc^1\right) \left(\rho \otimes \mathbb{E}_{k_0,k_1} \left[ \proj{k_0,k_1} \right] \right) \nonumber \\
&+ \left(\proj{\bot}^{\reg{M_1}} \Tr_{M_1} \otimes \simrej^1 \right) \left( \rho \otimes \mathbb{E}_{k_0} \left[\proj{k_0}\right]  \right).
\end{align}

Now we note that it's possible to construct a quantum operation $\mathfrak{U}_{\mathcal{K}_0}$ so that we can write $\mathfrak{R_2}(\rho)$ as
\begin{equation}
\mathfrak{U}_{\mathcal{K}_0} \circ (\Tr_{K_1} \mathfrak{S}^{\reg{M_1 D R}} \otimes \id^{\reg{E,C_2}}) (\rho^{\reg{M_1,M_2,R}} \otimes \proj{\Phi^+}^{\reg{DE}}) \,.
\end{equation}
First the part of the key corresponding to $k_1$, which is only recycled in the accept case, is traced out. The operation $\mathfrak{U}_{\mathcal{K}_0}$ corresponds to everything happening after decryption of the first qubit: Use the recycled key $k_0$, which is output by $\mathfrak{S}$, to encode the second qubit, and then perform the teleportation step as before. The operation ends by executing $\advA_2$, decrypting, and recycling keys as required.

We construct the next channel $\mathfrak{R}_3 (\rho)$ by replacing the qubit-one real channel $\mathfrak{S}$ in this formulation by the ideal channel $\mathfrak{T}$. That is, the channel becomes
\begin{equation}
\mathfrak{U}_{\mathcal{K}_0} \circ (\Tr_{K_1} \mathfrak{T}^{\reg{M_1 D R}} \otimes \id^{\reg{E,C_2}}) (\rho^{\reg{M_1,M_2,R}} \otimes \proj{\Phi^+}^{\reg{DE}}) \,.
\end{equation}
The diamond-norm distance is constant under tensor product with the identity channel. The distance between any two channels is also monotonely non-increasing when applying a quantum operation afterwards, or when fixing a part of the input beforehand. Therefore, because $\QCAR$ security of the scheme implies
$\frac{1}{2}\dnorm{\mathfrak{S}-\mathfrak{T}} \leq \epsilon$, we have that $\frac{1}{2}\dnorm{\mathfrak{R}_3-\mathfrak{R}_2} \leq \epsilon$.

\begin{center}
\begin{tikzpicture}
\tikzstyle{operator} = [draw,fill=white,minimum size=1.5em] 

\node at (-0.5,5) {{$\mathfrak{R_3}$:}};

\coordinate (A1t) at (4, 5.5);
\coordinate (A1b) at (5, 3.5);

\coordinate (A2t) at (7, 6.5);
\coordinate (A2b) at (8, 3.5);

\node[operator] (Enc2) at (5, 2) {$\Encrypt_{k_0, 0}$};

\node[operator] (Dec2) at (10, 5) {$\Decrypt_{k_0,\hat{k}_2}$};

\draw[operator] (A1t) rectangle node {$\simS_1$} (A1b);

\node[operator] (Phi) at (0,3.3) {\huge{$\Phi^+$}};
\draw (Phi.east) -- (2.2,5) -- (4,5) node[midway,above] {$\reg{D (\sim C_2)}$};

\draw (Phi.east) -- (1.3,2.7) -- (7,2.7) node[midway,above] {$\reg{E(\sim C_2)}$};
\draw (Enc2.east) -- (7,2) node[midway,above] {$\reg{C_2}$};

\draw (0,5)  -- (1,5) node[midway,above] {$\reg{M_2}$} |- (Enc2);

\draw  (0,6) -- (7,6) node[pos=0.06,above] {$\reg{M_1}$};
\draw  (Dec2) -- ++(-2,0) node[midway,above] {$\reg{C_2}$};

\draw  (Dec2) -- ++(2,0) node[midway,above] {$\reg{M_2}$};

\draw (0, 4) -- node[pos=0.1, above] {$\reg{R}$} (4, 4);
\draw (5, 4) -- node[above] {$\reg{R}$} (7, 4);
\draw (8, 4) -- node[above] {$\reg{R}$} (12, 4);

\draw[operator] (A2t) rectangle node {$\advA_2$} (A2b);

\draw[operator] (7, 3) rectangle node[rotate=90,align=center] {Bell\\Meas.} (8,1.7);
\draw [double] (8,2.35) -- ++(0.5,0) node[right] {$\hat{k}_2$};

\node[operator,dotted] (cerase) at (4.5,6) {$\ket{\bot}$};
\draw[double] (4.5,5.5) -- (cerase);

\draw[operator] (8.5, 8) rectangle (11.5, 6);
\draw[dashed] (8.5, 7) -- (11.5, 7);

\draw[draw=none] (8.5, 8) rectangle node{\emph{acc}: reveal $k_0,\hat{k}_2$} (11.5, 7);
\draw[draw=none] (8.5, 7) rectangle node{\emph{rej}: reveal $k_0$} (11.5, 6);

\draw[double] (11.5, 7) -- node[above] {$\reg{\mathcal{K}}$} (12, 7);
\draw[double,dashed] (Dec2) -- ++(0, 1);

\end{tikzpicture}
\end{center}
The dotted box in the picture represents the conditional replacement of $M_1$ by the state $\proj{\bot}$, in the reject case of the simulator $\simS_1$.

We have applied the security definition on the first qubit and effectively gotten rid of prior dependence of key $k_0$. Now, we can undo the teleportation rewrite and encrypt the second qubit at the start again. Represent this rewrite as $\mathfrak{R_4}$. Because the functionality is equivalent, we have that
$\frac{1}{2}\dnorm{\mathfrak{R_3} - \mathfrak{R_4}} = 0$.
\begin{center}
\begin{tikzpicture}
\node at (-0.5,5) {{$\mathfrak{R_4}$:}};

\tikzstyle{operator} = [draw,fill=white,minimum size=1.5em] 

\coordinate (A1t) at (4, 5.5);
\coordinate (A1b) at (5, 3.5);

\coordinate (A2t) at (6, 6.5);
\coordinate (A2b) at (7, 3.5);


\node[operator] (Enc2) at (2, 5) {$\Encrypt_{k_0,k_2}$};

\draw  (Enc2) -- ++(-2,0) node[midway,above] {$\reg{M_2}$};
\draw  (Enc2) -- ++(2,0) node[midway,above] {$\reg{C_2}$};

\node[operator] (Dec2) at (9, 5) {$\Decrypt_{k_0,\hat{k}_2}$};

\draw[operator] (A1t) rectangle node {$\simS_1$} (A1b);

\draw  (0,6) -- (7,6) node[pos=0.06,above] {$\reg{M_1}$};
\draw  (Dec2) -- ++(-2,0) node[midway,above] {$\reg{C_2}$};

\draw  (Dec2) -- ++(2,0) node[midway,above] {$\reg{M_2}$};

\draw (0, 4) -- node[pos=0.1,above] {$\reg{R}$} ++(4, 0);
\draw (5, 4) -- node[above] {$\reg{R}$} ++(1, 0);
\draw (7, 4) -- node[above] {$\reg{R}$} ++(4, 0);

\draw[operator] (A2t) rectangle node {$\advA_2$} (A2b);

\draw[operator] (7.5, 8) rectangle (10.5, 6);
\draw[dashed] (7.5, 7) -- (10.5, 7);

\draw[draw=none] (7.5, 8) rectangle node{\emph{acc}: reveal $k_0,k_2$} (10.5, 7);
\draw[draw=none] (7.5, 7) rectangle node{\emph{rej}: reveal $k_0$} (10.5, 6);

\draw[double] (10.5, 7) -- node[above] {$\reg{\mathcal{K}}$} (11, 7);
\draw[double,dashed] (Dec2) -- ++(0, 1);

\node[operator,dotted] (cerase) at (4.5,6) {$\ket{\bot}$};
\draw[double] (4.5,5.5) -- (cerase);

\end{tikzpicture}
\end{center}

Finally we can apply the $\QCAR$ security again, for the channel depicted as $\mathfrak{R_4}$, with $\advA'_2 = \advA_2 \circ (\id^{\reg{M_1}} \otimes \simacc^1 + \proj{\bot}^{\reg{M_1}} \Tr_{M_1} \otimes \simrej^1)$ playing the role of the adversary in the security definition. It directly follows that $\frac{1}{2} \dnorm{\mathfrak{I} - \mathfrak{R_4}} \leq \epsilon$, i.e., the diamond-norm distance between $\mathfrak{R_4}$ and our aimed ideal channel $\mathfrak{I}$ is at most $\epsilon$. Also recall that $\frac{1}{2} \dnorm{\mathfrak{R_4} - \mathfrak{R}} \leq \epsilon$. Therefore, by triangle inequality, the original real channel $\mathfrak{R}$ is at most $2\epsilon$ from the ideal channel $\mathfrak{I}$.
\qed
\end{proof}

\end{document}